\newtheorem{theorem}{Theorem}
\newtheorem{lemma}{Lemma}
\theoremstyle{definition}
\theoremstyle{plain}
\newcommand{\lmref}[1]{Lemma \ref{#1}}
\newcommand{\thref}[1]{Theorem \ref{#1}}
\newcommand{\figref}[1]{Fig.~\ref{#1}}
\newcommand{\tabref}[1]{Table \ref{#1}}
\newcommand{\alref}[1]{Algorithm \ref{#1}}
\newcommand{\appref}[1]{Appendix \ref{#1}}
\newcommand{\secref}[1]{Section \ref{#1}}
\newcommand{\diag}[1]{\mathrm{diag}\left(#1\right)}
\newcommand{\logdet}[1]{\log\det\left(#1\right)}
\newcommand{\logtwo}[1]{\log_{2}\left(#1\right)}
\newcommand{\argmin}[1]{\mathop{\arg\min}\limits_{#1}}
\newcommand{\cA}{\mathcal{A}}
\newcommand{\cE}{\mathcal{E}}
\newcommand{\cF}{\mathcal{F}}
\newcommand{\cL}{\mathcal{L}}
\newcommand{\cM}{\mathcal{M}}
\newcommand{\cN}{\mathcal{N}}
\newcommand{\cR}{\mathcal{R}}
\newcommand{\cT}{\mathcal{T}}
\newcommand{\cU}{\mathcal{U}}
\newcommand{\bb}{\mathbf{b}}
\newcommand{\be}{\mathbf{e}}
\newcommand{\bw}{\mathbf{w}}
\newcommand{\bx}{\mathbf{x}}
\newcommand{\by}{\mathbf{y}}
\newcommand{\bz}{\mathbf{z}}
\newcommand{\bA}{\mathbf{A}}
\newcommand{\bB}{\mathbf{B}}
\newcommand{\bC}{\mathbf{C}}
\newcommand{\bD}{\mathbf{D}}
\newcommand{\bE}{\mathbf{E}}
\newcommand{\bF}{\mathbf{F}}
\newcommand{\bG}{\mathbf{G}}
\newcommand{\bH}{\mathbf{H}}
\newcommand{\bI}{\mathbf{I}}
\newcommand{\bM}{\mathbf{M}}
\newcommand{\bP}{\mathbf{P}}
\newcommand{\bQ}{\mathbf{Q}}
\newcommand{\bR}{\mathbf{R}}
\newcommand{\bU}{\mathbf{U}}
\newcommand{\bV}{\mathbf{V}}
\newcommand{\bX}{\mathbf{X}}
\newcommand{\bY}{\mathbf{Y}}
\newcommand{\bZ}{\mathbf{Z}}
\newcommand{\bbC}{\mathbb{C}}
\newcommand{\bbR}{\mathbb{R}}
\newcommand{\bzero}{\mathbf{0}}
\newcommand{\bLambda}{{\boldsymbol\Lambda}}
\newcommand{\bPi}{{\boldsymbol\Pi}}
\newcommand{\bGamma}{{\boldsymbol\Gamma}}
\newcommand{\bxi}{{\boldsymbol\xi}}
\newcommand{\bzeta}{{\boldsymbol\zeta}}
\def\BibTeX{{\rm B\kern-.05em{\sc i\kern-.025em b}\kern-.08em
    T\kern-.1667em\lower.7ex\hbox{E}\kern-.125emX}}
\begin{document}
\title{ Precoder Design for Massive MIMO Downlink with Matrix Manifold Optimization }
\author{Rui~Sun,~\IEEEmembership{Student Member, IEEE}, Chen~Wang,~\IEEEmembership{Student Member, IEEE}, An-An~Lu,~\IEEEmembership{Member, IEEE}, Xiqi~Gao,~\IEEEmembership{Fellow, IEEE}, and Xiang-Gen~Xia,~\IEEEmembership{Fellow, IEEE}
	
	\thanks{This work was supported by the National Key R\&D Program of China under Grant 2018YFB1801103, the Jiangsu Province Basic Research Project under Grant BK20192002, the Fundamental Research Funds for the Central Universities under Grant 2242022k60007, the Key R\&D Plan of Jiangsu Province under Grant BE2022067, the Huawei Cooperation Project, and the National Natural Science Foundation of China under Grants  62394294 and 62371125.	  
		\textit{(Corresponding author: Xiqi Gao.)} }
	
\thanks{Rui Sun, Chen Wang, An-An Lu and Xiqi Gao are with the National Mobile Communications Research Laboratory,
	Southeast University, Nanjing 210096, China and are also with Purple Mountain Laboratories, Nanjing 211111, China (e-mail:
	ruisun@seu.edu.cn; wc@seu.edu.cn; aalu@seu.edu.cn;  xqgao@seu.edu.cn).}
\thanks{Xiang-Gen Xia is with the Department of Electrical and Computer Engineering, University of Delaware, Newark, DE 19716
	USA (e-mail: xxia@ee.udel.edu).}
}


\maketitle

\begin{abstract}
We investigate the weighted sum-rate (WSR) maximization linear precoder design for massive multiple-input multiple-output (MIMO) downlink. We consider a single-cell system with multiple users and propose a unified matrix manifold optimization framework applicable to total power constraint (TPC), per-user power constraint (PUPC) and per-antenna power constraint (PAPC). We prove that the precoders under TPC, PUPC and PAPC are on distinct Riemannian submanifolds, and transform the constrained problems in Euclidean space to unconstrained ones on manifolds. In accordance with this, we derive Riemannian ingredients, including orthogonal projection, Riemannian gradient, Riemannian Hessian, retraction and vector transport, which are needed for precoder design in the matrix manifold framework. Then, Riemannian design methods using Riemannian steepest descent, Riemannian conjugate gradient and Riemannian trust region are provided to design the WSR-maximization precoders under TPC, PUPC or PAPC. Riemannian methods do not involve the inverses of the large dimensional matrices during the iterations, reducing the computational complexities of the algorithms.  Complexity analyses  and performance simulations demonstrate the advantages of the proposed precoder design. 
\end{abstract}

\begin{IEEEkeywords}
Linear precoding, manifold optimization, per-antenna power constraint, per-user power constraint,  total power constraint, weighted sum-rate.
\end{IEEEkeywords}

\section{Introduction}\label{Intro}
\IEEEPARstart{M}{assive} multiple-input multiple-output (MIMO) is one of the key techniques in the fifth generation (5G) wireless networks and will play an important role in future 6G systems with further increased antenna number scale \cite{bjornson_massive_2019,carvalho_non-stationarities_2020}. In massive MIMO systems, the base station (BS) equipped with a large number of antennas can serve a number of user terminals (UTs) on the same time-frequency resource, providing enormous potential capacity gains and high energy efficiency  \cite{Fundamentals,MMIMO}. However, serving many users simultaneously causes serious inter-user interference, which may reduce the throughput. To suppress the interference and increase the system throughput, precoders should be properly designed for massive MIMO downlink (DL) transmission. Specifically, a linear precoder is particularly attractive due to its low complexity \cite{LinearPrecoding1,LinearPrecoding2,LinearPrecoding3}.

There are several criteria for the DL precoder design, including minimum mean square error (MMSE), weighted sum-rate (WSR), quality of service (QoS), signal-to-interference-plus-noise ratio (SINR), signal-to-leakage-plus-noise ratio (SLNR), energy efficiency (EE), etc. \cite{MMSE,WMMSE,QoS,PUPC2,SLNR,EE}. Among these criteria, WSR is of great practical significance and widely considered in massive MIMO systems, as it directly aims at increasing the system throughput, which is one of the main objectives of communication systems. In addition, different power constraints may be imposed on the designs. Generally, power constraints for massive MIMO systems can be classified into three main categories: total power constraint (TPC), per-user power constraint (PUPC) and per-antenna power constraint (PAPC).
Designing precoders under different power constraints usually forms different problems, for which different approaches are proposed in the literature. To be specific, the precoder design problem under TPC is typically formulated as a nonconvex WSR-maximization problem, which is transformed into an MMSE problem iteratively solved by alternating optimization in \cite{WMMSE}. Alternatively, \cite{MM} solves the  WSR-maximization problem in the framework of the minorize-maximize (MM) method by finding a surrogate function of the objective function.   When designing precoders under PUPC, the received SINR of each user is usually investigated with QoS constraints. With the received SINR, the sum-rate maximization problem is formulated and solved by the Lagrange dual function \cite{PUPC1, PUPC2}. Precoders under PAPC are deeply coupled with each other and thus challenging to design. In  \cite{ZFBased}, precoders under PAPC are designed by  minimizing the distance between the precoders under TPC and those under PAPC.  The WSR-maximization precoder design under PAPC using the dual coordinate ascent method (DCAM)  is proposed in \cite{DCAM}, where  the original nonconvex
WSR-maximization problem is approximated as a sequence of convex
MMSE-minimization problems subject to PAPC,  and each convex problem
is solved by the DCAM.

%

In general, the designs of WSR-maximization precoders under the power constraints mentioned above can be formulated as optimization problems with equality constraints. Recently, manifold optimization has been extensively studied and successfully applied to many domains \cite{Manifold1,Manifold2,Manifold4,Manifold5}, showing a great advantage in dealing with smooth objective functions with  challenging equality constraints.  In mathematics, a manifold is a topological space that locally resembles Euclidean space near each point. By revealing the inherent geometric properties of the equality constraints, manifold optimization reformulates the constrained problems  in Euclidean space as unconstrained ones on manifold. By defining the Riemannian ingredients associated with a \textit{Riemannian manifold}, several Riemannian methods are presented for solving the unconstrained problems on manifold. In addition, manifold optimization usually shows promising algorithms resulting from the combination of insight from differential geometry, optimization, and numerical analysis.  To be specific, by revealing that the precoders under TPC, PUPC or PAPC are on different Riemannian submanifolds, we can leverage this insight to transform the constrained problems into unconstrained ones on these submanifolds. Therefore, manifold optimization can provide a potential way for designing optimal WSR-maximization precoders under different power constraints in a unified framework. 

In this paper, we focus on WSR-maximization precoder design for massive MIMO DL transmission and propose  a matrix manifold framework applicable to TPC, PUPC and PAPC. We reveal the geometric properties of the precoders under different power constraints and prove that the precoder sets satisfying TPC, PUPC and PAPC form three different Riemannian submanifolds, respectively, transforming the constrained problems in Euclidean space into unconstrained ones on Riemannian submanifolds. To facilitate a better understanding, we analyze the precoder designs under TPC, PUPC and PAPC in detail. All the ingredients required during the optimizations on Riemannian submanifolds are derived for the three power constraints. Further, we present three Riemannian design methods using Riemannian steepest descent (RSD), Riemannian conjugate gradient (RCG) and Riemannian trust region (RTR), respectively. Without the need to invert the large dimensional matrix during the iterations, Riemannian methods can efficiently save computational costs, which is beneficial in practice. Complexity analysis shows that the method using RCG is computationally efficient. The numerical results confirm the  advantages of the RCG method in convergence speed and WSR performance.

The remainder of the paper is organized as follows. In \secref{Framework}, we introduce the preliminaries in the matrix manifold optimization. In \secref{Precoding}, we first formulate the WSR-maximization precoder design problem in Euclidean space. Then, we transform the constrained problems in Euclidean space under TPC, PUPC and PAPC to the unconstrained ones on Riemannian submanifolds and derive Riemannian ingredients in the matrix manifold framework. To solve the unconstrained problems on the Riemannian submanifolds,  \secref{RiemannianAlgorithms} provides three Riemannian design methods and their complexity analyses. \secref{Results} presents numerical results and discusses the performance of the proposed precoder designs. The conclusion is drawn in \secref{Conclusion}.

\textit{Notations:} Boldface lowercase and uppercase letters represent the column vectors and matrices, respectively. We write conjugate transpose of matrix $ \bA $ as   $ \bA^{H} $ while $ \mathrm{tr}\left(\bA\right) $ and $ \det\left( \bA\right)  $ denote the matrix trace and determinant of $ \bA $, respectively. $ \Re \left\lbrace \bA \right\rbrace $ means the real part of $ \bA $.  Let the mathematical expectation be $ \mathbb{E}\left\{\cdot\right\}$. $ \bI_{M} $ denotes the $ M \times M $ dimensional identity matrix, whose subscript may be omitted for brevity. $ \bzero $ represents the vector or matrix whose elements are all zero. The Hadamard product of $ \bA $ and $ \bB $ is $ \bA \odot \bB $. Let $ \be_j  $ denote the vector with the $j$-th element equals $1$ while the others equal $0$. $ \diag{\bb} $ represents the diagonal matrix with $ \bb $ along its main diagonal and $ \diag{\bA} $ denotes the column vector of the main diagonal of $ \bA $. Similarly, $\bD=\mathrm{blkdiag}\left\{\bA_1,\cdots,\bA_K\right\}$ denotes the block diagonal matrix with $\bA_1,\cdots,\bA_K$ on the diagonal and $\left[\bD\right]_i$ denotes the $i$-th matrix on the diagonal, i.e., $\bA_i$. A mapping $ F $ from manifold $\cN$ to manifold $\cM$ is $F: \cN\rightarrow \cM: \bX \mapsto \bY$ denoted as $F(\bX)=\bY$.  The differential of $ F\left( \bX\right) $ is represented as $ \mathrm{D}F\left( \bX\right) $ while $ \mathrm{D}F\left( \bX\right) \left[ \bxi_{\bX} \right]  $ or $ \mathrm{D}F\left[ \bxi_{\bX} \right]  $ means the directional derivative of $ F $ at $ \bX $ along the tangent vector $ \bxi_{\bX} $. 
\section{Preliminaries for Matrix Manifold Optimization}\label{Framework}
In this section, we introduce some basic notions in the matrix manifold optimization  and list some common notations  in \tabref{Notation}, with $\cM$ representing a Riemannian manifold. For notational simplicity, the superscripts for distinguishing different manifolds can be neglected when no confusion will arise. To avoid overwhelmingly redundant preliminaries, we focus on a coordinate-free analysis and omit charts and differential structures. For a complete introduction to smooth manifolds, we refer to references \cite{Absil2009,Boumal2020, Lee}.
\renewcommand\arraystretch{1.2}
\begin{table*}
	\centering
	\caption{Notations of the ingredients on a Riemannian submanifold $\cM$}
	\label{Notation}
	\begin{tabular}{|l|l|l|l|}
		\hline
		$\bX$ &A point on $\cM$ &$T_{\bX}\cM$ &Tangent space of $\cM$ at $\bX$\\
		\hline
		$\bxi_{\bX}$ & Tangent vector in $T_{\bX}\cM$ &$N_{\bX}\cM$&Normal space of $\cM$ at $\bX$ \\ 
		\hline
		$f(\bX)$&Smooth function defined on $\cM$  &$\mathrm{D}f\left(\bX\right)\left[\cdot\right]$ &Differential operator of $f\left(\bX\right)$ \\
		\hline
		$g_{\bX}^{\cM}\left(\cdot\right)$&Riemannian metric operator on $T_{\bX}\cM$ &$R_{\bX}^{\cM}\left(\cdot\right)$& Retraction operator from $T_{\bX}\cM$  to $\cM$\\
		\hline
		$\mathrm{grad}f(\bX)$&Riemannian gradient of $f(\bX)$  &$\mathrm{Hess}f(\bX)\left[\cdot\right]$&Riemannian Hessian operator of $f(\bX)$\\
		\hline
		$\bPi_{T_{\bX}{\cM}}^{T_{\bY}{\cN}} \left(\bxi_{\bX}\right) $&Orthogonal projection from $T_{\bY}{\cN}$ to $T_{\bX}{\cM}$&$\cT_{\eta_{\bX}}^{\cM}\left(\cdot\right)$ &Vector transport operator to $T_{R_{\bX}^{\cM}\left(\eta_{\bX}\right)}\cM$\\
		\hline
	\end{tabular}
\end{table*}

For a manifold $\cN$, a smooth mapping $\gamma: \mathbb{R} \rightarrow \cN: t \mapsto \gamma(t)$ is termed a curve in $\cN$. Let $\bX$ be a point on $\cN$ and $\cF_{\bX}(\cN)$ denote the set of smooth real-valued functions defined on a neighborhood of $\bX$. A \textit{tangent vector} $\bxi_{\bX}$ to a manifold $\cN$ at a point $\bX$ is a mapping from $\cF_{\bX}(\cN)$ to $\mathbb{R}$ such that there exists a curve $\gamma$ on $\cN$ with $\gamma(0)=\bX$, satisfying $\bxi_{\bX} f=\left.\frac{\mathrm{d}(f(\gamma(t)))}{\mathrm{d} t}\right|_{t=0}$ for all $f \in \cF_{\bX}(\cN)$. Such a curve $\gamma$ is said to realize the tangent vector $\bxi_{\bX}$.  For a manifold $\cN$, every point $\bX$ on the manifold is attached to a unique and linear \textit{tangent space}, denoted by $T_{\bX}\cN$. $T_{\bX}\cN$ is a set of all the tangent vectors to $\cN$ at $\bX$. The \textit{tangent bundle} is the set of all tangent vectors to $\cN$ denoted by $T\cN$, which is also a manifold. A \textit{vector field} $\bxi$ on a manifold $\cN$ is a smooth mapping from $\cN$ to $T\cN$ that assigns to each point $\bX\in\cN$ a tangent vector $\bxi_{\bX}\in T_{\bX}\cN$. Particularly, every vector space $\cE$ forms a \textit{linear manifold} naturally. Assuming $\cN$ is a vector space $\cE$, we have $T_{\bX}\cN=\cE$. 

\begin{figure}
	\centering                 				
	\includegraphics[scale=0.37]{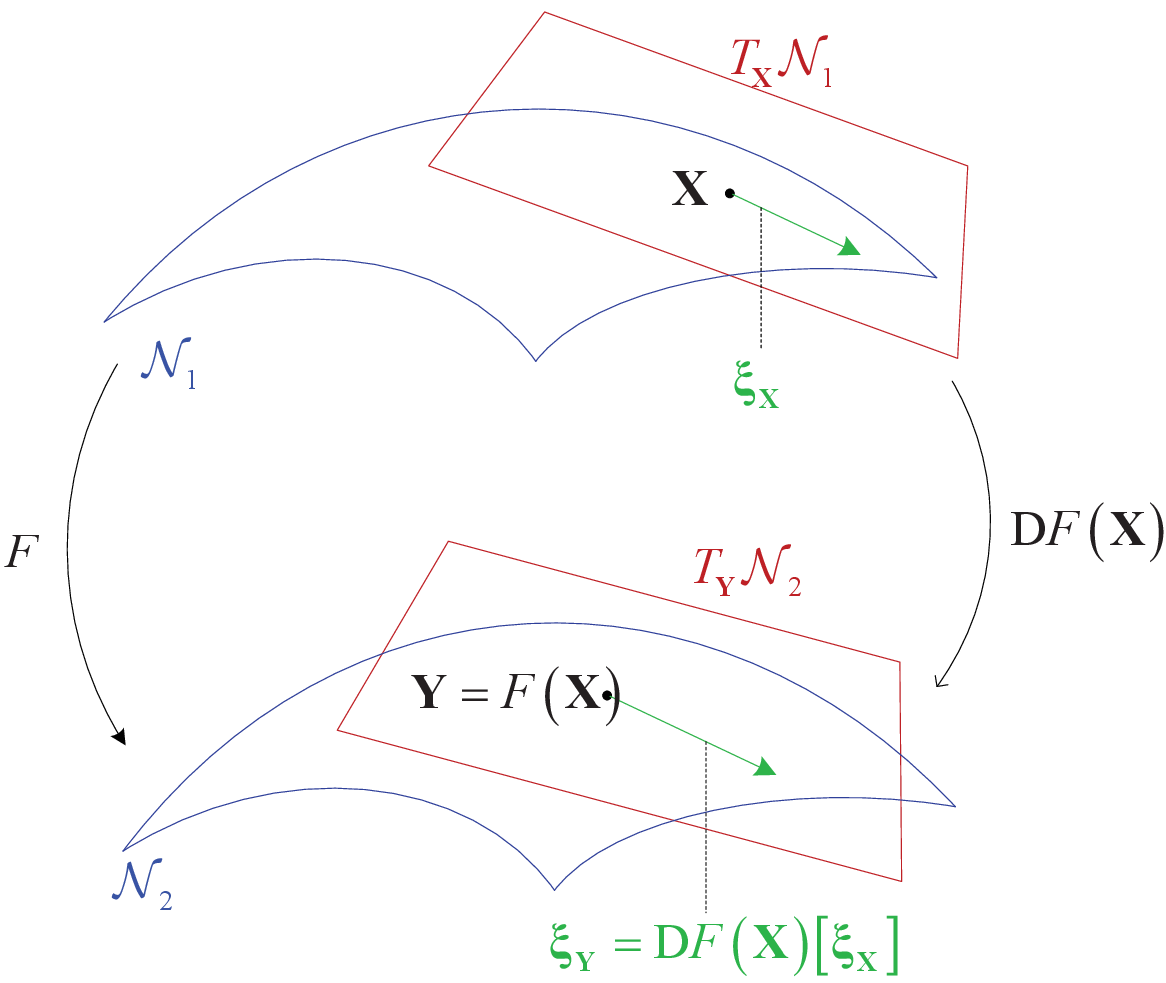}
	\caption{The relationship between two manifolds \cite{Absil2009}. $F$ is a smooth mapping from $\cN_1$  to $\cN_2$, and $\mathrm{D}F\left(\bX\right)$ is a linear mapping from $T_{\bX}{\cN_1}$ to $T_{\bY}{\cN_2}$.}
	\label{fig_manifold}
\end{figure}

Let $F:\cN_1 \rightarrow \cN_2$ be a smooth mapping from manifold $\cN_1$ to manifold $\cN_2$. The differential of $F$ at $\bX$ is a linear mapping from $T_{\bX}\cN_1$ to $T_{\bY}\cN_2$, denoted by $\mathrm{D}F\left(\bX\right)\left[\cdot\right]$. Denote $\bY=F\left(\bX\right)$ as a point on $\cN_2$. Similarly, $\bxi_{\bY}=\mathrm{D}F\left(\bX\right)\left[\bxi_{\bX}\right]$ is a tangent vector to $\cN_2$ at $\bY$, i.e.,  an element in $T_{\bY}\cN_2$. In particular, if $\cN_1$ and $\cN_2$ are linear manifolds, $\mathrm{D}F\left(\bX\right)$ reduces to the classical directional derivative \cite{ManifoldDifferential}
\begin{equation}
	\begin{aligned}
 		\mathrm{D}F\left(\bX\right)\left[ \boldsymbol{\bxi}_{\bX} \right] = \lim_{t\rightarrow 0}\frac{F\left(\bX+t\boldsymbol{\bxi}_{\bX}\right)-F\left(\bX\right)}{t}.
 	\end{aligned}
\end{equation}
The rank of $F$ at $\bX$ is the dimension of the range of $\mathrm{D}F\left(\bX\right)\left[ \boldsymbol{\bxi}_{\bX} \right]$. \figref{fig_manifold} is a simple illustration for geometric understanding.

In practice, the equality constraint $F:\cN \rightarrow 0$ in many optimization problems forms a mapping between two linear manifolds. The solutions satisfying the equality constraint constitute a set $\cM=\left\{ \bX \in \cN  \mid F(\bX)=0 \right\}$. The set $\cM$ may admit several manifold structures, while it admits at most one differential structure that makes it an \textit{embedded submanifold}  of $\cN$.  Whether $\cM$ forms an embedded submanifold mainly depends on the properties of $F\left(\bX\right)$ \cite[Proposition 3.3.2]{Absil2009}. We assume $\cM$ is an embedded submanifold of $\cN$ in the rest of this section to facilitate the introduction. Note that $\bX \in \cM$ is still a point on $\cN$.
A level set of a real-valued function $F$ is the set of values $\bX$ for which $F\left({\bX}\right)$ is equal to a given constant. In particular, when $\cM$ is defined as a level set of a constant-rank function $F$, the tangent space of $\cM$ at $\bX$ is the kernel of the differential of $F$ and a subspace of the tangent space of $\cN$:  
\begin{equation}\label{ker}
 	\begin{aligned}
 		T_{\bX}{\cM}=\mathrm{ker}\left(\mathrm{D}F\left(\bX\right)\right)\subseteq T_{\bX}\cN.
 	\end{aligned}
\end{equation}

By endowing tangent space  with an inner product $g_{\bX}^{\cN}\left(\cdot\right)$, we can define the length of tangent vectors in $T_{\bX}\cN$. Note that the subscript and the superscript  in $g_{\bX}^{\cN}\left(\cdot\right)$ is used to distinguish the metric of different points on different manifolds for clarity, which may be omitted if no confusion will arise. $g_{\bX}^{\cN}\left(\cdot\right)$ is called \textit{Riemannian metric} if it varies smoothly and the manifold is called Riemannian manifold. Since $T_{\bX}{\cM}$ can be regarded as a subspace of $T_{\bX}{\cN}$, the Riemannian metric $g_{\bX}^{\cN}\left(\cdot\right)$ on $\cN$ induces a Riemannian metric $g_{\bX}^{\cM}\left(\cdot\right)$ on $\cM$ according to
\begin{equation}\label{RiemannianMetricSubmanifold}
	\begin{aligned}
		g_{\bX}^{\cM}\left(\bxi_{\bX},\bzeta_{\bX}\right)=g_{\bX}^{\cN}\left(\bxi_{\bX},\bzeta_{\bX}\right), \forall \bxi_{\bX},\bzeta_{\bX}\in T_{\bX}\cM.
	\end{aligned}
\end{equation}
Note that $\bxi_{\bX}$ and $\bzeta_{\bX}$ on the right hand side are viewed as elements in $T_{\bX}\cN$.  Endowed with the Riemannian metric \eqref{RiemannianMetricSubmanifold}, the embedded submanifold $\cM$ forms a \textit{Riemannian submanifold}.

In practice, it is suggested to utilize the \textit{orthogonal projection} to obtain the elements in $T_{\bX}{\cM}$. 
With the Riemannian metric $g_{\bX}^{\cN}\left(\cdot\right)$, the $T_{\bX}{\cN}$ can be divided into two orthogonal subspaces as
\begin{equation}\label{RawComposition}
	\begin{aligned}
		T_{\bX}{\cN}=T_{\bX}{\cM}\oplus N_{\bX}{\cM},
	\end{aligned}
\end{equation}
where the normal space $N_{\bX}{\cM}$ is the orthogonal complement of $T_{\bX}{\cM}$ defined as
\begin{equation}\label{Normal_Space}
	\begin{aligned}
		N_{\bX}{\cM}=\left\{ \bxi_{\bX}\in T_{\bX}\cN \mid g_{\bX}\left(\bxi_{\bX},\bzeta_{\bX}\right)=0,\ \forall \bzeta_{\bX}\in T_{\bX}{\cM}  \right\}.
	\end{aligned}
\end{equation}
Thus, any $\bxi_{\bX}\in T_{\bX}{\cN} $ can be uniquely decomposed into the sum of an element in $T_{\bX}{\cM}$ and an element in $N_{\bX}{\cM}$
\begin{equation}\label{Projection_DirectSum}
	\begin{aligned}
		\bxi_{\bX}=\bPi_{T_{\bX}{\cM}}^{T_{\bX}{\cN}} \left(\bxi_{\bX}\right)+\bPi_{N_{\bX}{\cM}}^{T_{\bX}{\cN}}\left(\bxi_{\bX}\right),
	\end{aligned}
\end{equation}
 where $\bPi_{\bB}^{\bA}\left(\cdot\right)$ denotes the orthogonal projection from $\bA$ to $\bB$.

For a smooth real-valued function $f$ on a Riemannian submanifold $\cM$, the \textit{Riemannian gradient} of $f$ at $\bX$, denoted by $\mathrm{grad}f_{\cM}(\bX)$, is defined as the unique element in $T_{\bX}{\cM}$ that satisfies
 \begin{equation}\label{RawGradient}
 	\begin{aligned}
 		g_{\bX}^{\cM}\left( \mathrm{grad}f_{\cM}(\bX),\boldsymbol{\bxi}_{\bX} \right) = \mathrm{D}f\left(\bX\right)\left[ \boldsymbol{\bxi}_{\bX} \right], \forall \bxi_{\bX}\in T_{\bX}\cM.
 	\end{aligned}
 \end{equation}
Denote $\mathrm{grad}f_{\cM}$ as the vector field of the Riemannian gradient on $\cM$, whose subscript can be omitted when no confusion will arise. Note that, since $\mathrm{grad}f_{\cM}(\bX)\in T_{\bX}{\cM}$, we have $\mathrm{grad}f_{\cN}(\bX)\in T_{\bX}\cN$ and it can then be decomposed as \eqref{Projection_DirectSum}.

When second-order derivative optimization algorithms, such as Newton method and trust-region method, are preferred, \textit{affine connection} is indispensable \cite{Absil2009}. Let $\mathfrak{X}(\cN)$ denote the set of smooth vector fields on $\cN$, the affine connection on $\cN$ is defined as a mapping $\nabla^{\cN}: \mathfrak{X}(\cN) \times \mathfrak{X}(\cN) \rightarrow \mathfrak{X}(\cN):(\boldsymbol{\eta}, \boldsymbol{\bxi}) \mapsto \nabla_{\boldsymbol{\eta}}^{\cN} \boldsymbol{\bxi}$. When Riemannian manifold $\cM$ is an embedded submanifold of a vector space, $\nabla_{\boldsymbol{\eta}_{\bX}}^{\cM} \bxi$ is called \textit{Riemannian connection} and given by \cite[Proposition 5.3.1]{Absil2009}
\begin{equation}
	\begin{aligned}
		\nabla_{\boldsymbol{\eta}_{\bX}}^{\cM} \bxi=&\bPi_{T_{\bX}{\cM}}^{T_{\bX}{\cN}}\left(	\nabla_{\boldsymbol{\eta}_{{\bX}}}^{\cN} \bxi\right)\\
		=&\bPi_{T_{\bX}{\cM}}^{T_{\bX}{\cN}} \left( \mathrm{D}\boldsymbol{\bxi}\left[\boldsymbol{\eta}_{{\bX}}\right] \right), \forall \boldsymbol{\eta}_{\bX}\in T_{\bX}\cM, \bxi\in \mathfrak{X}(\cM).
	\end{aligned}
\end{equation}
Given a Riemannian connection $\nabla^{\cM}$ defined on $\cM$, the Riemannian Hessian of a real valued function $f$ at point $\bX$ on $\cM$ is the linear mapping $\operatorname{Hess} f(\mathbf{X})\left[\cdot\right]$ of $T_{\bX} \cM$ into itself defined as
\begin{equation}\label{RiemannianConnection}
	\begin{aligned}
		\mathrm{Hess}f(\bX)\left[\boldsymbol{\bxi}_{\bX}\right] = \nabla_{\boldsymbol{\bxi}_{\bX}}^{\cM}\mathrm{grad}f.
	\end{aligned}
\end{equation}

The notion of moving along the tangent vector while remaining on the manifold is generalized by \textit{retraction}. The retraction $R_{\bX}^{\cN}\left(\cdot\right)$, a smooth mapping from $T_{\bX} \cN$ to $\cN$ \cite[Definition 4.1.1]{Absil2009}, builds a bridge between the tangent space $T_{\bX} \cN$ and the manifold $\cN$. Practically, the Riemannian exponential mapping forms a retraction but is computationally expensive to implement. For a Riemannian submanifold $\cM$, the geometric structure of the manifold is utilized to define the retraction $R_{\bX}^{\cM}\left(\cdot\right)$ through the projection for affordable and efficient access. 
\begin{figure}[t]
	\centering                 				
	\includegraphics[scale=0.6]{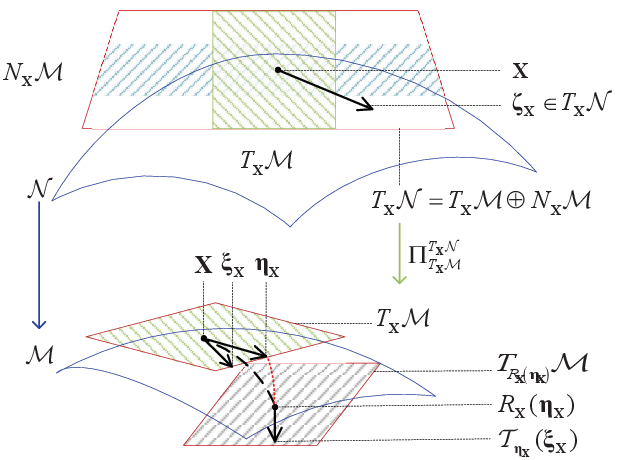}
	\caption{ Geometric interpretation of orthogonal projection, retraction and vector transport.}
	\label{fig_manifold_tangent}
\end{figure}

Sometimes, we need to move the tangent vector from the current tangent space to another, which is not always straightforward as the tangent spaces are different in a nonlinear manifold.  To address this difficulty, \textit{vector transport} denoted by $\cT_{\boldsymbol{\eta}_{\bX}}^{\cN}\left(\bxi_{\bX}\right)\in T_{R_{\bX}\left(\boldsymbol{\eta}_{\bX}\right)}\cN$  is introduced, which specifies how to transport a tangent vector $\bxi_{\bX}$ from a point $\bX$ on $\cN$ to another point $R_{\bX}\left(\boldsymbol{\eta}_{\bX}\right)$. Like $\mathrm{Hess}f(\bX)\left[\cdot\right]$, $\cT_{\boldsymbol{\eta}_{\bX}}^{\cN}\left(\cdot\right)$ is an operator rather than a matrix. For the Riemannian submanifold, $\cT_{\boldsymbol{\eta}_{\bX}}^{\cM}\left(\bxi_{\bX}\right)$ can be achieved by orthogonal projection \cite{Absil2009}:
\begin{equation}\label{RawVectorTransport}
	\begin{aligned}
		\cT_{\boldsymbol{\eta}_{\bX}}^{\cM}\left(\bxi_{\bX}\right)=\Pi_{T_{R_{\bX}^{\cM}\left(\boldsymbol{\eta}_{\bX}\right)}\cM}^{T_{\bX}\cM}\left( \bxi_{\bX}\right).
	\end{aligned}
\end{equation}
For geometric understanding, \figref{fig_manifold_tangent} is a simple illustration.

\section{Problem Formulation and Riemannian Elements in Matrix Manifold framework }\label{Precoding}
In this section,  we first present the WSR-maximization precoder design problems under TPC, PUPC and PAPC, respectively, for massive MIMO DL in Euclidean space. Then, we show that the precoder set satisfying TPC forms a \textit{sphere} and the precoder set satisfying PUPC or PAPC forms an \textit{oblique manifold}. Further, we prove that the precoder sets form three different Riemannian submanifolds, from which we transform the constrained problems in Euclidean space to unconstrained ones on these Riemannian submanifolds. Sequentially, we derive Riemannian ingredients, including orthogonal projection, Riemannian gradient, Riemannian Hessian, retraction and vector transport, which are needed for precoder design in matrix manifold framework.
\subsection{WSR-maximization Precoder Design Problem in Euclidean Space}
We consider a single-cell massive MIMO system as shown in \figref{fig_system_model}. In the system, the BS equipped with $M_t$ antennas serves $U$ UTs and the $i$-th UT has $M_i$ antennas. The user set is denoted as $\cU=\left\{1,2,\cdots,U\right\}$ and the  set of the antennas at the BS side is denoted as  $\cA=\left\{1,2,\cdots,M_t\right\}$.  Let $\bx_i \in \bbC^{d_i}$ denote the signal transmitted to the $i$-th UT satisfying $\mathbb{E}\left\{\bx_i\bx_i^H\right\}=\bI_{d_i}$, where $d_i$ is the dimension of the signal transmitted to the $i$-th UT. $\bx=\sum_{i=1}^{U}\bP_i\bx_i$ is the transmitted signal for all the UTs. The received signal $\by_i$ of the $i$-th UT is given by
\begin{equation}\label{ReceivedSignal}
	\begin{aligned}
		\by_i=\overbrace{\bH_{i}\bP_{i}\bx_i}^{\mathrm{desired\ signal}}+\overbrace{\bH_{i}\sum_{ \ell \neq i,\ell=1}^{U}\bP_{\ell}\bx_{\ell}+\bz_i}^{\mathrm{interference\ plus\ noise }}, 
	\end{aligned}
\end{equation}
where $\bH_i$ is the complex channel matrix from the BS to the $i$-th UT,  $\bP_i$ is the corresponding precoding matrix, and $\bz_i$ denotes the independent and identically distributed (i.i.d.) complex circularly
symmetric Gaussian noise vector distributed as $\mathcal{CN}\left(0,\sigma_z^2\bI_{M_i}\right)$.  

\begin{figure}[t]
	\centering                 				
	\includegraphics[scale=0.65]{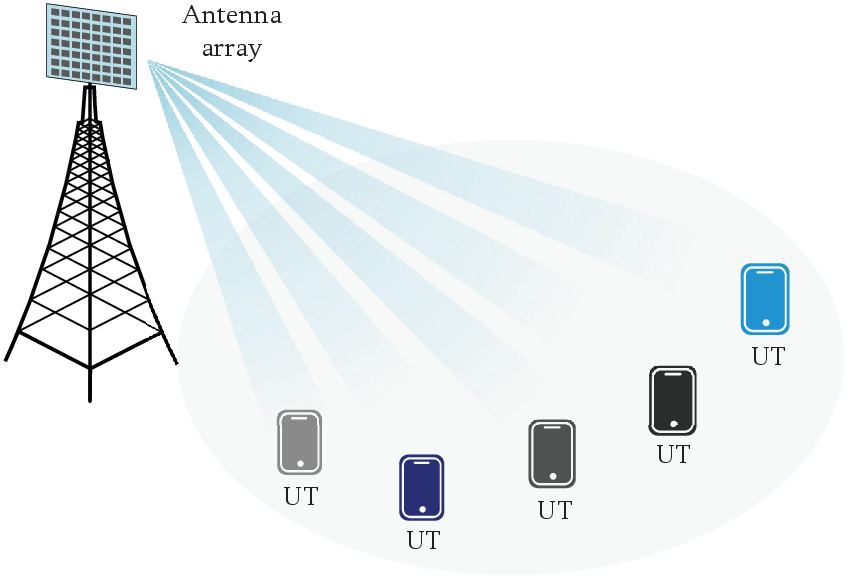}
	\caption{ The illustration of the massive MIMO system.}
	\label{fig_system_model}
\end{figure}

For simplicity, we assume that  the perfect channel state information (CSI) of the channel $\bH_i, \forall i\in\cU,$ is known at the BS side, and the perfect CSI of the effective channel $\bH_i\bP_i$ is available for the $i$-th UT via DL training. For the worst-case design, the aggregate interference plus noise $\bz_i^{\prime}=\bH_i\sum_{l\neq i}^{U}\bP_{\ell}\bx_{\ell}+\bz_i$ is treated as Gaussian noise. The covariance matrix of $\bz_i^{\prime}$ is as follows
\begin{equation}\label{R}
	\begin{aligned}
		\bR_i=\mathbb{E}\left\{\bz_i^{\prime}\left(\bz_i^{\prime}\right)^ H\right\}=\sigma_z^2\bI_{M_i}+\sum_{l\neq i}^{U}\bH_i\bP_{\ell}\bP_{\ell}^H\bH_i^H.
	\end{aligned}
\end{equation}
By assuming that $\bR_i$ is also known by UT $i$, the rate of user $i$ can be written as
\begin{equation}\label{RateOfUseri}
	\begin{aligned}
		\mathcal{R}_i=&\logdet{\bR_i+\bH_i\bP_i\bP_i^H\bH_i^H}-\logdet{\bR_i}\\
		=&\logdet{\bI_{d_i}+\bP_i^H\bH_i^H\left(\bR_i\right)^{-1}\bH_i\bP_i}.
	\end{aligned}
\end{equation}
The WSR-maximization precoder design problem can be formulated into a minimization problem  as
\begin{equation}\label{ProblemInVector}
    \begin{aligned}
	     \argmin{\bP_1,\cdots,\bP_U} f\left(\bP_1,\cdots,\bP_U\right) \ \ \mathrm{ s.t. } \  F\left(\bP_1,\bP_2,\cdots,\bP_U\right)=0,
   \end{aligned}
\end{equation}
where $f\left(\bP_1,\cdots,\bP_U\right)=-\sum_{i=1}^{U}w_i\mathcal{R}_i$ is the objective function with $w_i$ being the weighted factor of user $i$, and $F\left(\bP_1,\bP_2,\cdots,\bP_U\right)=0$ is the power constraint. We consider three specific power constraints in massive MIMO DL transmission including TPC, PUPC and PAPC. By stacking the precoding matrices of different users, we redefine the variable as 
\begin{equation}\label{StackMatrix}
	\begin{aligned}
		\bP&=\left(\bP_1,\bP_2,\cdots,\bP_U\right).
	\end{aligned}
\end{equation}
Let $P$, $P_i, \forall i\in\cU$, and $p_j, \forall j\in\cA$,  denote the total transmit power of the BS, the transmit power allocated for the $i$-th  UT and the power constraint of the $j$-th transmit antenna, respectively. The  WSR-maximization problem can be rewritten as
 \begin{equation}\label{Problem_TPC}
 		\begin{aligned}
 			\bP^{\star}=\left({\bP_1^{\star},\bP_2^{\star},\cdots,\bP_U^{\star}}\right)=\argmin{\bP} f\left(\bP\right) \ \ \mathrm{ s.t. } \ F\left(\bP\right)=0,
 		\end{aligned}
 \end{equation}
where $F\left(\bP\right)$ can be expressed as
\begin{subequations}
	\begin{align}
		\hat{F}\left(\bP\right)&=\mathrm{tr}\left\{ \bP^H\bP \right\}-P,\\
		\tilde{F}\left(\bP\right)&=\mathrm{tr}\left\{ \bP_i^H\bP_i \right\}-P_i,\forall\ i\in \cU,\\
		\bar{F}\left(\bP\right)&=\be_j^H\bP\bP^H\be_j -p_j, \forall\ j\in \cA
	\end{align}
\end{subequations}
for TPC, PUPC and PAPC, respectively. 
For simplicity, we assume that the power allocation process is done in the PUPC case and $\sum_{i=1}^{U}P_i= P$ without loss of generality. Besides, equal antenna power constraint is usually considered in practice to efficiently utilize the power amplifier capacity  of each antenna \cite{Equal} in the PAPC case, where the power constraint for each transmit antenna is $p_j=\frac{P}{M_t},\forall j\in\cA,$ and  $\bar{F}\left(\bP\right)$ can be reexpressed  as $\bar{F}\left(\bP\right)=\bI_{M_t} \odot\left(\bP\bP^H\right)=\frac{P}{M_t}\bI_{M_t}$.
\subsection{Problem Reformulation on Riemannian Submanifold}

In this subsection, we transform the constrained problems into unconstrained ones on three different Riemannian submanifolds.

From the perspective of manifold optimization, $\bP_i\in \bbC^{M_t\times d_i}$ belongs to the linear manifold $\bbC^{M_t \times d_i}$ naturally. Define the Riemannian metric as
\begin{equation}\label{RiemannianMetricOfProductManifold}
	\begin{aligned}
		g_{\bP_i}\left(\bxi_{\bP_i},\bzeta_{\bP_i}\right)=
		\Re\left\{\mathrm{tr}\left(\bxi_{\bP_i}^H\bzeta_{\bP_i} \right) \right\},
	\end{aligned}
\end{equation}
where $\bxi_{\bP_i}$ and $\bzeta_{\bP_i}$ are tangent vectors in tangent space $T_{\bP_i}{\bbC^{M_t \times d_i}}=\bbC^{M_t \times d_i}$. With the Riemannian metric \eqref{RiemannianMetricOfProductManifold}, $\bbC^{M_t \times d_i}$ forms a Riemannian manifold. In fact, $\bP=\left(\bP_1,\bP_2,\cdots,\bP_U\right)$ is a point on the product manifold \cite{Absil2009}
\begin{equation}
	\begin{aligned}
		\bbC^{M_t \times d_1}\times \bbC^{M_t \times d_2}\times \cdots \times \bbC^{M_t \times d_U}:=\cN.
	\end{aligned}
\end{equation}
Similarly, the tangent space of $\cN$ is given by \cite{Boumal2020}
\begin{equation}\label{product_tangent_space}
	\begin{aligned}
		T_{\bP}\cN=T_{\bP_1}\bbC^{M_t\times d_1}\times T_{\bP_2}\bbC^{M_t\times d_2}\times \cdots \times T_{\bP_U}\bbC^{M_t\times d_U},
	\end{aligned}
\end{equation}
of which the product Riemannian metric is defined as a direct sum \cite{Boumal2020} :
\begin{equation}\label{RiemannianProductMetric}
	\begin{aligned}
		g_{\bP}\left(\bxi_{\bP},\bzeta_{\bP}\right)=g_{\bP_1}\left(\bxi_{\bP_1},\bzeta_{\bP_1}\right)\oplus   \cdots \oplus g_{\bP_U}\left(\bxi_{\bP_U},\bzeta_{\bP_U}\right).
	\end{aligned}
\end{equation}
$\bxi_{\bP}=\left( \bxi_{\bP_1},\bxi_{\bP_2},\cdots,\bxi_{\bP_U} \right)$ and $\bzeta_{\bP}=\left( \bzeta_{\bP_1},\bzeta_{\bP_2},\cdots,\bzeta_{\bP_U} \right)$ are tangent vectors at point $\bP$ in $T_{\bP}\cN$. With the Riemannian metric defined in \eqref{RiemannianProductMetric}, $\cN$ is also a Riemannian manifold. 
Let
\begin{subequations}\label{sets}
	\begin{align}
		\widehat{\cM} &= \left\lbrace \bP \mid  \mathrm{tr}\left\{ \bP^H\bP \right\}=P \right\rbrace, \\
		\widetilde{\cM} &= \left\lbrace \bP \mid  \mathrm{tr}\left(\bP_i^H\bP_i\right) =P_i,\forall\ i\in \cU \right\rbrace, \\
		\overline{\cM} &= \left\lbrace \bP \mid  \bI_{M_t}\odot\left(\bP\bP^H\right) =\frac{P}{M_t}\bI_{M_t} \right\rbrace
	\end{align}		
\end{subequations}
denote the precoder sets satisfying TPC, PUPC and PAPC, respectively. We have the following theorem.

\begin{theorem}\label{theo_submanifold_TPC}
	
	The Frobenius norm of $\forall\hat{ \bP}\in\widehat{\cM}$ is a constant  and $\widehat{\cM}$ forms a sphere. The Frobenius norm of each component matrix $\tilde{\bP}_i$  of $\forall \tilde{\bP}\in\widetilde{\cM}$ is a constant and $\widetilde{\cM}$ forms a oblique manifold composed of $U$ spheres. The norm of each column of $\forall \bar{\bP}\in\overline{\cM}$ is a constant and  $\overline{\cM}$ forms an oblique manifold with $M_t$ spheres. $\widehat{\cM}$, $\widetilde{\cM}$ and $\overline{\cM}$ form three different Riemannian submanifolds of the product manifold $ \cN $.
	\begin{proof}
		The proof is provided in \appref{Proof_theo_Riemannian_Submanifold}.
	\end{proof}
\end{theorem}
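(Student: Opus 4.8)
The plan is to realize each power constraint as the zero level set of a smooth map $F$ defined on the (real) vector space underlying the product manifold $\cM$, and then invoke the regular-value form of the submersion theorem \cite[Proposition 3.3.2]{Absil2009}: if $\mathrm{D}F(\bP)$ is surjective at every $\bP$ with $F(\bP)=0$, then the level set is an embedded submanifold of $\cM$ whose tangent space is $\mathrm{ker}\left(\mathrm{D}F(\bP)\right)$ as in \eqref{ker}. Because $\cM$ already carries the Riemannian metric \eqref{RiemannianProductMetric}, endowing each level set with the induced metric \eqref{RiemannianMetricSubmanifold} upgrades it to a \emph{Riemannian submanifold}. The three sets $\widehat{\cM}$, $\widetilde{\cM}$, $\overline{\cM}$ arise from three different maps $F_1,F_2,F_3$, so the final assertion reduces to verifying the regularity condition for each; the geometric identifications then come for free by rewriting the constraints as norm conditions.

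For the geometric descriptions I would simply unpack the constraints. For TPC, $\mathrm{tr}\left\{\bP^H\bP\right\}=\sqrnormf{\bP}=P$, so every $\hat{\bP}\in\widehat{\cM}$ has fixed Frobenius norm $\sqrt{P}$ and $\widehat{\cM}$ is exactly a sphere of radius $\sqrt{P}$ in $\cM\cong\bbR^{2M_t d}$, where $d=\sum_{i}d_i$. For PUPC the constraint decouples blockwise into $\sqrnormf{\bP_i}=P_i$, so each component $\tilde{\bP}_i$ lies on its own sphere and $\widetilde{\cM}$ is the product of $U$ spheres, i.e.\ an oblique manifold. For PAPC, the $j$-th diagonal entry of $\bP\bP^H$ equals $\be_j^H\bP\bP^H\be_j=\sqrnormt{\be_j^H\bP}$, the squared norm of the $j$-th row of $\bP$ (equivalently the $j$-th column of $\bP^H$), pinned at $P/M_t$; hence $\overline{\cM}$ is a product of $M_t$ spheres, again an oblique manifold.

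The core of the argument is the regularity check. I would compute the directional derivatives $\mathrm{D}F_1(\bP)[\bxi_{\bP}]=2\Re\left\{\mathrm{tr}\left(\bP^H\bxi_{\bP}\right)\right\}$, $\mathrm{D}F_2(\bP)[\bxi_{\bP}]=\left(2\Re\left\{\mathrm{tr}\left(\bP_i^H\bxi_{\bP_i}\right)\right\}\right)_{i\in\cU}$, and $\mathrm{D}F_3(\bP)[\bxi_{\bP}]=\left(2\Re\left\{\be_j^H\bxi_{\bP}\bP^H\be_j\right\}\right)_{j\in\cA}$, using that the constraints are real-valued and $\be_j$ is real. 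Surjectivity then follows from a decoupling observation together with non-vanishing on the level set. For TPC, taking $\bxi_{\bP}=\bP$ gives $2P>0$, so the single functional is onto $\bbR$. For PUPC, the $i$-th output depends only on the block $\bxi_{\bP_i}$, and choosing $\bxi_{\bP_i}=c_i\bP_i$ yields $2c_iP_i$; since $P_i>0$, every target in $\bbR^U$ is attainable. For PAPC, the $j$-th output depends only on the $j$-th row of $\bxi_{\bP}$, and setting that row proportional to $\be_j^H\bP$ produces $2c_j\sqrnormt{\be_j^H\bP}=2c_jP/M_t$; as $P/M_t>0$ and the rows of $\bxi_{\bP}$ may be chosen independently, $\mathrm{D}F_3(\bP)$ is onto $\bbR^{M_t}$. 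Thus $0$ is a regular value in all three cases, each level set is an embedded submanifold, and the induced metric makes it a Riemannian submanifold.

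The differential computations are routine; the step demanding the most care is the PAPC regularity argument. There the matrix equation $\bI_{M_t}\odot\left(\bP\bP^H\right)=\frac{P}{M_t}\bI_{M_t}$ must first be read correctly as $M_t$ independent real scalar constraints on the diagonal, and the surjectivity of $\mathrm{D}F_3$ then hinges on recognizing that its $j$-th component sees only the $j$-th row of the variation and that this row of $\bP$ is nonzero precisely because its squared norm equals the positive constant $P/M_t$. Once this row-decoupling and non-vanishing are in place, the full-rank hypothesis of \cite[Proposition 3.3.2]{Absil2009} is verified and the remaining conclusions are immediate.
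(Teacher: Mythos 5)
Your proposal is correct and follows essentially the same route as the paper's Appendix A: both identify the constraint sets as level sets of $F_1,F_2,F_3$, verify the submersion/regular-value condition by testing the differential on variations proportional to $\bP$ (or its blocks/rows, whose non-vanishing is guaranteed by the positive power constants), and then endow the resulting embedded submanifolds with the metric induced from $\cM$. The only cosmetic difference is that you phrase surjectivity via explicit formulas for $\mathrm{D}F$ acting on scaled copies of $\bP$, whereas the paper exhibits a preimage $\bZ$ for an arbitrary codomain element; these are the same computation.
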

 From \thref{theo_submanifold_TPC}, the constrained problems under TPC, PUPC and PAPC can be converted into unconstrained ones on manifolds as
\begin{subequations}
	\begin{align}
		\hat{\bP}^{\star}& = \argmin{\hat{\bP}\in \widehat{\cM}}f\left(\hat{\bP}\right),\\
		\tilde{\bP}^{\star}& = \argmin{\tilde{\bP}\in\widetilde{\cM}}f\left(\tilde{\bP}\right),\\
		\bar{\bP}^{\star}& = \argmin{\bar{\bP}\in\overline{\cM}}f\left(\bar{\bP}\right).
	\end{align}
\end{subequations}

In the case that TPC, PUPC and PAPC need to be satisfied simultaneously, the problem can be reformulated on the intersection of $\widetilde{\cM}$ and $\overline{\cM }$ and solved with the help of the von Neumann’s method of alternating projections proposed in \cite{Alter}. Precoder design with matrix manifold optimization is also investigated in \cite{RCG1}, where the matrix manifold optimization is used to  compute the related smallest eigenvalue and the corresponding eigenvector. So the problem is reformulated as the Rayleigh quotient minimization of generalized eigenvalue problems on the Riemannian quotient manifold, whose solution is not local optimal for the original problem.

\subsection{Riemannian Ingredients for Precoder Design}

In this subsection, we derive all the ingredients needed in manifold optimization for the three precoder design problems on Riemannian submanifolds.

First of all, we derive the Euclidean gradient of $ f\left( \bP \right)  $, which can be viewed as the Riemannian gradient of $ f\left( \bP \right)  $ on $\cN$. Recall that the Riemannian gradient of $ f\left( \bP \right)  $ on $\cN$ is the unique element $ \mathrm{grad} f\left( \bP\right) \in T_{\bP}\cN   $ that satisfies \eqref{RawGradient}.
Thus the Riemannian gradient $  \mathrm{grad} f \left( \bP\right)$ is identified from the directional derivative $ \mathrm{D} f \left( \bP \right) \left[ \bxi_{\bP}\right] $ with the Riemannian metric $ g_{\bP}\left(\cdot\right) $. Now let us define
\begin{equation} \label{A_i}
	\bA_i  = \bR_i^{-1}\bH_i\bP_i\in\bbC^{M_i\times d_i},
\end{equation}  
\begin{equation} \label{B_i}
	\bB_{i} = \bA_{i}\bC_{i}\bA_{i}^H\in\bbC^{M_i\times M_i},
\end{equation}
\begin{equation}\label{key}
	\bC_{i} =  \left( \bI_{d_{i}} + \bP_{i}^H\bH_{i}^H \bA_{i} \right)^{-1}\in\bbC^{d_i\times d_i}.
\end{equation}
\begin{theorem} \label{Theo_Euclidean_Gradient}
	\par The Euclidean gradient of $  f \left( \bP \right)  $ is
	\begin{equation}\label{Euclidean_Gradient_all}
		\mathrm{grad} f \left( \bP \right) = \left( \mathrm{grad} f \left( \bP_{1}\right) , \mathrm{grad} f \left( \bP_{2}\right), \cdots, \mathrm{grad} f \left( \bP_U\right) \right), 
	\end{equation} where 
	\begin{equation}\label{Euclidean_Gradient_single}
		\mathrm{grad} f \left( \bP_i\right) = -2\left( w_i \bH_i^H\bA_i\bC_i - \sum_{ \ell \neq i}w_{\ell}\bH_{\ell}^H\bB_{\ell}\bH_{\ell}\bP_i \right)
	\end{equation}
	is the Euclidean gradient on the $i $-th component submanifold $ \bbC^{M_t \times d_i} $. 
	\begin{proof}
		See the proof in \appref{App_Proof_Theo_Euclidean_Gradient}.
	\end{proof}
\end{theorem}

With \thref{Theo_Euclidean_Gradient}, we further derive the orthogonal projection, Riemannian gradient, retraction and vector transport for the three precoder design problems on Riemannian submanifolds.
\subsubsection{TPC}
\

From \eqref{RawComposition}, the tangent space $T_{\bP}\cN$ is decomposed into two orthogonal subspaces
\begin{equation}\label{Decomposition_TPC}
	\begin{aligned}
		T_{\bP}{\cN}=T_{\hat{\bP}}{\widehat{\cM}}\oplus N_{\hat{\bP}}{\widehat{\cM}}.
	\end{aligned}
\end{equation}
According to \eqref{ker}, the tangent space $T_{\hat{\bP}}{\widehat{\cM}}$ is given by
\begin{equation}\label{Tangent_Space_TPC}
	\begin{aligned}
		T_{\hat{\bP}}{\widehat{\cM}}=&\mathrm{ker}\left( \mathrm{D}\hat{F}\left(\bP\right) \right)\\
		=&\left\{  \bxi_{\bP}\in T_{\bP}\cN\mid \mathrm{tr}\left( \bP^H\bxi_{\bP}+\bxi_{\bP}^H\bP \right) =0\right\}.
	\end{aligned}
\end{equation}
The normal space of $\widehat{\cM}$ is the orthogonal complement of $T_{\hat{\bP}}{\widehat{\cM}}$ and can be expressed as
\begin{equation}\label{Normal_Space_TPC}
	\begin{aligned}
		N_{\hat{\bP}}\widehat{\cM} =& \left\{ \bxi_{\bP}\in T_{\bP}\cN \mid g_{\bP}\left(\bxi_{\bP},\bzeta_{\hat{\bP}}\right)=0,\ \forall\bzeta_{\hat{\bP}}\in T_{\hat{\bP}}\widehat{\cM} \right\}\\ =&\left\{\hat{\lambda}\bP\mid \hat{\lambda}\in\mathbb{R}\right\}.
	\end{aligned}
\end{equation}
Therefore, any $\bxi_{\bP} \in T_{\bP}\cN$ can be decomposed into two orthogonal parts as 
\begin{equation}\label{xiDecomposition_TPC}
	\begin{aligned}
		\bxi_{\bP}=\Pi_{T_{\widehat{\bP}}\widehat{\cM}}^{T_{\bP}\cN}\left( \bxi_{\bP}\right)+\Pi_{N_{\widehat{\bP}}\widehat{\cM}}^{T_{\bP}\cN}\left( \bxi_{\bP}\right),
	\end{aligned}
\end{equation}
where $\Pi_{T_{\widehat{\bP}}\widehat{\cM}}^{T_{\bP}\cN}\left( \bxi_{\bP}\right)$ and $\Pi_{N_{\widehat{\bP}}\widehat{\cM}}^{T_{\bP}\cN}\left( \bxi_{\bP}\right)$ represent the orthogonal projections of $\bxi_{\bP}$ onto $T_{\hat{\bP}}{\widehat{\cM}}$ and $N_{\hat{\bP}}{\widehat{\cM}}$, respectively.

\begin{lemma}\label{projection_submanifold_TPC}
	For any $ \bxi_{\bP} \in T_{\bP}\cN $,  the orthogonal projection $ \Pi_{T_{\widehat{\bP}}\widehat{\cM}}^{T_{\bP}\cN}\left( \bxi_{\bP}\right)$ is given by
	\begin{equation}\label{ProjectionTPC}
		\begin{aligned}
			\Pi_{T_{\widehat{\bP}}\widehat{\cM}}^{T_{\bP}\cN}\left( \bxi_{\bP}\right) = \bxi_{\bP} - \hat{\lambda} \bP, 		
		\end{aligned}					
	\end{equation}
	where
	\begin{equation}
		\begin{aligned}
			\hat{\lambda} =  \frac{1}{P}   \Re\left\lbrace  \mathrm{tr}\left(\bP^H\bxi_{\bP}\right) \right\rbrace. \label{ProjectionTPC_lambda}
		\end{aligned}
	\end{equation}

\end{lemma}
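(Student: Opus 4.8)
The plan is to exploit the orthogonal decomposition \eqref{xiDecomposition_TPC} together with the explicit one-dimensional description of the normal space in \eqref{Normal_Space_TPC}. Since $N_{\hat{\bP}}\widehat{\cM}=\{\hat{\lambda}_1\bP\mid\hat{\lambda}_1\in\mathbb{R}\}$ is spanned by the single element $\bP$, the normal component of any $\bxi_{\bP}\in T_{\bP}\cM$ must take the form $\Pi_{N_{\hat{\bP}}\widehat{\cM}}^{T_{\bP}\cM}(\bxi_{\bP})=\hat{\lambda}\bP$ for some real scalar $\hat{\lambda}$, so \eqref{xiDecomposition_TPC} immediately forces the claimed shape $\Pi_{T_{\hat{\bP}}\widehat{\cM}}^{T_{\bP}\cM}(\bxi_{\bP})=\bxi_{\bP}-\hat{\lambda}\bP$. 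It then remains only to pin down the value of $\hat{\lambda}$.

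To determine $\hat{\lambda}$, I would impose that the tangential part genuinely lies in $T_{\hat{\bP}}\widehat{\cM}$, i.e. that $\bxi_{\bP}-\hat{\lambda}\bP$ satisfies the defining condition in \eqref{Tangent_Space_TPC}, namely $\mathrm{tr}(\bP^H(\bxi_{\bP}-\hat{\lambda}\bP)+(\bxi_{\bP}-\hat{\lambda}\bP)^H\bP)=0$. Expanding and using $\mathrm{tr}(\bxi_{\bP}^H\bP)=\overline{\mathrm{tr}(\bP^H\bxi_{\bP})}$ collapses the cross terms into $2\Re\{\mathrm{tr}(\bP^H\bxi_{\bP})\}$, while the two $\hat{\lambda}$-terms combine into $2\hat{\lambda}\,\mathrm{tr}(\bP^H\bP)$. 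Since $\hat{\bP}\in\widehat{\cM}$ means $\mathrm{tr}(\bP^H\bP)=P$, solving the resulting scalar equation yields $\hat{\lambda}=\frac{1}{P}\Re\{\mathrm{tr}(\bP^H\bxi_{\bP})\}$, which is exactly \eqref{ProjectionTPC_lambda}.

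An equivalent and perhaps cleaner route is to compute $\hat{\lambda}$ directly as the coefficient of the orthogonal projection onto the one-dimensional normal line, $\hat{\lambda}=g_{\bP}(\bxi_{\bP},\bP)/g_{\bP}(\bP,\bP)$; with the Riemannian metric \eqref{RiemannianMetricOfProductManifold} this reads $\hat{\lambda}=\Re\{\mathrm{tr}(\bP^H\bxi_{\bP})\}/\Re\{\mathrm{tr}(\bP^H\bP)\}$, and $\Re\{\mathrm{tr}(\bP^H\bP)\}=\mathrm{tr}(\bP^H\bP)=P$ recovers the same formula. I would then confirm that the constructed $\bxi_{\bP}-\hat{\lambda}\bP$ and $\hat{\lambda}\bP$ are $g_{\bP}$-orthogonal, so that by uniqueness of the orthogonal decomposition the expression is indeed the orthogonal projection.

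The computation itself is routine; the only point requiring care is the correct bookkeeping of the complex-valued trace against the real-part metric \eqref{RiemannianMetricOfProductManifold} — in particular recognizing that $\mathrm{tr}(\bP^H\bxi_{\bP})+\mathrm{tr}(\bxi_{\bP}^H\bP)=2\Re\{\mathrm{tr}(\bP^H\bxi_{\bP})\}$ and that taking the real part is harmless on the already real quantity $\mathrm{tr}(\bP^H\bP)$. I do not expect any genuine obstacle beyond this.
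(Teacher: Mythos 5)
Your proof is correct and follows essentially the same route as the paper's own argument in Appendix~\ref{App_proof_prop_projection_TPC}: read off the form $\bxi_{\bP}-\hat{\lambda}\bP$ from the one-dimensional normal space \eqref{Normal_Space_TPC} and the decomposition \eqref{xiDecomposition_TPC}, then solve the tangent-space condition \eqref{Tangent_Space_TPC} for $\hat{\lambda}$ using $\mathrm{tr}(\bP^H\bP)=P$. The alternative inner-product computation you mention is a harmless rephrasing of the same step.
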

\begin{proof}
	See \appref{App_proof_prop_projection_TPC} for the proof.
\end{proof}

\par Given the orthogonal projection $ \Pi_{T_{\hat{\bP}}\widehat{\cM}}^{T_{\bP}\cN}\left( \bxi_{\bP}\right) $, we derive the Riemannian gradient and Riemannian Hessian of $ f\big( \hat{\bP} \big)  $ subsequently. 

\begin{theorem}\label{Theo_Riemannian_Gradient_sub_TPC}
	The Riemannian gradient of $ f\big( \hat{\bP}\big)$ is 
	\begin{equation}\label{Riemannian_gradient_TPC}
		\begin{aligned}
			\mathrm{grad} f\left( \hat{\bP}\right) = \mathrm{grad} f \left( \bP\right) -\hat{\lambda}_1 \bP,
		\end{aligned}		  
	\end{equation}
	 where
	\begin{equation}\label{hatlambda1}
		\hat{\lambda}_1 =  \frac{1}{P}   \Re\left\lbrace  \mathrm{tr}\left(\bP^H\mathrm{grad} f \left( \bP\right)\right) \right\rbrace	 .
	\end{equation}
	The Riemannian Hessian of $  f \big( \hat{\bP} \big)  $ on $\widehat{\cM}$ is given by
	\begin{equation}\label{Riemannian_Hessian_all}
		\begin{aligned}
			\mathrm{Hess}f(\hat{\bP})[\bxi_{\hat{\bP}}]=&\left(\mathrm{Hess}f(\hat{\bP}_1)[\bxi_{\hat{\bP}_1}],\cdots,\mathrm{Hess}f(\hat{\bP}_U)[\bxi_{\hat{\bP}_U}]\right),
		\end{aligned}
	\end{equation}
	where
	\begin{equation}\label{hessian_TPC}
		\begin{aligned}
			&\mathrm{Hess}f(\hat{\bP}_i)[\bxi_{\hat{\bP}_i}]=\mathrm{Dgrad}f(\hat{\bP}_i)[\bxi_{\hat{\bP}_i}]-\hat{\lambda}_2\bP_i \\
			&=\mathrm{Dgrad}f(\bP_i)[\bxi_{\bP_i}] - \mathrm{D}\hat{\lambda}_1[\bxi_{\bP_i}]\bP_i-\hat{\lambda}_1 \bxi_{\bP_i}-\hat{\lambda}_2\bP_i ,
		\end{aligned}
	\end{equation}
	\begin{equation}\label{hessian_mu}
		\begin{aligned}
			\hat{\lambda}_2=\frac{1}{P}\Re\left\{\bP^H\mathrm{Dgrad}f(\hat{\bP})[\bxi_{\hat{\bP}}]\right\}
		\end{aligned}.
	\end{equation}
	\begin{proof}
		The proof and the details of Riemannian gradient and Riemannian Hessian in the TPC case are provided in \appref{App_Proof_Theo_TPC}.
	\end{proof}
\end{theorem}

From \thref{theo_submanifold_TPC}, $\widehat{\cM}$ is a sphere, whose retraction can be defined as
\begin{equation}\label{retraction_TPC}
	\begin{aligned}
		&R_{\hat{\bP}}\left( \bxi_{\hat{\bP} }\right): T_{\hat{\bP}}\widehat{\cM} \to \widehat{\cM} : \\
		&\bxi_{\hat{\bP} }  \mapsto  \frac{\sqrt{P}\left(\hat{\bP}+\bxi_{\hat{\bP}}\right)}{\sqrt{\mathrm{tr}\left(\left(\hat{\bP}+\bxi_{\hat{\bP}}\right)^H\left(\hat{\bP}+\bxi_{\hat{\bP}}\right)\right)}}=\hat{\gamma} \left(\hat{\bP}+\bxi_{\hat{\bP}}\right).
 	\end{aligned}	 
\end{equation}
In Riemannian submanifold, the vector transport $\cT_{\boldsymbol{\eta}_{\widehat{\bP}}}\left(\bxi_{\hat{\bP}}\right)$ can be achieved like \eqref{RawVectorTransport} by orthogonally projecting $\bxi_{\hat{\bP}}\in T_{\hat{\bP}}\widehat{\cM}$ onto $T_{R_{\hat{\bP}}\left(\boldsymbol{\eta}_{\hat{\bP}}\right)}\widehat{\cM}$. Let $\hat{ \bP}^{\mathrm{Re}}= R_{\hat{\bP}}\left(\boldsymbol{\eta}_{\hat{\bP}}\right)$, we have
\begin{equation}\label{VT_TPC}
	\begin{aligned}
		\cT_{\boldsymbol{\eta}_{\widehat{\bP}}}\left(\bxi_{\hat{\bP}}\right)=\Pi_{T_{\widehat{ \bP}^{\mathrm{Re}}}\widehat{\cM}}^{T_{\widehat{\bP}}\widehat{\cM}}\left( \bxi_{\hat{\bP}}\right),
	\end{aligned}
\end{equation}
where the projection can be derived in a similar way with \lmref{projection_submanifold_TPC}.

\subsubsection{PUPC}
\

 From \eqref{ker}, the tangent space $T_{\tilde{\bP}}{\widetilde{\cM}}$ is 
\begin{equation}
	\begin{aligned}
		T_{\tilde{\bP}}{\widetilde{\cM}}
		=&\left\{  \bxi_{\bP}\in T_{\bP}\cN\mid \mathrm{tr}\left(\bP_i^H\bxi_{\bP_i}+\bxi_{\bP_i}^H\bP_i\right)=0,\forall i\in \cU \right\}.
	\end{aligned}
\end{equation}
The normal space of $\widetilde{\cM}$ can be expressed as
\begin{equation}
	\begin{aligned}
		N_{\tilde{\bP}}\widetilde{\cM} 
		=&\left\{\bP\tilde{\bLambda}\mid \tilde{\bLambda}\in\widetilde{\mathcal{D}}\right\}		,
	\end{aligned}
\end{equation}
where $\widetilde{\mathcal{D}}=\left\{\mathrm{blkdiag}\left(\bD_1,\cdots,\bD_U\right)\mid \bD_i=a_i\bI_{d_i}, a_i\in\bbR  \right\}$ is the block diagonal matrix subset whose dimension is $U$. With the normal space, we can derive the orthogonal projection of a tangent vector from $T_{\bP}\cN$ to  $T_{\tilde{\bP}}\widetilde{\cM}$.
\begin{lemma}\label{projection_submanifold_PUPC}
	For any $ \bxi_{\bP} \in T_{\bP}\cN $,  the orthogonal projection $ \Pi_{T_{\widetilde{\bP}}\widetilde{\cM}}^{T_{\bP}\cN}\left( \bxi_{\bP}\right)$ is given by
	\begin{subequations}\label{key}
		\begin{align}
			\Pi_{T_{\widetilde{\bP}}\widetilde{\cM}}^{T_{\bP}\cN}\left( \bxi_{\bP}\right) = \bxi_{\bP} -  \bP\tilde{\bLambda},\\
			\big[\tilde{\bLambda}\big]_i =  \frac{1}{P_i} \Re\left\lbrace  \bP_i^H\bxi_{\bP_i} \right\rbrace\bI_{d_i}.
		\end{align}				
	\end{subequations}
	
\end{lemma}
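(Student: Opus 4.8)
The plan is to mirror the argument used for \lmref{projection_submanifold_TPC}, exploiting the fact that $\widetilde{\cM}$ is an oblique manifold composed of $U$ spheres, so that the projection decouples across users. First I would invoke the orthogonal direct-sum decomposition $T_{\bP}\cM = T_{\tilde{\bP}}\widetilde{\cM}\oplus N_{\tilde{\bP}}\widetilde{\cM}$, which guarantees that every $\bxi_{\bP}\in T_{\bP}\cM$ splits uniquely into a tangential and a normal part. Since the normal space has already been identified as $N_{\tilde{\bP}}\widetilde{\cM}=\{\bP\tilde{\bLambda}\mid\tilde{\bLambda}\in\widetilde{\cD}\}$ with $\tilde{\bLambda}=\mathrm{blkdiag}(a_1\bI_{d_1},\cdots,a_U\bI_{d_U})$ and $a_i\in\bbR$, I can write the normal component as $\bP\tilde{\bLambda}$ for an as-yet-undetermined $\tilde{\bLambda}$, so that $\Pi_{T_{\tilde{\bP}}\widetilde{\cM}}^{T_{\bP}\cM}(\bxi_{\bP})=\bxi_{\bP}-\bP\tilde{\bLambda}$. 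It then remains only to pin down $\tilde{\bLambda}$.

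Because the product metric \eqref{RiemannianProductMetric} is a direct sum, the whole problem decouples into $U$ independent per-user problems. Writing the $i$-th component of $\bP\tilde{\bLambda}$ as $a_i\bP_i$, the tangential part in block $i$ is $\bxi_{\bP_i}-a_i\bP_i$. I would then impose the defining condition of $T_{\tilde{\bP}}\widetilde{\cM}$ on each block, namely $\Re\{\mathrm{tr}(\bP_i^H(\bxi_{\bP_i}-a_i\bP_i))\}=0$, which is exactly the requirement that the tangential part lie in the kernel of $\mathrm{D}F_2$. Solving this scalar equation and using the constraint $\mathrm{tr}(\bP_i^H\bP_i)=P_i$ that holds on $\widetilde{\cM}$ yields $a_i=\frac{1}{P_i}\Re\{\mathrm{tr}(\bP_i^H\bxi_{\bP_i})\}$, and hence $[\tilde{\bLambda}]_i=a_i\bI_{d_i}$, which is the claimed expression.

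The only point requiring care---and the closest thing to an obstacle---is the bookkeeping of the block-diagonal structure: one must check that the real scalar $a_i$ produced by the orthogonality condition is consistent with the prescribed form $\bD_i=a_i\bI_{d_i}$ of the blocks of $\widetilde{\cD}$, i.e. that forcing tangency indeed returns the normal part to $N_{\tilde{\bP}}\widetilde{\cM}$ rather than some larger space. This holds precisely because the per-user constraint $\mathrm{tr}(\bP_i^H\bP_i)=P_i$ is a single real equation, so its gradient direction $\bP_i$ spans a one-dimensional normal direction within block $i$, matching the single real degree of freedom $a_i$. Uniqueness of the decomposition then follows from the orthogonal direct sum, completing the proof. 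In effect this is $U$ copies of the sphere projection of \lmref{projection_submanifold_TPC}, with the global power $P$ replaced by the per-user power $P_i$.
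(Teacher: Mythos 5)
Your proposal is correct and follows exactly the route the paper intends: the paper omits this proof, stating only that it is analogous to the TPC case in \appref{App_proof_prop_projection_TPC}, and your argument is precisely that analogue carried out blockwise — write the normal part as $\bP\tilde{\bLambda}$ with $\tilde{\bLambda}\in\widetilde{\cD}$, impose the per-user tangency condition, and solve for $a_i$ using $\mathrm{tr}\left(\bP_i^H\bP_i\right)=P_i$. Your explicit check that the single real degree of freedom $a_i$ per block matches the one-dimensional normal direction spanned by $\bP_i$ is a welcome (if minor) addition beyond what the paper spells out.
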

\begin{proof}
	The proof is similar with \appref{App_proof_prop_projection_TPC} and thus omitted for brevity.
\end{proof}

The Riemannian gradient and Hessian in $T_{\tilde{\bP}}\widetilde{\cM}$ can be obtained by projecting the Riemannian gradient and Hessian in $T_{\bP}\cN$ onto $T_{\tilde{\bP}}\widetilde{\cM}$.
\begin{theorem} \label{Theo_Riemannian_Gradient_sub_PUPC}
	\par The Riemannian gradient of $ f\big( \tilde{\bP}\big)$ is 
	\begin{equation}\label{Riemannian_gradient_PUPC}
		\begin{aligned}
			 \mathrm{grad} f\big( \tilde{\bP}\big) = \mathrm{grad} f \left( \bP\right) -\bP\tilde{\bLambda}_1 , 
		\end{aligned}
	\end{equation}
	 where
	\begin{equation}\label{tildeLambda1}
		\big[\tilde{\bLambda}_1\big]_i =   \frac{1}{P_i} \Re\Big\{\mathrm{tr}\left(  \bP_i^H\mathrm{grad} f\left( \bP_i\right)\right)\Big\} \bI_{d_i} 	 .
	\end{equation}

	The Riemannian Hessian of $  f \big( \tilde{\bP} \big)  $ is
	\begin{equation}\label{hessian_PUPC}
		\begin{aligned}
		    &\mathrm{Hess}f(\tilde{\bP})[\bxi_{\bP}]=\left(\right.\mathrm{Hess}f(\tilde{\bP}_1)[\bxi_{\tilde{\bP}_1}],\cdots,\mathrm{Hess}f(\tilde{\bP}_U)[\bxi_{\tilde{\bP}_U}]\left.\right)\\
		    &=\mathrm{Dgrad}f(\bP)[\bxi_{\bP}] - \bP\mathrm{D}\tilde{\bLambda}_1[\bxi_{\bP}]- \bxi_{\bP}\tilde{\bLambda}_1-\bP\tilde{\bLambda}_2,
		\end{aligned}
	\end{equation}
	where
	\begin{equation}\label{hessian_mu}
		\begin{aligned}
			\big[\tilde{\bLambda}_2\big]_i=\frac{1}{P_i}\Re\left\{\mathrm{tr}\left(\bP^H\left(\mathrm{Dgrad}f(\tilde{\bP})[\bxi_{\tilde{\bP}}]\right)\right)\right\}\bI_{d_i}.
		\end{aligned}
	\end{equation}
	\begin{proof}
		The proof of the Riemannian gradient is similar with \appref{App_Proof_Theo_TPC} and thus omitted for brevity. The proof and  details of the Riemannian Hessian are provided in \appref{App_Proof_Theo_Riemannian_Hessian_sub_PUPC}.
	\end{proof}
\end{theorem}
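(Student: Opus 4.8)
\emph{Riemannian gradient.} The plan is to use the standard fact that, on an embedded Riemannian submanifold, the Riemannian gradient equals the orthogonal projection of the ambient Euclidean gradient onto the submanifold's tangent space. Starting from the defining identity \eqref{RawGradient} for $\mathrm{grad}f(\tilde{\bP})$, the induced metric \eqref{RiemannianMetricSubmanifold}, and the orthogonal decomposition \eqref{Projection_DirectSum}, one verifies for every $\bxi_{\tilde{\bP}}\in T_{\tilde{\bP}}\widetilde{\cM}$ that $g_{\tilde{\bP}}\!\left(\mathrm{grad}f(\bP),\bxi_{\tilde{\bP}}\right)=\mathrm{D}f(\tilde{\bP})\!\left[\bxi_{\tilde{\bP}}\right]$, the normal component of $\mathrm{grad}f(\bP)$ being annihilated against tangent vectors; hence $\mathrm{grad}f(\tilde{\bP})=\Pi_{T_{\tilde{\bP}}\widetilde{\cM}}^{T_{\bP}\cM}\!\left(\mathrm{grad}f(\bP)\right)$, exactly as in the TPC derivation of \appref{App_Proof_Theo_Riemannian_Gradient_sub_TPC}. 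Substituting the Euclidean gradient \eqref{Euclidean_Gradient_all}--\eqref{Euclidean_Gradient_single} for the generic tangent vector in the projection formula of \lmref{projection_submanifold_PUPC} then yields \eqref{Riemannian_gradient_PUPC} together with the block-diagonal multiplier $\tilde{\bLambda}_1$ of \eqref{tildeLambda1}.

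\emph{Riemannian Hessian.} For the Hessian I would begin from its definition \eqref{RiemannianConnection} as the Riemannian connection acting on the gradient field, $\mathrm{Hess}f(\tilde{\bP})[\bxi_{\tilde{\bP}}]=\nabla_{\bxi_{\tilde{\bP}}}^{\widetilde{\cM}}\mathrm{grad}f$. Because $\widetilde{\cM}$ is an embedded submanifold of the vector space $\cM$, the connection is obtained by differentiating the gradient field in the ambient space and projecting the result back onto the tangent space, i.e. $\mathrm{Hess}f(\tilde{\bP})[\bxi_{\tilde{\bP}}]=\Pi_{T_{\tilde{\bP}}\widetilde{\cM}}^{T_{\bP}\cM}\!\left(\mathrm{Dgrad}f(\tilde{\bP})[\bxi_{\tilde{\bP}}]\right)$. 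Applying the product rule to the gradient field $\mathrm{grad}f(\tilde{\bP})=\mathrm{grad}f(\bP)-\bP\tilde{\bLambda}_1$ produces the three ambient terms $\mathrm{Dgrad}f(\bP)[\bxi_{\bP}]-\bxi_{\bP}\tilde{\bLambda}_1-\bP\mathrm{D}\tilde{\bLambda}_1[\bxi_{\bP}]$, and a final application of \lmref{projection_submanifold_PUPC} to project this quantity onto $T_{\tilde{\bP}}\widetilde{\cM}$ removes a single normal term $\bP\tilde{\bLambda}_2$, with $\tilde{\bLambda}_2\in\widetilde{\mathcal{D}}$ block-diagonal; this reproduces \eqref{hessian_PUPC}. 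The detailed computation is deferred to \appref{App_Proof_Theo_Riemannian_Hessian_sub_PUPC}.

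\emph{Main obstacle.} The submanifold-geometry steps above are routine; the real effort lies in the ambient directional derivative $\mathrm{Dgrad}f(\bP)[\bxi_{\bP}]$ of the Euclidean gradient \eqref{Euclidean_Gradient_single}. Since each component gradient is assembled from the auxiliary matrices $\bA_i$, $\bB_i$ and $\bC_i$ of \eqref{A_i}, \eqref{B_i}, all of which carry the inverses $\bR_i^{-1}$ and $\bC_i$, differentiating requires repeated use of $\mathrm{D}\!\left(\bX^{-1}\right)[\cdot]=-\bX^{-1}\!\left(\mathrm{D}\bX[\cdot]\right)\bX^{-1}$ and careful tracking of how the perturbations $\bxi_{\bP_\ell}$ of the interfering users enter through the covariance terms $\bR_i$. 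A secondary bookkeeping point is confirming that both $\mathrm{D}\tilde{\bLambda}_1[\bxi_{\bP}]$ and the normal multiplier $\tilde{\bLambda}_2$ retain the block-diagonal structure of $\widetilde{\mathcal{D}}$ across the $U$ component spheres, which is precisely what makes the projected expression land in $T_{\tilde{\bP}}\widetilde{\cM}$ and delivers the decoupled per-user form displayed in \eqref{hessian_PUPC}.
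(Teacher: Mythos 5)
Your proposal matches the paper's own argument: the gradient is obtained by projecting the Euclidean gradient via \lmref{projection_submanifold_PUPC} (mirroring \appref{App_Proof_Theo_Riemannian_Gradient_sub_TPC}), and the Hessian follows from $\mathrm{Hess}f(\tilde{\bP})[\bxi_{\tilde{\bP}}]=\Pi_{T_{\tilde{\bP}}\widetilde{\cM}}^{T_{\bP}\cM}\left(\mathrm{Dgrad}f(\tilde{\bP})[\bxi_{\tilde{\bP}}]\right)$ with the product rule on $\mathrm{grad}f(\bP)-\bP\tilde{\bLambda}_1$ and the orthogonality condition determining $\tilde{\bLambda}_2$, exactly as in \appref{App_Proof_Theo_Riemannian_Hessian_sub_PUPC}. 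You also correctly locate the remaining labor in the ambient derivative $\mathrm{Dgrad}f(\bP)[\bxi_{\bP}]$, which the paper carries out in \appref{App_Proof_Theo_Riemannian_Hessian_sub_TPC}.
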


From \thref{theo_submanifold_TPC}, $\widetilde{\cM}$ is a product of $U$ spheres called oblique manifold, whose retraction can be defined by scaling \cite{ObliqueRetraction}.  Define
\begin{equation}\label{tildegamma}
	\begin{aligned}
		\tilde{\gamma}_i &=  \frac{\sqrt{P_i}}{\sqrt{\mathrm{tr}\left(\left(\tilde{\bP}_i+\bxi_{\tilde{\bP}_i}\right)^H\left(\tilde{\bP}_i+\bxi_{\tilde{\bP}_i}\right)\right)}},
	\end{aligned}
\end{equation}
and let $\tilde{\bGamma}_i=\tilde{\gamma}_i\bI_{d_{i}}$ and $\tilde{\bGamma}=\mathrm{blkdiag}\left(  \tilde{\bGamma}_1,\cdots,\tilde{\bGamma}_U \right)\in \widetilde{\mathcal{D}}$. For any $ \tilde{\bP} \in \widetilde{\cM} $ and $ \bxi_{\tilde{\bP} } \in T_{\tilde{\bP}}\widetilde{\cM} $, the mapping
\begin{equation}\label{retraction_PUPC}
	R_{\tilde{\bP}}\left( \bxi_{\tilde{\bP} }\right): T_{\tilde{\bP}}\widetilde{\cM} \to \widetilde{\cM} :  \bxi_{\tilde{\bP} }  \mapsto  \left(\tilde{\bP}+\bxi_{\tilde{\bP}}\right)\tilde{\bGamma}
\end{equation}
is a retraction for $ \widetilde{\cM} $ at $ \tilde{\bP} $.

Similarly, the vector transport $\cT_{\boldsymbol{\eta}_{\tilde{\bP}}}\left(\bxi_{\tilde{\bP}}\right)$ can be achieved by  projecting $\bxi_{\tilde{\bP}}\in T_{\tilde{\bP}}\widetilde{\cM}$ onto $T_{R_{\tilde{\bP}}\left(\boldsymbol{\eta}_{\tilde{\bP}}\right)}\widetilde{\cM}$ orthogonally. Let $\tilde{\bP}^{\mathrm{Re}}$ represent $R_{\tilde{\bP}}\left(\boldsymbol{\eta}_{\tilde{\bP}}\right)$, we have
\begin{equation}\label{VT_PUPC}
	\begin{aligned}
		\cT_{\boldsymbol{\eta}_{\widetilde{\bP}}}\left(\bxi_{\tilde{\bP}}\right)=\Pi_{T_{\widetilde{\bP}^{\mathrm{Re}}}\widetilde{\cM}}^{T_{\widetilde{\bP}}\widetilde{\cM}}\left( \bxi_{\tilde{\bP}}\right),
	\end{aligned}
\end{equation}
where the orthogonal projection operator can be obtained in a similar way with \lmref{projection_submanifold_PUPC}.

\subsubsection{PAPC}
\

From \eqref{ker}, the tangent space $T_{\bar{\bP}}{\overline{\cM}}$ is 
\begin{equation}
	\begin{aligned}
		T_{\bar{\bP}}{\overline{\cM}}
		&=\left\{  \bxi_{\bP}\in T_{\bP}\cN\mid \bI_{M_t}\odot \left(\bP\bxi_{\bP}^H+\bxi_{\bP}\bP^H\right)=0 \right\}.
	\end{aligned}
\end{equation}
The normal space of $\overline{\cM}$ is given by
\begin{equation}
	\begin{aligned}
		N_{\bar{\bP}}\overline{\cM} 
		 =&\left\{\bar{\bLambda}\bP\mid \bar{\bLambda}\in\overline{\mathcal{D}}\right\},
	\end{aligned}
\end{equation}
where $\overline{\mathcal{D}}=\left\{\mathrm{diag}\left( \bw \right) \mid \bw\in\mathbb{R}^{M_t}\right\}$ is the diagonal matrix set.
Similarly, we can derive $\Pi_{T_{\overline{\bP}}\overline{\cM}}^{T_{\bP}\cN}\left( \bxi_{\bP}\right)$ with the closed-form normal space as follows.

\begin{lemma}\label{projection_submanifold_PAPC}
	For any $ \bxi_{\bP} \in T_{\bP}\cN $,  the orthogonal projection $ \Pi_{T_{\bar{\bP}}\overline{\cM}}^{T_{\bP}\cN}\left( \bxi_{\bP}\right) $ is given by
	\begin{subequations}\label{key}
		\begin{align}
			\Pi_{T_{\overline{\bP}}\overline{\cM}}^{T_{\bP}\cN}\left( \bxi_{\bP}\right)= \bxi_{\bP} - \bar{\bLambda} \bP,\\
			\bar{\bLambda} =  \frac{M_t}{P} \bI_{M_t} \odot  \Re\left\lbrace  \bP\bxi_{\bP}^H \right\rbrace.
		\end{align}		
	\end{subequations}
\end{lemma}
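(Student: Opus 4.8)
The plan is to exploit the orthogonal decomposition $T_{\bP}\cM = T_{\bar{\bP}}\overline{\cM}\oplus N_{\bar{\bP}}\overline{\cM}$ from \eqref{RawComposition} together with the closed-form normal space $N_{\bar{\bP}}\overline{\cM}=\left\{\bar{\bLambda}\bP\mid\bar{\bLambda}\in\overline{\mathcal{D}}\right\}$ just derived, where $\overline{\mathcal{D}}$ is the set of real diagonal matrices $\mathrm{diag}(\bw)$. Since the decomposition is a genuine direct sum (guaranteed by the submanifold structure of \thref{theo_submanifold_TPC}), every $\bxi_{\bP}\in T_{\bP}\cM$ splits uniquely as $\bxi_{\bP}=\bxi_{\bP}^{t}+\bar{\bLambda}\bP$ with $\bxi_{\bP}^{t}\in T_{\bar{\bP}}\overline{\cM}$ and $\bar{\bLambda}$ real diagonal; the tangential part is then $\Pi_{T_{\bar{\bP}}\overline{\cM}}^{T_{\bP}\cM}\left(\bxi_{\bP}\right)=\bxi_{\bP}-\bar{\bLambda}\bP$, so the entire task reduces to pinning down the unique $\bar{\bLambda}\in\overline{\mathcal{D}}$ for which $\bxi_{\bP}-\bar{\bLambda}\bP$ lies in $\mathrm{ker}\left(\mathrm{D}F_3(\bP)\right)$.

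First I would impose the tangency condition on $\bxi_{\bP}^{t}=\bxi_{\bP}-\bar{\bLambda}\bP$, namely
\begin{equation*}
\bI_{M_t}\odot\left(\bP(\bxi_{\bP}-\bar{\bLambda}\bP)^H + (\bxi_{\bP}-\bar{\bLambda}\bP)\bP^H\right) = 0.
\end{equation*}
Using that $\bar{\bLambda}$ is real diagonal, hence $\bar{\bLambda}^H=\bar{\bLambda}$, expanding separates a data contribution $\bI_{M_t}\odot\left(\bP\bxi_{\bP}^H+\bxi_{\bP}\bP^H\right)$ from a normal contribution $\bI_{M_t}\odot\left(\bP\bP^H\bar{\bLambda}+\bar{\bLambda}\bP\bP^H\right)$, and the condition becomes the equality of these two.

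The key simplification is reading $\bI_{M_t}\odot(\cdot)$ as extraction of the diagonal. For the data term, the $j$-th diagonal entry of $\bxi_{\bP}\bP^H$ is the complex conjugate of that of $\bP\bxi_{\bP}^H$, so their sum collapses to $2\,\bI_{M_t}\odot\Re\left\{\bP\bxi_{\bP}^H\right\}$. For the normal term, because $\bar{\bLambda}$ is diagonal the $j$-th diagonal entry of both $\bP\bP^H\bar{\bLambda}$ and $\bar{\bLambda}\bP\bP^H$ equals $\left[\bP\bP^H\right]_{jj}w_j$; here I invoke the defining PAPC constraint $\bI_{M_t}\odot\left(\bP\bP^H\right)=\frac{P}{M_t}\bI_{M_t}$, i.e.\ $\left[\bP\bP^H\right]_{jj}=\frac{P}{M_t}$ for all $j$, to reduce the normal contribution to $\frac{2P}{M_t}\bar{\bLambda}$. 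Equating the two sides yields $2\,\bI_{M_t}\odot\Re\left\{\bP\bxi_{\bP}^H\right\}=\frac{2P}{M_t}\bar{\bLambda}$, whence $\bar{\bLambda}=\frac{M_t}{P}\,\bI_{M_t}\odot\Re\left\{\bP\bxi_{\bP}^H\right\}$, exactly the claimed expression.

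The main obstacle is less the algebra than the bookkeeping around the Hadamard-with-identity operator: one must track carefully that $\bI_{M_t}\odot(\cdot)$ retains only diagonal entries, and that it is precisely the PAPC constraint $\left[\bP\bP^H\right]_{jj}=P/M_t$, rather than any generic property of $\bP$, that renders the normal contribution proportional to $\bar{\bLambda}$ itself, so that the resulting linear system in $\bw$ decouples and is solvable in closed form. Uniqueness of $\bar{\bLambda}$, and hence of the projection, follows from the splitting being a direct sum; equivalently, one verifies directly that with the computed $\bar{\bLambda}$ the residual $\bxi_{\bP}-\bar{\bLambda}\bP$ satisfies the tangency condition while $\bar{\bLambda}\bP\in N_{\bar{\bP}}\overline{\cM}$ by construction, which is the defining characterization of the orthogonal projection onto $T_{\bar{\bP}}\overline{\cM}$.
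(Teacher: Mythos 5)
Your proposal is correct and follows essentially the same route as the paper: the paper omits the PAPC proof as "similar" to the TPC proof in \appref{App_proof_prop_projection_TPC}, which is precisely your strategy of writing the projection as $\bxi_{\bP}-\bar{\bLambda}\bP$ via the normal-space characterization and then solving for $\bar{\bLambda}$ from the tangency condition $\bI_{M_t}\odot\left(\bP\bxi_{\bar{\bP}}^H+\bxi_{\bar{\bP}}\bP^H\right)=0$. Your write-up correctly supplies the PAPC-specific details the paper leaves implicit, in particular that the constraint $\left[\bP\bP^H\right]_{jj}=P/M_t$ is what decouples the diagonal system and yields $\bar{\bLambda}=\frac{M_t}{P}\,\bI_{M_t}\odot\Re\left\{\bP\bxi_{\bP}^H\right\}$.
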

\begin{proof}
	The proof is similar with \appref{App_proof_prop_projection_TPC} and thus omitted for brevity.
\end{proof}
With $\Pi_{T_{\overline{\bP}}\overline{\cM}}^{T_{\bP}\cN}\left( \cdot\right)$, the Riemannian gradient and Hessian in  $T_{\bar{\bP}}\overline{\cM}$ are provided in the following theorem.
\begin{theorem} \label{Theo_Riemannian_Gradient_sub_PAPC}
	\par The Riemannian gradient of $f\left( \bar{\bP}\right)$ is 
	\begin{equation}\label{Riemannian_gradient_PAPC}
		\begin{aligned}
			\mathrm{grad} f\left( \bar{\bP}\right) = \mathrm{grad} f \left( \bP\right) -\bar{\bLambda}_1 \bP,
		\end{aligned}
	\end{equation}
	 where
	\begin{equation}\label{barLambda1}
		\bar{\bLambda}_1 =  \frac{M_t}{P} \bI_{M_t} \odot  \Re \left\lbrace   \bP\left( \mathrm{grad} f\left( \bP\right)\right) ^H \right\rbrace. 	 
	\end{equation}

	The Riemannian Hessian of $  f \left( \bar{\bP} \right)  $ is
	\begin{equation}\label{Riemannian_Hessian_all}
		\begin{aligned}
			\mathrm{Hess}f(\bar{\bP})[\bxi_{\bar{\bP}}]=&\left(\mathrm{Hess}f(\bar{\bP}_1)[\bxi_{\bar{\bP}_1}],\cdots,\mathrm{Hess}f(\bar{\bP}_U)[\bxi_{\bar{\bP}_U}]\right),
		\end{aligned}
	\end{equation}
	where
	\begin{equation}\label{hessian_PAPC}
		\begin{aligned}
			&\mathrm{Hess}f(\bar{\bP}_i)[\bxi_{\bar{\bP}_i}]=\\
			&\mathrm{Dgrad}f(\bP_i)[\bxi_{\bP_i}]- \mathrm{D}\bar{\bLambda}_1[\bxi_{\bP_i}]\bP_i-\bar{\bLambda}_1 \bxi_{\bP_i}-\bar{\bLambda}_2\bP_i,
		\end{aligned}
	\end{equation}
	\begin{equation}\label{hessian_mu}
		\begin{aligned}
			\bar{\bLambda}_2=\frac{M_t}{P}\bI_{M_t}\odot\sum_{\ell=1}^{K}\Re\left\{\bP_{\ell}\left(\mathrm{Dgrad}f(\bar{\bP}_{\ell})[\bxi_{\bar{\bP}_{\ell}}]\right)^H\right\}.
		\end{aligned}
	\end{equation}
	\begin{proof}
		The proof of the Riemannian gradient is similar with \appref{App_Proof_Theo_TPC} and thus omitted for brevity. The proof and details of the Riemannian Hessian are provided in \appref{App_Proof_Theo_Riemannian_Hessian_sub_PAPC}.
	\end{proof}
\end{theorem}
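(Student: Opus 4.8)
The plan is to obtain both Riemannian objects as orthogonal projections of their ambient (Euclidean) counterparts onto $T_{\bar{\bP}}\overline{\cM}$, exactly as in the TPC case, and then to specialize the projection formula of \lmref{projection_submanifold_PAPC} to the relevant vectors. For the Riemannian gradient I would start from the defining relation \eqref{RawGradient}: since the metric on $\overline{\cM}$ is induced from $\cM$ as in \eqref{RiemannianMetricSubmanifold}, the unique tangent vector representing $\mathrm{D}f(\bar{\bP})[\cdot]$ is precisely the orthogonal projection of the product-manifold gradient $\mathrm{grad}f(\bP)$ onto $T_{\bar{\bP}}\overline{\cM}$, i.e. $\mathrm{grad}f(\bar{\bP}) = \Pi_{T_{\bar{\bP}}\overline{\cM}}^{T_{\bP}\cM}(\mathrm{grad}f(\bP))$. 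Substituting $\bxi_{\bP} = \mathrm{grad}f(\bP)$ into \lmref{projection_submanifold_PAPC} then yields \eqref{Riemannian_gradient_PAPC}--\eqref{barLambda1} at once; this is why the argument mirrors \appref{App_Proof_Theo_Riemannian_Gradient_sub_TPC}.

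For the Riemannian Hessian the route is through the Riemannian connection. By definition \eqref{RiemannianConnection}, $\mathrm{Hess}f(\bar{\bP})[\bxi_{\bar{\bP}}] = \nabla_{\bxi_{\bar{\bP}}}^{\overline{\cM}}\mathrm{grad}f$, and since $\overline{\cM}$ is an embedded submanifold of the vector space $\cM$, \cite[Proposition 5.3.1]{Absil2009} identifies this with the projection of the ambient directional derivative of the Riemannian-gradient vector field, $\mathrm{Hess}f(\bar{\bP})[\bxi_{\bar{\bP}}] = \Pi_{T_{\bar{\bP}}\overline{\cM}}^{T_{\bP}\cM}(\mathrm{D}(\mathrm{grad}f(\bar{\bP}))[\bxi_{\bar{\bP}}])$. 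The next step is to differentiate the field $\mathrm{grad}f(\bar{\bP}) = \mathrm{grad}f(\bP) - \bar{\bLambda}_1\bP$ along $\bxi_{\bP}$. Because $\bar{\bLambda}_1$ is itself a real, diagonal matrix-valued function of $\bP$, the product rule produces three terms, $\mathrm{Dgrad}f(\bP)[\bxi_{\bP}] - \mathrm{D}\bar{\bLambda}_1[\bxi_{\bP}]\bP - \bar{\bLambda}_1\bxi_{\bP}$, which is precisely what I denote $\mathrm{Dgrad}f(\bar{\bP})[\bxi_{\bar{\bP}}]$. Applying \lmref{projection_submanifold_PAPC} once more to this combined vector strips off its normal component $\bar{\bLambda}_2\bP$, with $\bar{\bLambda}_2 = \frac{M_t}{P}\bI_{M_t}\odot\Re\{\bP(\mathrm{Dgrad}f(\bar{\bP})[\bxi_{\bar{\bP}}])^H\}$; reading this projection blockwise against $\bP = (\bP_1,\cdots,\bP_U)$ gives the per-component expression \eqref{hessian_PAPC} and the summed form of $\bar{\bLambda}_2$ in \eqref{hessian_mu}. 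Here it is worth noting that the diagonal structure of $\bar{\bLambda}_1$ keeps $\mathrm{D}\bar{\bLambda}_1[\bxi_{\bP}]\bP$ inside the normal bundle, so that it is carried along untouched rather than cancelled, which is exactly why it survives in the final formula.

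The routine but laborious part --- and the real obstacle --- is producing the explicit Euclidean Hessian $\mathrm{Dgrad}f(\bP)[\bxi_{\bP}]$ together with $\mathrm{D}\bar{\bLambda}_1[\bxi_{\bP}]$. The Euclidean gradient \eqref{Euclidean_Gradient_single} is assembled from $\bA_i$, $\bB_i$, $\bC_i$ (see \eqref{A_i}, \eqref{B_i}), each depending on $\bP$ both through $\bR_i$ and through the matrix inverse inside $\bC_i$; differentiating them along $\bxi_{\bP}$ requires careful matrix calculus, using $\mathrm{D}(\bX^{-1})[\cdot] = -\bX^{-1}(\mathrm{D}\bX[\cdot])\bX^{-1}$ and the chain rule through $\bR_i = \sigma_z^2\bI + \sum_{\ell\neq i}\bH_i\bP_{\ell}\bP_{\ell}^H\bH_i^H$, while tracking the cross-user coupling between the desired-signal and interference terms. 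Once that Euclidean directional derivative is in hand, the remaining manifold-level steps are purely the two projection applications above; I expect no structural difficulty there, only the bookkeeping needed to confirm that the real-part and Hadamard operations commute with the blockwise splitting, so that the normal correction collapses to the stated $\bar{\bLambda}_2\bP$.
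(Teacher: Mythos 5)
Your proposal follows essentially the same route as the paper: the gradient is obtained by projecting the Euclidean gradient onto $T_{\bar{\bP}}\overline{\cM}$ via \lmref{projection_submanifold_PAPC}, and the Hessian via the Riemannian connection as the projection of $\mathrm{Dgrad}f(\bar{\bP})[\bxi_{\bar{\bP}}]=\mathrm{Dgrad}f(\bP)[\bxi_{\bP}]-\mathrm{D}\bar{\bLambda}_1[\bxi_{\bP}]\bP-\bar{\bLambda}_1\bxi_{\bP}$, with $\bar{\bLambda}_2\bP$ as the stripped normal component and the explicit Euclidean derivatives computed exactly as in \appref{App_Proof_Theo_Riemannian_Hessian_sub_TPC} and \appref{App_Proof_Theo_Riemannian_Hessian_sub_PAPC}. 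The only (inconsequential) slip is your closing remark: since $\mathrm{D}\bar{\bLambda}_1[\bxi_{\bP}]$ is real diagonal, the term $\mathrm{D}\bar{\bLambda}_1[\bxi_{\bP}]\bP$ lies in $N_{\bar{\bP}}\overline{\cM}$ and is therefore absorbed (cancelled) by the corresponding piece of $\bar{\bLambda}_2\bP$ rather than "surviving"; it appears in \eqref{hessian_PAPC} only because the projection is written as $\bxi-\bar{\bLambda}_2\bP$ with $\bar{\bLambda}_2$ evaluated on the full vector.
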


Like PUPC, the retraction here can also be defined by scaling \cite{ObliqueRetraction}.  Let 
\begin{equation}\label{bargamma}
	\begin{aligned}
		\bar{\bGamma} &=  \left(  \frac{M_t}{P} \bI_{M_t} \odot \left(\bar{\bP} + \bxi_{\bar{\bP} } \right)  \left(\bar{\bP} + \bxi_{\bar{\bP} } \right)^H \right)^{-\frac{1}{2}}. 
	\end{aligned}
\end{equation}
For any $ \bar{\bP} \in \overline{\cM} $ and $ \bxi_{\bar{\bP} } \in T_{\bar{\bP}}\overline{\cM} $, the mapping
\begin{equation}\label{retraction_PAPC}
	R_{\bar{\bP}}\left( \bxi_{\bar{\bP} }\right): T_{\bar{\bP}}\overline{\cM} \to \overline{\cM} :  \bxi_{\bar{\bP} }  \mapsto  \bar{\bGamma}\left(\bar{\bP}+\bxi_{\bar{\bP}}\right)
\end{equation}
is a retraction for $ \overline{\cM} $ at $ \bar{\bP} $.

The vector transport $\cT_{\boldsymbol{\eta}_{\bar{\bP}}}\left(\bxi_{\bar{\bP}}\right)$ likewise can be achieved by orthogonally projecting $\bxi_{\bar{\bP}}\in T_{\bar{\bP}}\overline{\cM}$ onto $T_{R_{\bar{\bP}}\left(\boldsymbol{\eta}_{\bar{\bP}}\right)}\overline{\cM}$. Let $\bar{\bP}^{\mathrm{Re}}$ denote $R_{\bar{\bP}}\left(\boldsymbol{\eta}_{\bar{\bP}}\right)$, we have
\begin{equation}
	\begin{aligned}
		\cT_{\boldsymbol{\eta}_{\overline{\bP}}}\left(\bxi_{\bar{\bP}}\right)=\Pi_{T_{\overline{\bP}^{\mathrm{Re}}}\overline{\cM}}^{T_{\overline{\bP}}\overline{\cM}}\left( \bxi_{\bar{\bP}}\right),
	\end{aligned}
\end{equation}
which can be derived in a similar way with \lmref{projection_submanifold_PAPC}. 
\section{Riemannian Methods for Precoder Design}\label{RiemannianAlgorithms}
With the Riemannian ingredients derived in \secref{Precoding}, we propose three precoder design methods using the RSD, RCG and RTR in this section. There is no inverse of the large dimensional matrix in the proposed Riemannian methods during the iterations, thereby enabling significant savings in computational resources. For the same power constraint, the computational complexities of the RSD or RCG method are nearly the same and  are lower than those of the RTR and comparable methods.
\subsection{Riemannian Gradient Methods}  

The gradient descent method is one of the most well-known and efficient line search methods in Euclidean space for unconstrained problems. For the Riemannian manifold, we update the point through retraction to ensure the updated point is still on the manifold and preserve the search direction. For notational clarity, we use the superscript $k$ to stand for the outer iteration. From \eqref{retraction_TPC}, \eqref{retraction_PUPC} and \eqref{retraction_PAPC}, the update formula on $\widehat{\cM}$, $\widetilde{\cM }$ and $\overline{\cM}$ are given, respectively, by
\begin{subequations}\label{Retraction}
	\begin{align}
		\hat{\bP}_{i}^{k+1}&=\hat{\gamma}^k\left(\hat{\bP}_i^k+\alpha^k\hat{\boldsymbol{\eta}}_i^k\right),\\
		\tilde{\bP}_{i}^{k+1}&=\tilde{\gamma}_i^k\left(\tilde{\bP}_{i}^k+\alpha^k\tilde{\boldsymbol{\eta}}_i^k\right),\\
		\bar{\bP}_{i}^{k+1}&=\bar{\bGamma}^k\left(\bar{\bP}_i^k+\alpha^k\bar{\boldsymbol{\eta}}_i^k\right),
	\end{align}
\end{subequations}
where $\alpha^k$ is the step length and $\hat{\boldsymbol{\eta}}_i^{k}$, $\tilde{\boldsymbol{\eta}}_i^{k}$ and $\bar{\boldsymbol{\eta}}_i^{k}$ are the search directions of user $i$. With \eqref{Retraction} and the Riemannian gradient, the RSD method is available but converges slowly. The RCG method  provides a remedy to this drawback by modifying the search direction, which calls for the sum of the current Riemannian gradient and the previous search direction \cite{RCG1,RCG2}. To add the elements in different tangent spaces, vector transport defined in \eqref{RawVectorTransport} is utilized. We use $\cM$ to represent $\widehat{\cM}$, $\widetilde{\cM}$ or $\overline{\cM}$. The search direction of the RCG method in the $k$-th iteration is
\begin{equation}\label{UpdateSearchDirection}
	\begin{aligned}
 \boldsymbol{\eta}^k=-\mathrm{grad}f\left(\bP^k\right)+\beta^k\cT^{\cM}_{\alpha^{k-1}{{\boldsymbol{\eta}}}^{k-1}}\left({{\boldsymbol{\eta}}}^{k-1}\right),
	\end{aligned}
\end{equation}
where ${{\beta}}^k$ is chosen as the Fletcher-Reeves parameter:
\begin{equation}\label{beta}
	\begin{aligned}
		{{\beta}}^k=\frac{g^{\cM}_{{{\bP}}^k}\left(\mathrm{grad}f\left({{\bP}}^k\right),\mathrm{grad}f\left({{\bP}}^k\right)\right)}{g^{\cM}_{{{\bP}}^{k-1}}\left(\mathrm{grad}f\left({{\bP}}^{k-1}\right),\mathrm{grad}f\left({{\bP}}^{k-1}\right)\right)}.
	\end{aligned}
\end{equation}

RSD and RCG both need to compute the Riemannian gradient, which is made up of the Euclidean gradient \eqref{Euclidean_Gradient_all} and the orthogonal projection. 
Let ${{\bV}}_{i,\ell}^k=\bH_{i}{{\bP}}_{\ell}^k$ and ${{\bU}}_{i,\ell}^k=\bH_{i}{{\boldsymbol{\eta}}}_{\ell}^k, \forall i, \ell\in\cU$, $\bR_i$ in the $k$-th iteration can be written as 
\begin{equation}
	\begin{aligned}
		\bR_i^k=&\sigma_z^2\bI_{M_i}+\sum_{\ell\neq i}^{U}{{\bV}}_{i,\ell}^k\left({{\bV}}_{i,\ell}^k\right)^H, \forall  i\in\cU.\\
	\end{aligned}
\end{equation}
Then the Euclidean gradient of the $i$-th UT, $\forall i\in\cU$, in the $k$-th iteration can be written as
\begin{equation}\label{al_gradient_TPC}
	\begin{aligned}
		&\mathrm{grad}f\left({{\bP}}_{i}^k\right)=-2w_i \bH_i^H\bR_i^{-1}{{\bV}}_{i,i}^k\left(\bI_{d_i}+\left({{\bV}}_{i,i}^k\right)^H\bR_i^{-1}{{\bV}}_{i,i}^k\right)^{-1}\\
		&+2\sum_{ \ell \neq i}w_{\ell}\bH_{\ell}^H\bR_{\ell}^{-1}{{\bV}}_{\ell,\ell}^k\left(\bI_{d_{\ell}}+\left({{\bV}}_{\ell,\ell}^k\right)^H\bR_{\ell}^{-1}{{\bV}}_{\ell,\ell}^k\right)^{-1}\times\\
		&\left({{\bV}}_{\ell,\ell}^k\right)^H\bR_{\ell}^{-1}{{\bV}}_{\ell,i}^k,
	\end{aligned}
\end{equation}
where ${{\bV}}_{i,\ell}^k,\forall i, \ell\in\cU$, can be obtained from \eqref{Retraction} and expressed as 
\begin{subequations}\label{V}
	\begin{align}
		{{\bV}}_{i,\ell}^k&=\hat{\gamma}^{k-1}\left(\hat{\bV}_{i,\ell}^{k-1}+\alpha^{k-1}\hat{\bU}_{i,\ell}^{k-1}\right)\ \ \  \text{for}\  \widehat{\cM}\label{V_1},\\
		{{\bV}}_{i,\ell}^k&=\tilde{\gamma}_{\ell}^{k-1}\left(\tilde{\bV}_{i,\ell}^{k-1}+\alpha^{k-1}\tilde{\bU}_{i,\ell}^{k-1}\right)\ \ \ \text{for}\ \widetilde{\cM}\label{V_2},\\
		{{\bV}}_{i,\ell}^k&=\bH_{i}\bar{\bGamma}^{k-1}\left(\bar{\bP}_{\ell}^{k-1}+\alpha^{k-1}\bar{\boldsymbol{\eta}}_{\ell}^{k-1}\right)\ \text{for}\ \overline{\cM}\label{V_3}.
	\end{align}
\end{subequations}  
We can see that ${{\bV}}_{i,\ell}^k,\forall i, \ell\in\cU$, can be obtained directly from ${{\bV}}_{i,\ell}^{k-1}$ and ${{\bU}}_{i,\ell}^{k-1}$ for $\widehat{\cM}$ and $\widetilde{\cM}$. On the contrary, ${{\bV}}_{i,\ell}^k$ needs to be computed in each iteration for $\overline{\cM}$.

The step length $\alpha^k$ in \eqref{V} can be obtained through the backtracking method \cite{Numerical}, where the objective function needs to be evaluated to ensure sufficient decrease. Compared with the projected steepest descent method \cite{projected}, only one step length need to be searched in our proposed RSD and RCG methods. For notational clarity, we use the superscript pair $(k,n)$ to denote the $n$-th inner iteration for searching for the step length during the $k$-th outer iteration. As the precoder $\bP^k$ and the search direction $\boldsymbol{\eta}^k$ are fixed when searching for the step length, the objective function in the $(k,n)$-th iteration can be viewed as a function of $\alpha^{k,n-1}$ and written as
\begin{equation}\label{al_f}
	\begin{aligned}
		&\phi\left(\alpha^{k,n-1}\right)=f\left(R_{{{\bP}}^{k,n-1}}\left(\alpha^{k,n-1}{{\boldsymbol{\eta}}}^{k,n-1}\right)\right)=\\
		&-\sum_{i=1}^U\logdet{\sigma_z^2\bI_{M_i}+\sum_{\ell}^U{{\bV}}_{i,\ell}^{k,n}\left({{\bV}}_{i,\ell}^{k,n}\right)^H}\\
		&+\sum_{i=1}^U\logdet{\sigma_z^2\bI_{M_i}+\sum_{\ell\neq i}^{U}{{\bV}}_{i,\ell}^{k,n}\left({{\bV}}_{i,\ell}^{k,n}\right)^H},
	\end{aligned}
\end{equation}
where ${{\bV}}_{i,\ell}^{k,n}={{\bV}}_{i,\ell}^{k,n}\left(\alpha^{k,n-1}\right), \forall i,\ell\in\cU$, is viewed as a function of $\alpha^{k,n-1}$. Note that ${{\bV}}_{i,\ell}^{k,n}\left(\alpha^{k,n-1}\right)$ can be derived in the same way as ${{\bV}}_{i,\ell}^k$ in \eqref{V} and are given by  
\begin{subequations}\label{V_inner}
	\begin{align}
		{{\bV}}_{i,\ell}^{k,n}\left(\alpha^{k,n-1}\right)&=\hat{\gamma}\left(\alpha^{k,n-1}\right)\left(\hat{\bV}_{i,\ell}^{k}+\alpha^{k,n-1}\hat{\bU}_{i,\ell}^{k}\right)\ \  \text{for}\  \widehat{\cM}\label{V_inner_1},\\
		{{\bV}}_{i,\ell}^{k,n}\left(\alpha^{k,n-1}\right)&=\tilde{\gamma}_{\ell}\left(\alpha^{k,n-1}\right)\left(\tilde{\bV}_{i,\ell}^{k}+\alpha^{k,n-1}\tilde{\bU}_{i,\ell}^{k}\right)\  \text{for}\ \widetilde{\cM}\label{V_inner_2},\\
		{{\bV}}_{i,\ell}^{k,n}\left(\alpha^{k,n-1}\right)&=\bH_{i}\bar{\bGamma}\left(\alpha^{k,n-1}\right)\left(\bP_{\ell}^{k}+\alpha^{k,n-1}\bar{\boldsymbol{\eta}}_{\ell}^{k}\right)\ \  \text{for}\ \overline{\cM}\label{V_inner_3}.
	\end{align}
\end{subequations}  
 Similarly,  ${{\bV}}_{i,\ell}^{k,n}\left(\alpha^{k,n-1}\right)$ can be obtained directly from ${{\bV}}_{i,\ell}^{k}$ and ${{\bU}}_{i,\ell}^{k}$ for $\widehat{\cM}$ and $\widetilde{\cM}$, while it needs to be computed once per inner iteration for $\overline{\cM}$. Note that $\hat{\gamma}\left(\alpha^{k,n-1}\right)$, $\tilde{\gamma}_{\ell}\left(\alpha^{k,n-1}\right)$ and $\bar{\bGamma}\left(\alpha^{k,n-1}\right)$ defined in \eqref{retraction_TPC}, \eqref{tildegamma} and \eqref{bargamma} are viewed as functions of $\alpha^{k,n-1}$ here.

The procedure of RSD and RCG methods for precoder design is provided in \alref{RCG}. RSD method is available by updating the search direction ${{\boldsymbol{\eta}}}^{k+1}$ with $-\mathrm{grad}f\left({{\bP}}^{k+1}\right)$, while RCG method is achieved by updating the search direction ${{\boldsymbol{\eta}}}^{k+1}$ with \eqref{UpdateSearchDirection}. Typically,  $r=0.5$ and $c=10^{-4}$. The analyses of the convergence properties of the RSD and RCG methods can be found in \appref{App_Proof_Convergence}. 

\begin{algorithm}[h]  
	\caption{RSD and RCG methods for precoder design} 
	\label{RCG}  
	\renewcommand{\algorithmicensure}{\textbf{Output:}}
	\begin{algorithmic}[1] 
		\REQUIRE Riemannian submanifold $\cM$; Riemannian metric $g_{{{\bP}}}\left(\cdot\right)$; Real-valued function $f$;  \\
		Retraction $R_{{{\bP}}}\left(\cdot\right)$; Vector transport $\cT_{{{\boldsymbol{\eta}}}}\left(\cdot\right)$; initial step length $\alpha^0>0$; $r\in \left(0,1\right)$; $c\in \left(0,1\right)$
		\renewcommand{\algorithmicrequire}{\textbf{Input:}}
		\REQUIRE Initial point ${{\bP}}^0$;
        \REPEAT
		\STATE Get ${{\bV}}_{i,\ell}^k,\forall i,\ell\in\cU$ with \eqref{V}.
		\STATE  Compute Euclidean gradient $\mathrm{grad}f({{\bP}}_{i}^k), \forall i,\ell\in\cU$ with \eqref{al_gradient_TPC}. 
		\STATE Get Riemannian gradient $\mathrm{grad}f({{\bP}}^{k})$ with \eqref{Riemannian_gradient_TPC}, \eqref{Riemannian_gradient_PUPC} or \eqref{Riemannian_gradient_PAPC}.
		\STATE Update the search direction ${{\boldsymbol{\eta}}}^{k}$ and compute ${{\bU}}_{i,\ell}^k,\forall i,\ell\in\cU$.
		\WHILE{$\phi\left(\alpha^{k,n-1}\right)-f\left({{\bP}}^{k}\right)$ $\geq c\times g^{\cM}_{{{\bP}}^k}\left(\mathrm{grad}f\left({{\bP}}^k\right),\right.$ $\left. \alpha^{k,n-1} {{\boldsymbol{\eta}}}^k\right)$}
		\STATE  Set $\alpha^{k,n}\leftarrow r\alpha^{k,n-1}$ with	  $\alpha^{k,0} = \alpha^0$.
		\STATE Get ${{\bP}}^{k,n+1}$ with \eqref{Retraction} and ${{\bV}}_{i,\ell}^{k,n}\left(\alpha^{k,n}\right),\forall i,\ell\in\cU$ with \eqref{V_inner}.
		\STATE Get $\phi\left(\alpha^{k,n}\right)$ with \eqref{al_f},   $n\leftarrow n+1$.
		\ENDWHILE
		\STATE  Set ${{\bP}}^{k+1}\leftarrow{{\bP}}^{k,n}$, $k\leftarrow k+1$.
		\UNTIL{convergence}
	\end{algorithmic}  
\end{algorithm}

\subsection{Riemannian Trust Region Method}
To implement the trust region method on Riemannian submanifold, the quadratic model is indispensable, which is the approximation of $f$ on a neighborhood of ${{\bP}}$ defined on $T_{{{\bP}}}\cM$ \cite{RTR1}. Typically, the basic choice of a quadratic model in the $k$-th iteration is
\begin{equation}
	\begin{aligned}
		m_{{{\bP}}^k}\left(\bxi_{{{\bP}}^k}\right)=&f\left({{\bP}}^k\right)+g^{\cM}_{{{\bP}}^k}\left(\mathrm{grad}f\left({{\bP}}^k\right),\bxi_{{{\bP}}^k}\right)\\
		&+\frac{1}{2}g^{\cM}_{{{\bP}}^k}\left(\mathrm{Hess}f\left({{\bP}}^k\right)\left[\bxi_{{{\bP}}^k}\right],\bxi_{{{\bP}}^k}\right),
	\end{aligned}
\end{equation}
where $\mathrm{Hess}f\left({{\bP}}^k\right)\left[\bxi_{{{\bP}}^k}\right]$ is used to form a quadratic model  around the current iteration $\bP^k$ and is not necessarily positive semidefinite. $\mathrm{Hess}f\left({{\bP}}^k\right)\left[\bxi_{{{\bP}}^k}\right]$ can be obtained from \eqref{hessian_TPC}, \eqref{hessian_PUPC} and \eqref{hessian_PAPC}.  
The search direction and the step length are obtained simultaneously by solving the trust-region subproblem 

\begin{equation}\label{subproblem}
	\begin{aligned}
		&\min_{{{\boldsymbol{\eta}}}^k\in T_{{{\bP}}^k}\cM}m_{{{\bP}}^k}\left( {{\boldsymbol{\eta}}}^k \right) \ \ \mathrm{ s.t. }\ g^{\cM}_{\bP^k}\left( {{\boldsymbol{\eta}}}^k,{{\boldsymbol{\eta}}}^k \right)\leq \left(\delta^k\right)^2 ,
	\end{aligned}
\end{equation}
where $\delta^k$ is the trust-region radius in the $k$-th iteration. \eqref{subproblem} can be solved by truncated conjugate gradient (tCG) method  \cite{Conn,Absil2009}, where the Riemannian Hessian needs to be computed according to \eqref{hessian_TPC}, \eqref{hessian_PUPC} or \eqref{hessian_PAPC}. The most common stopping criteria of tCG is to truncate after a fixed number of iterations, denoted by $N_{\text{sub}}$ \cite{AbsilRTR}. Once ${{\boldsymbol{\eta}}}^k$ is obtained, the quality of the quadratic model $m_{{{\bP}}^k}$ is evaluated by the quotient
\begin{equation}\label{rho}
	\begin{aligned}
		\rho^k=\frac{f\left({{\bP}}^k\right)-f\left({{\bP}}^{k+1}\right)}{m_{\bP^k}\left(\boldsymbol{0}\right)-m_{\bP^k}\left({{\boldsymbol{\eta}}}^k\right)}.
	\end{aligned}
\end{equation}
The convergence analyses of the RTR method are available in \appref{App_Proof_Convergence}.

The RTR method for precoder design is summarized in \alref{RTR}, where we use the superscript pair $(k,d)$ for the iteration of  solving the trust region subproblem to distinguish from $(k,n)$ for the iteration of searching for the step length. $N_{\text{sub}}$ is the maximum inner iteration number. In the Step 7 of  \alref{RTR}, $t$ is found by computing the positive root of the quadratic equation \cite{Absil2009}
\begin{equation}
	\begin{aligned}
		t^2g_{\bP^k}^{\cM}\left( {{\bQ}}^{k,d-1},{{\bQ}}^{k,d-1} \right)+2tg_{\bP^k}^{\cM}\left( {{\boldsymbol{\eta}}}^{k,d-1}, {{\bQ}}^{k,d-1}\right)\\
		=\left(\delta^k\right)^2-g_{\bP^k}^{\cM}\left( {{\boldsymbol{\eta}}}^{k,d-1},{{\boldsymbol{\eta}}}^{k,d-1} \right).
	\end{aligned}
\end{equation}
\begin{algorithm}[h]  
	\caption{RTR method for precoder design}  
	\label{RTR}  
	\renewcommand{\algorithmicensure}{\textbf{Output:}}
	\begin{algorithmic}[1] 
		\REQUIRE Riemannian submanifold $\cM$; Riemannian metric $g_{{{\bP}}}\left(\cdot\right)$; Real-valued function $f$; \\
		Retraction $R_{{{\bP}}}\left(\cdot\right)$; initial region $\delta_0\in(0,\Delta]$; threshold $\rho^{\prime}$; inner iteration number $N_{\text{sub}}$.
		\renewcommand{\algorithmicrequire}{\textbf{Input:}}
		\REQUIRE Initial point ${{\bP}}^0$, ${{\bQ}}^{1,0}={{\bG}}^{1,0}=-\mathrm{grad}f({{\bP}}^{0})$, ${{\boldsymbol{\eta}}}^{1,0}=\boldsymbol{0}$, 
		\REPEAT 
		\FOR{$d=1,2,\cdots,N_{\text{sub}}$}
		\STATE  Compute $\mathrm{Hess}f({{\bP}}^{k})\left[{{\bQ}}^{k,d-1}\right]$ with  \eqref{hessian_TPC}, \eqref{hessian_PUPC} or \eqref{hessian_PAPC}.
		\STATE $\tau^{k,d}=\frac{g^{\cM}_{{{\bP}}^{k}}\left( {{\bG}}^{k,d-1},{{\bG}}^{k,d-1} \right)}{g^{\cM}_{{{\bP}}^{k,d-1}}\left({{\bQ}}^{k,d-1},\mathrm{Hess}f({{\bP}}^{k})\left[{{\bQ}}^{k,d-1}\right]\right)}$.
		\STATE  ${{\boldsymbol{\eta}}}^{k,d}={{\boldsymbol{\eta}}}^{k,d-1}+\tau^{k,d-1}{{\bQ}}^{k,d-1}$.
		\IF {$\tau^{k,d}\leq 0$ or $g^{\cM}_{{{\bP}}^k}\left({{\boldsymbol{\eta}}}^{k,d},{{\boldsymbol{\eta}}}^{k,d}\right)\geq \delta^k$}
		\STATE ${{\boldsymbol{\eta}}}^{k,d}={{\boldsymbol{\eta}}}^{k,d-1}+t{{\bQ}}^{k,d-1}$ with $t>0$ and $g({{\boldsymbol{\eta}}}^{k,d},{{\boldsymbol{\eta}}}^{k,d})=\left(\delta^k\right)^2$, $\textbf{break}$.
		\ENDIF
		\STATE ${{\bG}}^{k,d}={{\bG}}^{k,d-1}-\tau^{k,d}\mathrm{Hess}f({{\bP}}^{k})\left[{{\bQ}}^{k,d-1}\right]$.
		\STATE ${{\bQ}}^{k,d}={{\bG}}^{k,d}+\frac{g^{\cM}_{{{\bP}}^{k}}\left({{\bG}}^{k,d},{{\bG}}^{k,d}\right)}{g^{\cM}_{{{\bP}}^{k}}\left({{\bG}}^{k,d-1},{{\bG}}^{k,d-1}\right)}{{\bQ}}^{k,d-1}$.
		\ENDFOR
		\STATE ${{\boldsymbol{\eta}}}^{k}={{\boldsymbol{\eta}}}^{k,d}$, $\nu =g^{\cM}_{{{\bP}}^k}\left({{\boldsymbol{\eta}}}^k,{{\boldsymbol{\eta}}}^k\right)$. 
		\STATE Evaluate $\rho^k$ with \eqref{rho}.
		\STATE Update ${{\bP}}^{k+1}$ and $\delta^{k+1}$ according to $\rho^k$ with 
		\begin{equation}
			\begin{aligned}			
			&{{\bP}}^{k+1}=
			\begin{cases}
				\eqref{Retraction}, \ \text{if}\  \rho^k>\rho^{\prime} \\
				{{\bP}}^k, \ \text{if}\ \rho^k\leq\rho^{\prime}
			\end{cases}\notag,\\
			&\delta^{k+1} =
			\begin{cases}
				\frac{1}{4}\delta^k, \text{if}\ \rho^k<\frac{1}{4}\\
				\min\left(2\delta^k,\Delta\right), \text{if}\ \rho^k>\frac{3}{4}\ \& \ \nu=\delta^k\\
				\delta^k,  \text{otherwise} 
			\end{cases}.
		\end{aligned}
		\end{equation}
	   \STATE Set $k\leftarrow k+1$.
		\UNTIL{convergence}
	\end{algorithmic}  
\end{algorithm}

\subsection{Computational Complexity of Riemannian Methods}
\renewcommand\thesubsectiondis{\thesection.\arabic{subsection}}
Let $N_r=\sum_{i=1}^{U}M_i$ and $N_d=\sum_{i=1}^{U}d_i$ denote the total antenna number of the users and the total data stream number, respectively. Let $N_{\text{out}}$ and $N_{\text{in}}$ denote the numbers of outer iteration and inner iteration of searching for the step length, respectively. The computational costs of the Riemannian metric, orthogonal  projection, retraction and vector transport are the same, i.e., $M_tN_d$. Although they need to be called several times per outer iteration in the proposed Riemannian methods, the total computational costs can still be neglected as they are much lower than that of the Riemannian gradient or Riemannian Hessian.
For RSD and RCG methods on $\widehat{\cM}$ and $\widetilde{\cM}$, ${{\bV}}_{i,\ell}^k$ and ${{\bV}}_{i,\ell}^{k,n},\forall i,\ell\in\cU$, in \alref{RCG} can be obtained directly from the elements computed in the previous iteration with  \eqref{V_1}, \eqref{V_2}, \eqref{V_inner_1} and \eqref{V_inner_2}. Moreover, the dimension of the input of the cost function $\logdet{\cdot}$ is $M_i$ for the $i$-th UT. Thus, the computational cost of the whole inner iteration per outer iteration is $O\left(2N_{\text{in}}\sum_{i=1}^UM_i^3\right)$, which is very low and can be omitted.  So the computational cost mainly comes from Step 3 and Step 5 in \alref{RCG} during the iterations, and the complexity is $O(M_tN_rN_d+M_tN_d^2)N_{\text{out}}$. For RSD and RCG methods on $\overline{\cM}$,  $\bar{\bV}_{i,\ell}^{k,n},\forall i,\ell\in\cU$, have to be computed once for each inner iteration according to \eqref{V_inner_3}, with the total complexity of $O((N_{\text{in}}+1)M_tN_rN_d+M_tN_d^2)N_{\text{out}}$. The main computational cost of the RTR method depends on Step 3 in \alref{RTR}, where the Riemannian Hessian needs to be computed once in each inner iteration. The computational cost of the Riemannian Hessian mainly comes from the matrix multiplication  according to \appref{App_Proof_Theo_TPC}, \appref{App_Proof_Theo_Riemannian_Hessian_sub_PUPC} and \appref{App_Proof_Theo_Riemannian_Hessian_sub_PAPC}. The maximum complexity of the RTR method is $O(8N_{\text{sub}}M_tN_rN_d)N_{\text{out}}$ on $\widehat{\cM}$ or $\overline{\cM}$ and $O(4N_{\text{sub}}M_tN_rN_d)N_{\text{out}}$ on $\widetilde{\cM}$.
 Typically,  we have $N_{\text{in}} \leq 10$ with $r=0.5$, $c=10^{-4}$ and $\alpha^0=10^{-3}$, and $N_{\text{sub}}\leq 10$.

The computational complexities of different Riemannian design methods are summarized in \tabref{ComplexityOfRiemannianMethods}. For the same power constraint, we can see that RSD and RCG methods have the same computational complexity, which is lower than that of the RTR method.
\renewcommand\arraystretch{1.2}
\begin{table*}
	\centering
	\caption{Computational complexity of different Riemannian  design methods} 
	\label{ComplexityOfRiemannianMethods}
	\begin{tabular}{|c|c|c|c|}
		\hline
		&RSD&RCG&RTR\\
		\hline		
		TPC&$O(M_tN_rN_d+M_tN_d^2)N_{\text{out}}$ &$O(M_tN_rN_d+M_tN_d^2)N_{\text{out}}$ &$O(8N_{\text{sub}}M_tN_rN_d)N_{\text{out}}$\\
		\hline
		PUPC&$O(M_tN_rN_d+M_tN_d^2)N_{\text{out}}$ &$O(M_tN_rN_d+M_tN_d^2)N_{\text{out}}$ &$O(4N_{\text{sub}}M_tN_rN_d)N_{\text{out}}$ \\
		\hline 
		PAPC&$O((N_{\text{in}}+1)M_tN_rN_d+M_tN_d^2)N_{\text{out}}$ &$O((N_{\text{in}}+1)M_tN_rN_d+M_tN_d^2)N_{\text{out}}$ &$O(8N_{\text{sub}}M_tN_rN_d)N_{\text{out}}$\\
		\hline 
	\end{tabular}
\end{table*}

The popular WMMSE method proposed in \cite{WMMSE} has the same objective function as our proposed Riemannian methods for precoder design.
The computational complexity of the  WMMSE method   is $O(2M_tN_rN_d+M_tN_d^2+\frac{3}{2}N_d^3)N_{\text{out}}$, which is higher than that of RCG method under TPC, in the case with the same number of outer iterations used. SLNR precoder also satisfies PUPC, where the eigenvalue decomposition needs to be performed $U$ times \cite{SLNREE}. Thus, the  computational complexity of the SLNR precoder is $O\left(M_t^3U\right)$, which is higher than the RCG method when $N_{\text{out}}< \frac{M_t^2U}{N_rN_d+N_d^2}$. In addition, the precoder under PUPC can also be designed by the WMMSE method, where $U$ Lagrange multipliers are obtained by the bisection method \cite{WMMSE_shi}. The complexity of WMMSE under PUPC is $O(M_t^3U+2M_tN_rN_d+M_tN_d^2+\frac{3}{2}N_d^3)N_{\text{out}}$, which is much higher than that of RCG method.  WSR-maximization precoders under PAPC are difficult to design. \cite{ZFBased} provides an alternative method by finding the linear precoder closest to the optimal ZF-based precoder under TPC, whose complexity is $O(M_t^3+(M_t-N_r)^3U)N_{\text{Z}}$ and $N_{\text{Z}}$ is the number of iterations needed. Despite no inner iteration, this method has a higher computational complexity than that of the RCG method in massive MIMO systems when $N_{\text{Z}}=N_{\text{out}}$. \cite{DCAM} designs the WSR-maximization precoder under PAPC via the DCAM-based method,  whose computational complexity is $O\left(N_{\text{D}}\left(M_t^3+M_t^2N_d\right)\right)N_{\text{out}}$ with $N_{\text{D}}$ being the number of iterations needed for the DCAM method to converge. Typically,  the optimality tolerance of the  duality gap is $\epsilon=10^{-4}$ in \cite{DCAM} and $N_{\text{D}}\propto \logtwo{\frac{1}{\epsilon}}$, indicating that the computational complexity of  the DCAM-based method is higher than that of the RCG design method per outer iteration.
The numerical results in the next section will demonstrate that Riemannian design methods converge faster than the comparable iterative methods under different power constraints.

 \section{Numerical Results}\label{Results}
 
 In this section, we provide simulation results to demonstrate that the precoders designed in the matrix manifold framework are numerically superior and computationally efficient under different power constraints, especially when the RCG method is used. We adopt the prevalent QuaDRiGa channel model \cite{QuaDRiGa}, where ``3GPP 38.901 UMa NLOS'' scenario is considered. The 25 m high BS with $M_t=128$ antennas serves $U=20$ 1.5 m high UTs, which are randomly distributed in the cell with the radius of 250 m. The antenna type is ``3gpp-3d'' at the BS side and ``ula'' at the user side. For simplicity, we set $M_i=d_i=2, P_i=\frac{P}{U}, \forall i\in\cU$. The center frequency  is set at 4.8 GHz. The weighted factor of each user is set to $1$. The powers of the channels for different users are normalized. The SNR is defined as $\frac{P}{\sigma_z^2}$, where the noise power $\sigma_z^2$ is set to be 1 and different SNRs are obtained by adjusting the transmit power $P$. For fair comparison, the Riemannian methods and the iterative methods for comparison are all initialized by the regularized zero-forcing (RZF) precoders \cite{RZF}, which are forced to satisfy the TPC, PUPC and PAPC, respectively. 
\begin{figure}[b]
	\begin{minipage}[htbp]{0.5\textwidth}
		\centering
		\includegraphics[scale=0.42]{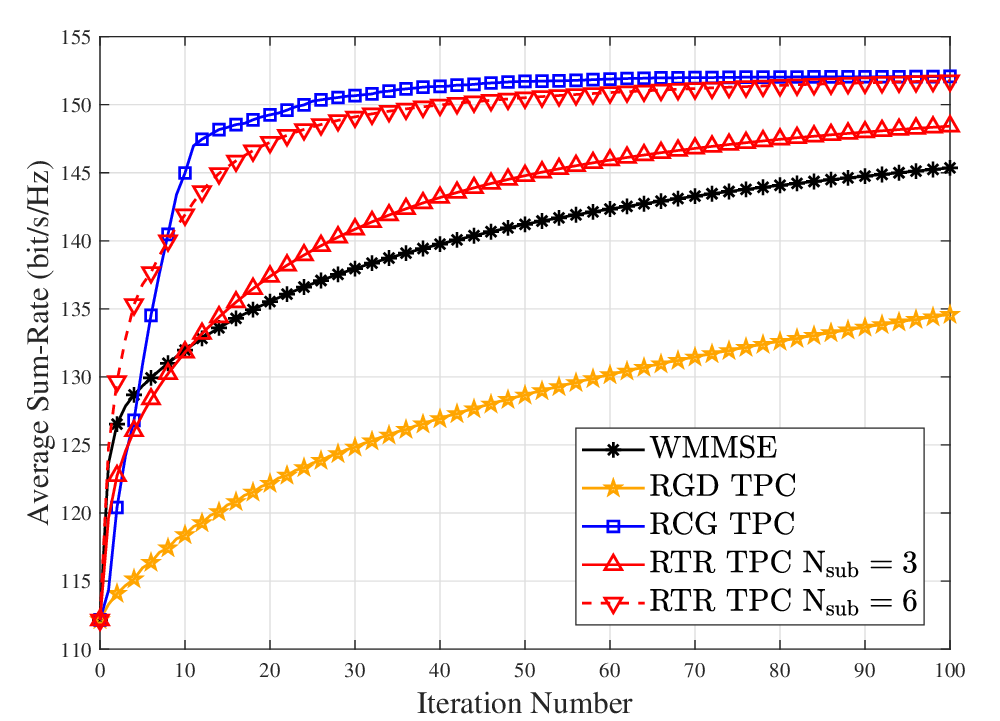}
		\caption{Convergence comparison under TPC at SNR=20dB.}
		\label{fig_conTPC}
	\end{minipage}
\end{figure}
 
\begin{figure}[t]
	\begin{minipage}[htbp]{0.5\textwidth}
		\centering
		\includegraphics[scale=0.42]{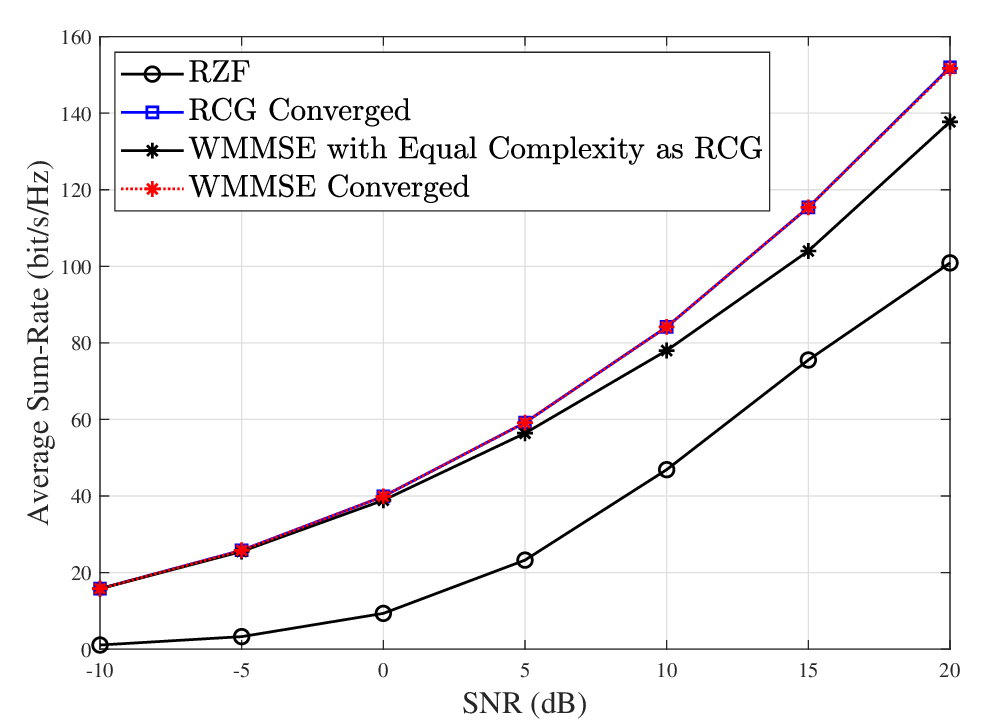}
		\caption{WSR performance when the RCG method converges under TPC.}
		\label{fig_TPC}
	\end{minipage}
	\begin{minipage}[t]{0.5\textwidth}
			\centering
			\includegraphics[scale=0.42]{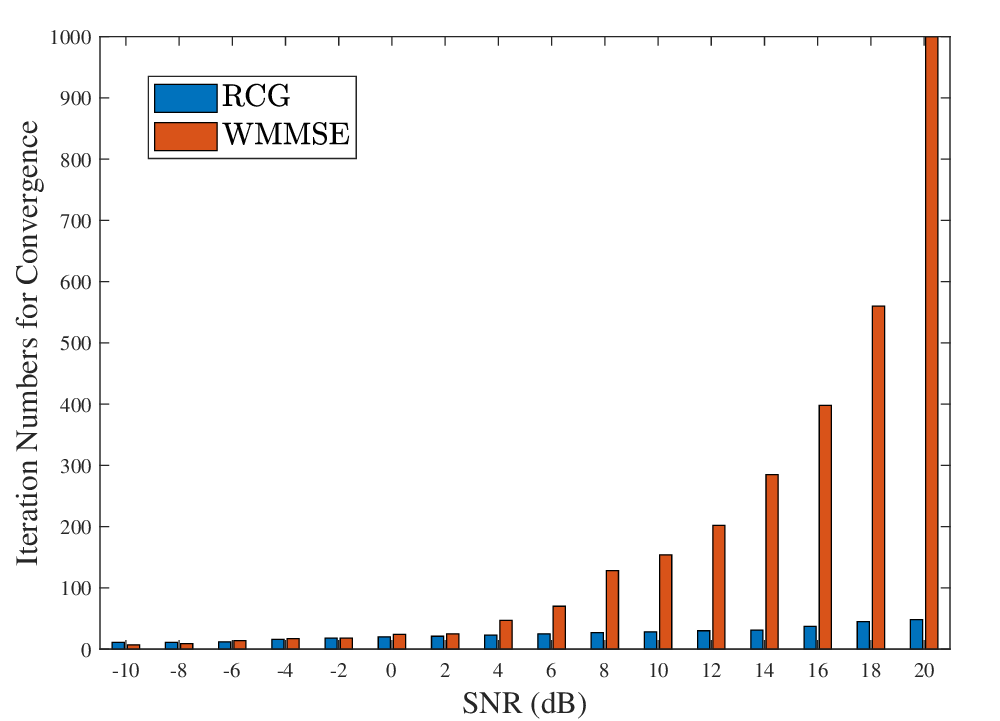}
			\caption{Convergence comparison between RCG and  WMMSE under TPC.}
			\label{fig_RCGvsWMMSE}
		\end{minipage}
\end{figure}
\figref{fig_conTPC} depicts the convergence trajectories of the Riemannian design methods compared with the WMMSE method \cite{WMMSE} under TPC at SNR$=20$  dB for massive MIMO DL. As shown in \figref{fig_conTPC}, the RCG method converges much faster than RSD, WMMSE and RTR methods when $N_{\text{sub}}=3$. RTR method converges the fastest at the very beginning  when $N_{\text{sub}}=6$ and obtains $87\%$ performance in the first three iterations. Besides, the complexity of the RCG method is lower than that of the WMMSE method.  For the fairness of comparison, \figref{fig_TPC} compares the WSR performance of the RCG method with that of the WMMSE precoder under the same complexity when the RCG method converges. As we can see from \figref{fig_TPC}, the RCG method has the same performance as the WMMSE method when they converge and has an evident performance gain in the high SNR regime when they share the same computational complexity. The RZF method \cite{RZF}  requires the least computation among these methods but also exhibits the poorest performance. To investigate the convergence behavior of the RCG method, we compare the approximate iterations needed for the RCG and WMMSE methods to completely converge in \figref{fig_RCGvsWMMSE}. Compared with the WMMSE method, the RCG method needs much fewer iterations to converge, especially when SNR is high, which clearly demonstrates the fast convergence and high computational efficiency of the RCG method.

\begin{figure}[t]

\begin{minipage}[t]{0.5\textwidth}
	\centering
	\includegraphics[scale=0.42]{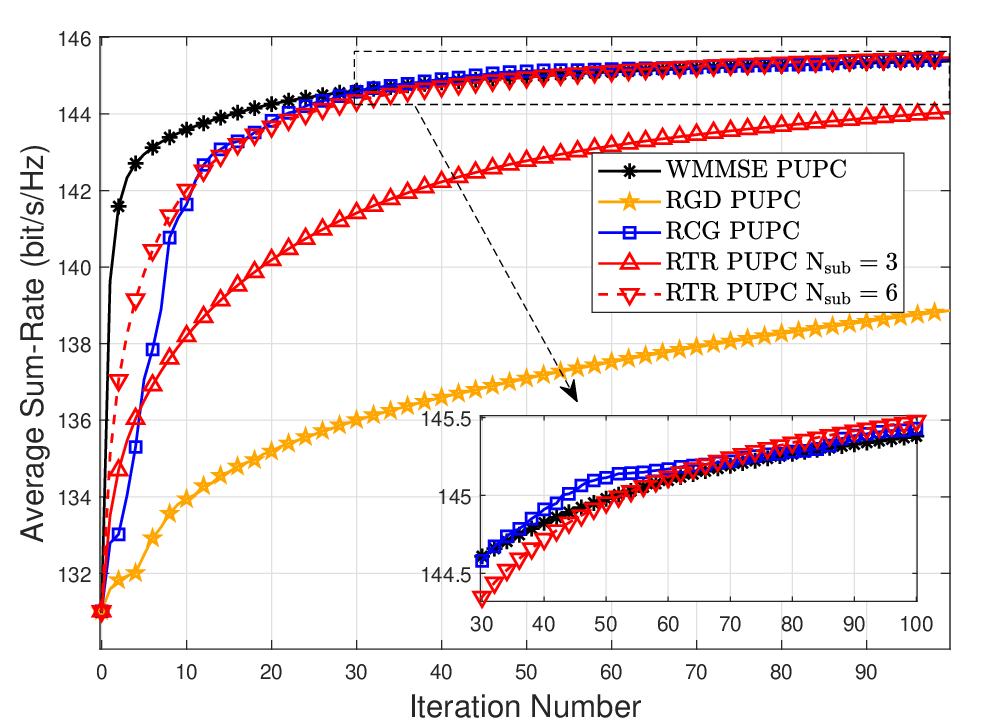}
	\caption{Convergence comparison under PUPC at SNR=20dB}
	\label{fig_conPUPC}
\end{minipage}

	\begin{minipage}[t]{0.5\textwidth}
		\centering
		\includegraphics[scale=0.42]{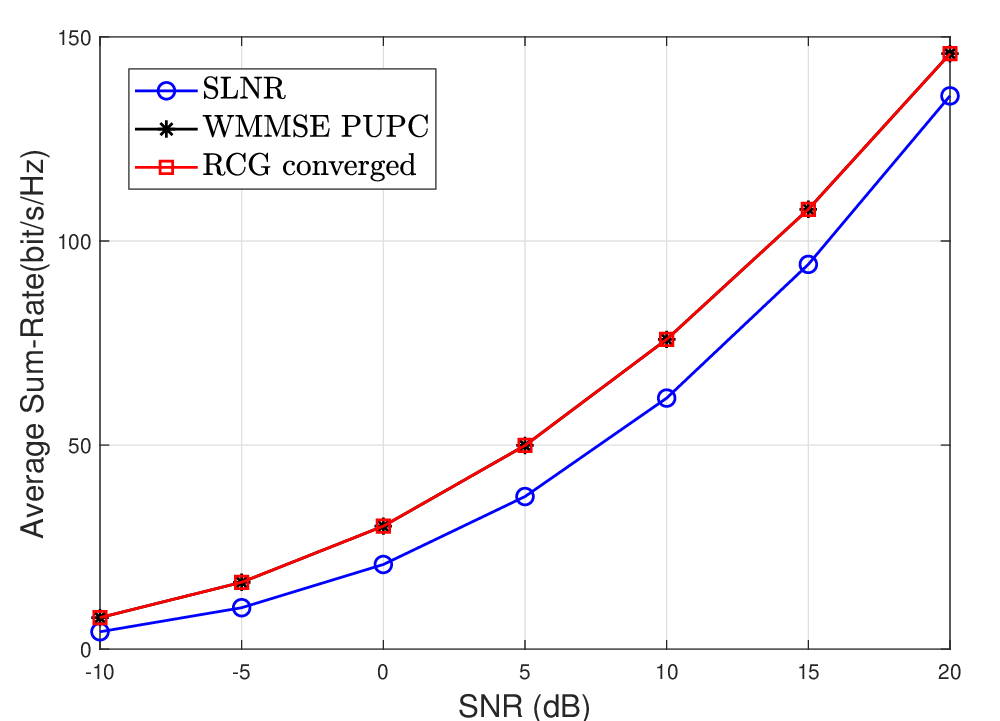}
		\caption{WSR performance when the RCG method converges under PUPC.}
		\label{fig_PUPC}
	\end{minipage}
\end{figure}

 \begin{figure}[htbp]
	\begin{minipage}[t]{0.5\textwidth}
		\centering
		\includegraphics[scale=0.42]{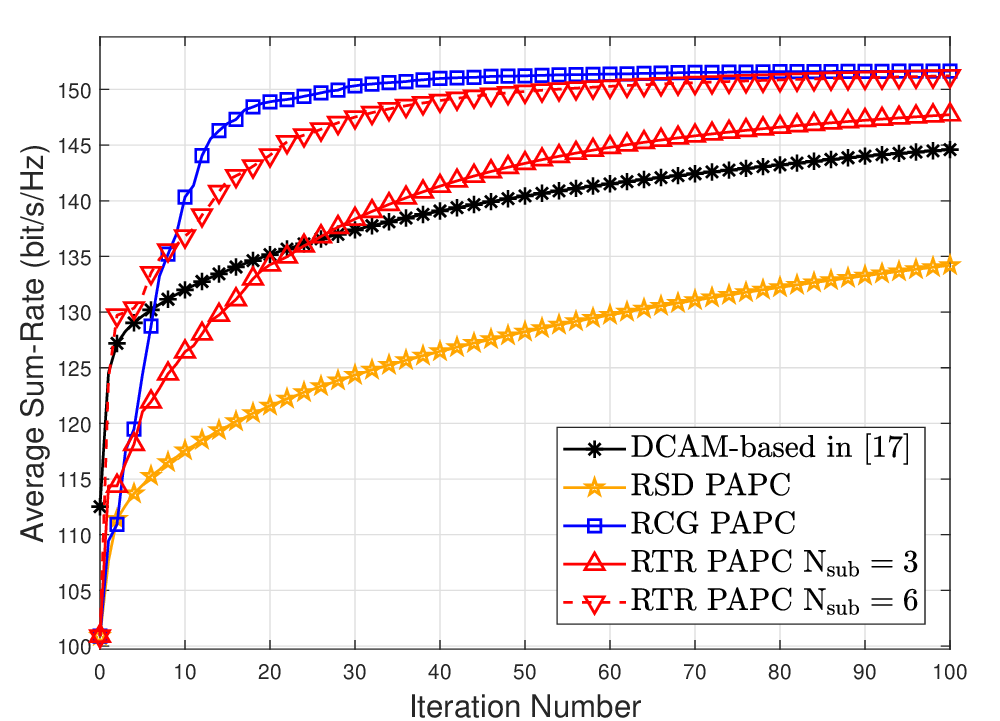}
		\caption{Convergence comparison under PAPC at SNR=20dB.}
		\label{fig_conPAPC}
	\end{minipage}
\end{figure}
 \begin{figure}[htbp]
	\begin{minipage}[t]{0.5\textwidth}
		\centering
		\includegraphics[scale=0.42]{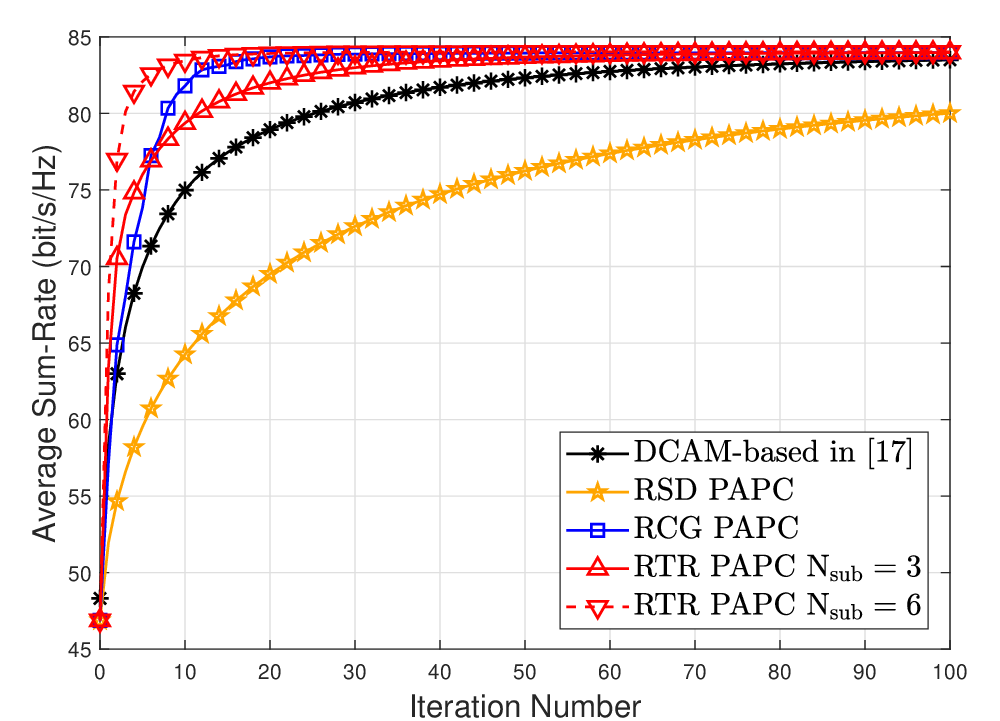}
		\caption{Convergence comparison under PAPC at SNR=10dB.}
		\label{fig_conPAPC10}
	\end{minipage}
\end{figure}
\begin{figure}[htbp]
	\begin{minipage}[tp]{0.5\textwidth}
		\centering
		\includegraphics[scale=0.42]{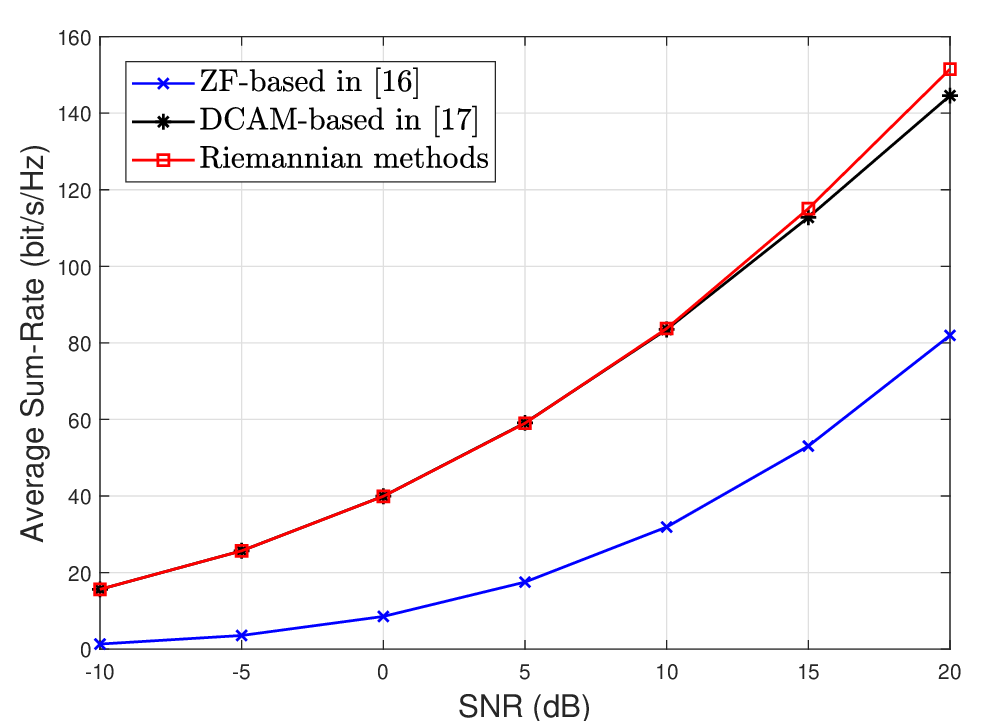}
		\caption{WSR performance when the RCG method converges under PAPC.}
		\label{fig_PAPC}
	\end{minipage}
\end{figure}
The convergence behavior of Riemannian design methods under PUPC at SNR$=20$ dB is shown in \figref{fig_conPUPC}. When $N_{\text{sub}}=6$, the RTR method needs nearly the same number of iterations to converge as RCG at the cost of a much higher computational complexity.  In addition, WMMSE achieves good performance and converges fast at the first thirty iterations. However, RCG shows a better performance and converges faster compared with WMMSE when $N_{\text{out}}>30$ with a much lower computational complexity.
In \figref{fig_PUPC}, we compare the WSR performance of the Riemannian methods after convergence under different SNRs with those of the SLNR precoder and the WMMSE precoder satisfying PUPC, whose computational complexities are much higher than that of the RCG method. The WSR performances of the proposed Riemannian methods are almost the same after convergence with different initializations.  From \figref{fig_PUPC}, the RCG method has the same performance as the  WMMSE method after convergence and performs better than the SLNR precoder in the whole SNR regime.
 
\figref{fig_conPAPC} shows the convergence behavior of Riemannian design methods compared with the DCAM-based method in \cite{DCAM} under PAPC at SNR$=20$ dB. From \figref{fig_conPAPC}, we find that the RTR method converges fast in the first three iterations, but the RCG method converges faster when $N_{\text{out}}>10$. RCG and RTR methods both converge much faster than the DCAM-based method, whose complexity is higher than that of the RCG design method per outer iteration. To further show the convergence behavior of the Riemannian methods in the PAPC case under lower SNR, we plot the convergence trajectories of the Riemannian methods under PAPC at SNR=10 dB in \figref{fig_conPAPC10}. From \figref{fig_conPAPC10}, we can see that the  RTR with $N_{\text{sub}}=6$ converges fastest and has the same WSR performance as the RCG method after convergence.  \figref{fig_PAPC} compares the WSR performance of the RCG method after convergence with the ZF-based method in \cite{ZFBased} and the DCAM-based method in \cite{DCAM}.  As we can see in \figref{fig_PAPC}, the RCG method is numerically superior in designing precoders under PAPC with a lower computational complexity compared with the algorithm proposed in \cite{ZFBased}. The DCAM-based method  exhibits similar performance with the RCG method in the low SNR regime at a higher computational cost but  performs worse than the RCG method when SNR is high. In addition to the RZF precoder, we also try the random initialization and the WSR performances of the Riemannian methods after convergence are nearly the same.
Finally, we compare the running times required for the proposed Riemannian methods and the  comparable iterative methods to converge in \tabref{CPUtime}. All the simulations are conducted by Matlab R2019b on a desktop with Intel(R) Core(TM) i9-10900K running at 3.70 GHz. We can see that the RCG method runs fastest in all the cases. In addition, RTR  runs faster than the iterative methods for comparison in all the cases, especially in the PUPC case.
\renewcommand\arraystretch{1.5}
\begin{table*}
	\centering
	\caption{Running times of different methods (s)}
	\label{CPUtime}
	\small
	\begin{tabular}{c|c|c|c|c}
		\hline
		
		 &RSD &RCG &RTR &Iterative methods\\
         \hline
		 
		TPC&100.5072&4.0844 &44.8081 &\ \ \ WMMSE-TPC: 73.6469 \\
		\rule{0pt}{12pt} 
		PUPC  &90.2711&8.9171 &17.3877 & WMMSE-PUPC: 29.8341\\
		\rule{0pt}{12pt} 
		 PAPC &175.0686 &16.5449 & 77.3072  &\hspace{.52cm}   DCAM-based: 145.9009 \\
		\hline

	\end{tabular}
\end{table*}

%
 \section{Conclusion}\label{Conclusion}
In this paper, we have investigated the linear precoder design methods with matrix manifold in massive MIMO DL transmission.  We focus on the WSR-maximization precoder design and demonstrate that the precoders under TPC, PUPC and PAPC are on different Riemannian submanifolds. Then, the constrained problems are transformed into unconstrained ones on Riemannian submanifolds. Furthermore, RSD, RCG and RTR methods are proposed for optimizing on Riemannian submanifolds. There is no inverse of large dimensional matrix during the iterations in the proposed methods. Besides, the complexities of implementing these Riemannian design methods on different Riemannian submanifolds are investigated. Simulation results show the numerical superiority and computational efficiency of the RCG method.
\appendices
\renewcommand\thesubsectiondis{\thesection.\arabic{subsection}}
\section{Proof of \thref{theo_submanifold_TPC}}\label{Proof_theo_Riemannian_Submanifold}
\subsection{Proof in the TPC case}\label{App_proof_Theo_submanifold_TPC}
With the fact that the Frobenius norm of $\hat{\bP}$ is a constant, the constraint $\mathrm{tr}\left(\bP^H\bP\right)=P$ defines a sphere naturally. Clearly, $ \widehat{\cM} $ is a subset of the set $  \bbC^{M_t \times d_1} \times \bbC^{M_t \times d_2}\times \cdots \times \bbC^{M_t \times d_i}$. Consider  differentiable function $ \hat{F}  : \cN \rightarrow \bbR : \bP \mapsto \mathrm{tr}\left(\bP\bP^H\right)-P $, and clearly $  \widehat{\cM} \in \hat{F}^{-1}\left( 0\right)  $, where $ 0 \in \bbR $. Based on the submersion theorem \cite[Proposition 3.3.3]{Absil2009}, to show $ \widehat{\cM } $ is a closed embedded submanifold of $\cN $, we need to prove $ \hat{F} $ as a submersion at each point of $\widehat{\cM }  $. In other words, we should verify that the rank of $ \hat{F} $ is equal to the dimension of $ \bbR $, i.e., 1, at every point of $\widehat{\cM}  $. Since the rank of $\hat{F} $ at a point $ \bP \in \cN $ is defined as the dimension of the range of $ \mathrm{D}\hat{F}\left( \bP\right)  $, we simply need to show that for all $ w\in\bbR $,  there exists $ \bZ = \left( \bZ_1,\bZ_2,\cdots,\bZ_i \right)   \in \cN $ such that $ \mathrm{D}\hat{F}\left( \bP\right)\left[ \bZ\right] = w $. Since the differential operation at  $ \bP $ is equivalent to the component-wise differential at  each of $ \bP_{1},\bP_{2},\cdots, \bP_i$, we have 
{\setlength\abovedisplayskip{0.1cm}
	\setlength\belowdisplayskip{0.1cm}
\begin{equation}\label{key}
	\mathrm{D}\hat{F}\left( \bP\right)\left[ \bZ\right] = \mathrm{tr}\left(\bP^H\bZ+\bZ^H\bP\right).
\end{equation}}
It is easy to see that if we choose $ \bZ_i = \frac{1}{2} \frac{w}{P} \bP_i $, we will have $\mathrm{D}\hat{F}\left( \bP\right)\left[ \bZ\right] = w $. This shows that $ \hat{F} $ is full rank as well as a submersion on $ \widehat{\cM } $. 

\par Because every tangent space $ T_{\hat{\bP}}\widehat{\cM }  $ is a subspace of  $ T_{\bP} \cN $, the Riemannian metric $ g_{\bP}\left(\cdot\right) $ of $  \cN $ is naturally introduced to $ \widehat{\cM } $ as $ g_{\hat{\bP}}\left(\cdot\right) = g_{\bP}\left(\cdot\right)$. With this metric, $ \widehat{\cM } $ is an Riemannian embedded submanifold of $ \cN $.
\subsection{Proof in the PUPC case}\label{App_proof_theo_submanifold_PUPC}
Note that each $\tilde{\bP}_i$ in $\tilde{\bP}$ forms a sphere. Thus $\widetilde{\cM}$ forms an oblique manifold composed of $U$ spheres.
To show $\widetilde{\cM}$ is a Riemannian submanifold of the product manifold $\cN$, we need to show that for  $ \bD_i=a_i\bI_{d_i}, \forall i\in \cU$, where $a_i\in \bbR$, there exists $ \bZ = \left( \bZ_1,\cdots,  \bZ_U \right)   \in \cN $ such that $ \mathrm{D}\tilde{F}\left( \bP\right)\left[ \bZ\right] = \mathrm{blkdiag}\left( \bD_1,\cdots,\bD_U \right)=\bD $. Since the differential operation at  $ \bP $ is equivalent to the component-wise differential at each of $ \bP_{1},\bP_{2},\cdots, \bP_U$, we have 
\begin{equation}\label{key}
	\begin{aligned}
		&\mathrm{D}\tilde{F}\left( \bP\right)\left[ \bZ\right] =\\
		&\left(\mathrm{tr}\left(\bP_1^H\bZ_1+\bZ_1^H\bP_1\right)\bI_{d_1},\cdots,\mathrm{tr}\left(\bP_U^H\bZ_U+\bZ_U^H\bP_U\right)\bI_{d_U}\right).
	\end{aligned}
\end{equation}
It is easy to see that if we choose $ \bZ_i = \frac{1}{2} \frac{U}{P} \bP_i\bD_i $, we will have $\mathrm{D}\tilde{F}\left( \bP\right)\left[ \bZ\right] = \bD $. This shows that $ \tilde{F} $ is full rank as well as a submersion on $ \widetilde{\cM } $. 

\par Because every tangent space $ T_{\tilde{\bP}}\widetilde{\cM }  $ is a subspace of  $ T_{\bP} \cN $, the Riemannian metric $ g_{\bP}\left(\cdot\right) $ of $  \cN $ is naturally introduced to $ \widetilde{\cM } $ as $ g_{\tilde{\bP}}\left(\cdot\right)= g_{\bP}\left(\cdot\right) $. With this metric, $ \widetilde{\cM } $ is an Riemannian embedded submanifold of $ \cN $.
\subsection{Proof in the PAPC case}\label{App_proof_Theo_submanifold_PAPC}
Every column of $\bar{\bP}^H$ forms a sphere and thus $\bar{\bP}^H$ forms a standard oblique manifold composed of $M_t$ spheres.
Similarly, to show $\overline{\cM}$ is a Riemannian submanifold of the product manifold $\cN$, we need to show that for all $ \bw = \left( w_1, w_2, \cdots, w_{M_t} \right)   \in  \bbR^{M_t} $, there exists $ \bZ = \left( \bZ_1,\bZ_2,\cdots,\bZ_U \right)   \in \cN $ such that $ \mathrm{D}F\left( \bP\right)\left[ \bZ\right] = \diag{\bw} $. Since the differential operation at $ \bP $ is equivalent to the component-wise differential at each of $ \bP_{1},\bP_{2},\cdots, \bP_i$, we have 
\begin{equation}\label{key}
	\mathrm{D}\bar{F}\left( \bP\right)\left[ \bZ\right] = \bI_{M_t} \odot \sum_{i=1}^{U}\left( \bP_i \bZ_i^H + \bZ_i \bP_i^H\right).
\end{equation}
It is easy to see that if we choose $ \bZ_i = \frac{1}{2} \frac{M_t}{P} \diag{\bw} \bP_i $, we will have $\mathrm{D}\bar{F}\left( \bP\right)\left[ \bZ\right] = \diag{\bw} $. This shows that $ \bar{F} $ is full rank as well as a submersion on $ \overline{\cM } $. 

\par Because every tangent space $ T_{\bar{\bP}}\overline{\cM }  $ is a subspace of  $ T_{\bP} \cN $, the Riemannian metric $ g_{\bP}\left(\cdot\right) $ of $  \cN $ is naturally introduced to $ \overline{\cM } $ as $ g_{\bar{\bP}}\left(\cdot\right) = g_{\bP}\left(\cdot\right) $. With this metric, $ \overline{\cM } $ is a Riemannian embedded submanifold of $ \cN $.

\section{Proof of \thref{Theo_Euclidean_Gradient}} \label{App_Proof_Theo_Euclidean_Gradient}
To simplify the notations, we use $ f \left( \bP_i\right)  $ to represent $ f \left( \bP\right) $ that only considers $ \bP_i $ as variable with $ \bP_{\ell} $ for $ \ell \neq i $ fixed. 
\par For any $ \bxi_{\bP_i} \in T_{\bP_i} \bbC^{M_t \times M_i} $, the directional derivative of $  f \left( \bP_i\right)  $ along $ \bxi_{\bP_i} $ is  
\begin{equation}\label{key}
	\mathrm{D} f \left( \bP_i\right)\left[ \bxi_{\bP_i}\right]   = -w_i \mathrm{D}\cR_{i}\left[ \bxi_{\bP_i}\right] - \sum_{ \ell \neq i}^U w_{\ell} \mathrm{D}\cR_{\ell}\left[ \bxi_{\bP_i}\right].
\end{equation}
We derive $ \mathrm{D}\cR_{i}\left[ \bxi_{\bP_i}\right] $ and $  \mathrm{D}\cR_{\ell}\left[ \bxi_{\bP_i}\right] $ separately as  
\begin{align}
	\mathrm{D} \cR_{i} \left[ \bxi_{\bP_i}\right]  =& \mathrm{tr}\left(\bC_{i}  \left( \bxi_{\bP_i}^H\bH_i^H\bA_i+\bA_i^H\bH_i\bxi_{\bP_i}\right)\right),
\end{align} 
\begin{align}
	\mathrm{D}\cR_{\ell}\left[ \bxi_{\bP_i}\right] 
	=& - \mathrm{tr}\bigg(\bH_{\ell}^H\bB_{\ell}\bH_{\ell} \left( \bxi_{\bP_i}\bP_i^H + \bP_i \bxi_{\bP_i}^H \right)  \bigg).
\end{align}
Thus, we have
{\setlength\abovedisplayskip{0.1cm}
	\setlength\belowdisplayskip{0.1cm}
	\begin{equation}
		\begin{aligned}
			\mathrm{D} &f\left( \bP_i\right)\left[ \bxi_{\bP_i}\right] 
			= \\
			&g_{\bP_i}\Big(-2  w_i\bH_i^H \bA_i\bC_i + 2\sum_{ \ell \neq i}w_{\ell}\bH_{\ell}^H\bB_{\ell}\bH_{\ell} \bP_i, \bxi_{\bP_i} \Big) 
		\end{aligned}
	\end{equation}}
and $ \mathrm{grad} f \left( \bP_i\right) $ is 
\begin{equation}\label{key}
		\mathrm{grad} f \left( \bP_i\right) = -2\Big( w_i\bH_i^H\bA_i \bC_i - \sum_{\ell\neq i}w_{\ell}\bH_{\ell}^H\bB_{\ell}\bH_{\ell}\bP_i\Big) .
\end{equation}
\section{Proof of \lmref{projection_submanifold_TPC}}\label{App_proof_prop_projection_TPC}
From the decomposition \eqref{Decomposition_TPC} and the definition of the normal space \eqref{Normal_Space_TPC}, it is easy to have $ \Pi_{T_{\hat{\bP}}\widehat{\cM}}^{T_{\bP}\cN}\left( \bxi_{\bP}\right) = \bxi_{\bP} - \hat{\lambda}_1 \bP $. Meanwhile, from the definition of the tangent space \eqref{Tangent_Space_TPC}, $\bxi_{\bP} - \hat{\lambda}_1\bP  $ should satisfy the equation 
\begin{equation}\label{key}
	\mathrm{tr}\left(\bP^H \left( \bxi_{\bP} - \hat{\lambda}_1  \bP\right)  + \left( \bxi_{\bP} - \hat{\lambda}_1  \bP \right)^H  \bP\right) = 0. 
\end{equation}
After some algebra, we can get $\hat{\lambda}_1 =  \frac{1}{P}   \Re\left\lbrace  \mathrm{tr}\left(\bP^H\bxi_{\bP}\right) \right\rbrace $.
\vspace{0em}
\section{Proof of \thref{Theo_Riemannian_Gradient_sub_TPC}}\label{App_Proof_Theo_TPC}
\subsection{Proof for Riemannian gradient in \thref{Theo_Riemannian_Gradient_sub_TPC}}
\par From \cite[Section 3.6.1]{Absil2009} and \lmref{projection_submanifold_TPC}, the Riemannian gradient of $ f \big( \hat{\bP} \big)  $ is 
{\setlength\abovedisplayskip{0.1cm}
	\setlength\belowdisplayskip{0.1cm}\begin{equation}\label{gradient_sub}
	\mathrm{grad} f \big( \hat{\bP} \big)  = \Pi_{T_{\hat{\bP}}\widehat{\cM}}^{T_{\bP}\cN} \left( \mathrm{grad} f\big(\hat{ \bP} \big) \right) 
	= \mathrm{grad} f\left( \bP \right) -\hat{\lambda}_1 \bP,
\end{equation}}
where $ \hat{\lambda}_1 =  \frac{1}{P}  \Re \left\lbrace \mathrm{tr}\left( \bP\big( \mathrm{grad} f \left( \bP\right) \big) ^H\right) \right\rbrace  $.
\subsection{Proof for Riemannian Hessian in \thref{Theo_Riemannian_Gradient_sub_TPC}} \label{App_Proof_Theo_Riemannian_Hessian_sub_TPC}
Combine \eqref{RiemannianConnection} and \eqref{RawComposition}, we get
\begin{equation}\label{hessian_first}
	\begin{aligned}
		\mathrm{Hess}f(\hat{\bP})[\bxi_{\hat{\bP}}]&=\nabla_{\bxi_{\hat{\bP}}}^{\widehat{\cM}}\mathrm{grad}f\\
		&=\Pi_{T_{\hat{\bP}}\widehat{\cM}}^{T_{\bP}\cN}\left( \nabla_{\bxi_{\hat{\bP}}}^{\cN} \mathrm{grad}f\right)\\
		&=\Pi_{T_{\hat{\bP}}\widehat{\cM}}^{T_{\bP}\cN}\left(\mathrm{Dgrad}f(\hat{\bP})[\bxi_{\hat{\bP}}]  \right),
	\end{aligned}
\end{equation}
which belongs to $T_{\hat{\bP}}\widehat{\cM}$. So the following relationship holds:
\begin{equation}\label{hatLambda2}
	\begin{aligned}
		\Re\left\{\mathrm{tr}\left(\bP^H\mathrm{Dgrad}f(\hat{\bP})[\bxi_{\hat{\bP}}]-\hat{\lambda}_2\bP \right)  \right\}=0 \\
		\Rightarrow\hat{\lambda}_2=\frac{1}{P}\Re\left\{\mathrm{tr}\left(\bP^H\mathrm{Dgrad}f(\hat{\bP})[\bxi_{\hat{\bP}}]\right)\right\}.
	\end{aligned}
\end{equation}
Recall that $\mathrm{Hess}f(\hat{\bP})[\bxi_{\hat{\bP}}]$ is an element in $T_{\hat{\bP}}{\widehat{\cM}}$. Thus from \eqref{product_tangent_space}, we can get 
{\setlength\abovedisplayskip{0.1cm}
	\setlength\belowdisplayskip{0.1cm}\begin{equation}
	\begin{aligned}
		\mathrm{Hess}f(\hat{\bP})[\bxi_{\hat{\bP}}]=\left(\mathrm{Hess}f(\hat{\bP}_1)[\bxi_{\hat{\bP}_1}],\cdots,\mathrm{Hess}f(\hat{\bP}_U)[\bxi_{\hat{\bP}_U}]\right),
	\end{aligned}
\end{equation}}
 where $\mathrm{Hess}f(\hat{\bP}_i)[\bxi_{\hat{\bP}_i}]$ can be denoted as
\begin{equation}\label{Hessian_TPC}
	\begin{aligned}
		\mathrm{Hess}f(\hat{\bP}_i)[\bxi_{\hat{\bP}_i}]=&\mathrm{Dgrad}f(\bP_i)[\bxi_{\bP_i}]\\
		& -\mathrm{D}\hat{\lambda}_1[\bxi_{\bP_i}]\bP_i-\hat{\lambda}_1 \bxi_{\bP_i}-\hat{\lambda}_2\bP_i.
	\end{aligned}
\end{equation}
$\mathrm{Dgrad}f(\bP_i)[\bxi_{\bP_i}]$ and $\mathrm{D}\hat{\lambda}_1[\bxi_{\bP_i}]$ remain unknown.
For notational simplicity, let us define
\begin{equation}\label{MFE}
	\begin{aligned}
		\bM_{\ell,i}=\bH_{\ell}\bxi_{\bP_i}\bP_{i}^H\bH_{\ell}^H+\bH_{\ell}\bP_{i}\bxi_{\bP_i}^H\bH_{\ell}^H,\\
		\bF_{\ell,i}=-\bR_{\ell}^{-1}\bM_{\ell,i}\bB_{\ell},\  \bE_{\ell,i}=-\bB_{\ell}\bM_{\ell,i}\bB_{\ell},
	\end{aligned}
\end{equation}
then we can get
{\setlength\abovedisplayskip{0.1cm}
	\setlength\belowdisplayskip{0.1cm}
\begin{equation}\label{DBk}
\begin{aligned}
	\mathrm{D}\bB_{\ell}\left[\bxi_{\bP_i}\right]&=\bF_{\ell,i}+\bF_{\ell,i}^H+\bE_{\ell,i}.
\end{aligned}
\end{equation}}
Then $\mathrm{Dgrad}f(\bP_i)[\bxi_{\bP_i}]$ can be calculated as follows:
{\setlength\abovedisplayskip{0.1cm}
	\setlength\belowdisplayskip{0.1cm}\begin{equation}\label{Dgradf}
	\begin{aligned}
		&\mathrm{Dgrad}f(\bP_i)[\bxi_{\bP_i}]=-2w_i\bH_i^H\bR_i^{-1}\bH_{i}\bxi_{\bP_i}\bC_i\\
		&+2w_i\bH_i^H\bA_i\bC_i\left(\bxi_{\bP_i}^H\bH_{i}^H\bA_{i}+\bA_{i}^H\bH_{i}\bxi_{\bP_i}\right)\bC_i\\
		&+2\sum_{ \ell \neq i}w_{\ell}\bH_{\ell}^H\bB_{\ell}\bH_{\ell}\bxi_{\bP_i}+2\sum_{ \ell \neq i}w_{\ell}\bH_{\ell}^H\mathrm{D}\bB_{\ell}\left[\bxi_{\bP_i}\right]\bH_{\ell}\bP_i.
	\end{aligned}
\end{equation}}
$\mathrm{D}\hat{\lambda}_1[\bxi_{\bP_i}]$ can be calculated as follows:
\begin{equation}\label{Dhatlambda1}
	\begin{aligned}
		&\mathrm{D}\hat{\lambda}_1[\bxi_{\bP_i}]=\frac{1}{P}\Re\sum_{l\neq i}\left\{\bP_{\ell}\Big(\mathrm{Dgrad}f\left(\bP_{\ell}\right)\left[\bxi_{\bP_i}\right]\Big)^H \right\}+\\
		&\frac{1}{P}\Re\left\{\mathrm{tr}\left(\bxi_{\bP_i}^H\mathrm{grad}f(\bP_i)+\bP_i^H\mathrm{Dgrad}f(\bP_i)[\bxi_{\bP_i}]\right)\right\}.
	\end{aligned}
\end{equation}
$\mathrm{Dgrad}f\left(\bP_{\ell}\right)\left[\bxi_{\bP_i}\right]$ for  $\forall \ell\neq i$ remains to be calculated:
{\setlength\abovedisplayskip{0.1cm}
	\setlength\belowdisplayskip{0.1cm}
\begin{equation}
	\begin{aligned}
		&\mathrm{Dgrad}f\left(\bP_{\ell}\right)\left[\bxi_{\bP_i}\right]=2\sum_{j \neq\ell}w_{j}\bH_{j}^H\mathrm{D}\bB_{j}\left[\bxi_{\bP_i}\right]\bH_j\bP_{\ell}\\
		&+2w_{\ell}\bH_{\ell}^H\bR_{\ell}^{-1}\bM_{\ell,i}\bA_{\ell}\bC_{\ell}-2w_{\ell}\bH_{\ell}^H\bA_{\ell}\mathrm{D}\bC_{\ell}\left[\bxi_{\bP_i}\right],
	\end{aligned}
\end{equation}}
where
{\setlength\abovedisplayskip{0.1cm}
	\setlength\belowdisplayskip{0cm}\begin{subequations}\label{DCDB}
	\begin{align}
		\mathrm{D}\bC_{\ell}\left[\bxi_{\bP_i}\right]=&
		-\bC_{\ell}\bA_{\ell}^H\bM_{\ell,i}\bA_{\ell}\bC_{\ell},  \forall  \ell\neq i, \\
		\mathrm{D}\bB_{j}\left[\bxi_{\bP_i}\right]=&
		\begin{cases}
			\bF_{j,i}+\bF_{j,i}^H+\bE_{j,i}, j\neq \ell\&j\neq i,\\
			\bR_i^{-1}\bH_{i}\big(\bxi_{\bP_i}\bC_i\bP_i^H+\bP_i\bC_i\bxi_{\bP_i}^H\big)\times\\
			\bH_{i}^H\bR_i^{-1}+\bE_{i,i}, j=i.
		\end{cases}
	\end{align}
\end{subequations}}
\vspace{0em}

\section{Proof of Riemannian Hessian in \thref{Theo_Riemannian_Gradient_sub_PUPC}}\label{App_Proof_Theo_Riemannian_Hessian_sub_PUPC}
Like \eqref{hessian_first}, we get
\begin{equation}\label{hessian_PUPC_derivation}
	\begin{aligned}
		\mathrm{Hess}f(\tilde{\bP})[\bxi_{\tilde{\bP}}]=\Pi_{T_{\tilde{\bP}}\widetilde{\cM}}^{T_{\bP}\cN}\left(\mathrm{Dgrad}f(\tilde{\bP})[\bxi_{\tilde{\bP}}]  \right),
	\end{aligned}
\end{equation}
which belongs to $T_{\tilde{\bP}}\widetilde{\cM}$, so the following relationship holds:
{\setlength\abovedisplayskip{0.1cm}
	\setlength\belowdisplayskip{0.1cm}\begin{equation}\label{barLambda}
	\begin{aligned}
		\Re\left\{\mathrm{tr}\Bigg(\tilde{\bP}_i^H\Big(\mathrm{Dgrad}f(\tilde{\bP}_i)[\bxi_{\tilde{\bP}_i}]-\tilde{\bP}_i\big[\tilde{\bLambda}_2\big]_i \Big) \Bigg) \right\}=0 \\
		\Rightarrow\big[\tilde{\bLambda}_2\big]_i=\Re\left\{\mathrm{tr}\Bigg(\bP_i^H\left(\mathrm{Dgrad}f(\tilde{\bP}_i)[\bxi_{\tilde{\bP}_i}]\right)\Bigg)\right\}\bI_{d_i}. 
	\end{aligned}
\end{equation}}
From \eqref{product_tangent_space}, we can get 
\begin{equation}
	\begin{aligned}
		\mathrm{Hess}f(\tilde{\bP})[\bxi_{\tilde{\bP}}]=&\left(\mathrm{Hess}f(\tilde{\bP}_1)[\bxi_{\tilde{\bP}_1}],\cdots,\mathrm{Hess}f(\tilde{\bP}_U)[\bxi_{\tilde{\bP}_U}]\right),
	\end{aligned}
\end{equation}
where 
\begin{equation}\label{Hessian_PUPC}
	\begin{aligned}
		&\mathrm{Hess}f(\tilde{\bP}_i)[\bxi_{\tilde{\bP}_i}]=\mathrm{Dgrad}f(\bP_i)[\bxi_{\bP_i}]  \\
		& -\bP_i\mathrm{D}\big[\tilde{\bLambda}_1\big]_i[\bxi_{\bP_i}]- \bxi_{\bP_i}\big[\tilde{\bLambda}_1\big]_i-\bP_i\big[\tilde{\bLambda}_2\big]_i.
	\end{aligned}
\end{equation}
$\mathrm{D}\left[\tilde{\bLambda}_1\right]_i[\bxi_{\bP_i}]$ remains unknown, which can be calculated as follows:
{\setlength\abovedisplayskip{0.1cm}
	\setlength\belowdisplayskip{0.1cm}\begin{equation}\label{DtildeLambda1}
	\begin{aligned}
		&\mathrm{D}\left[\tilde{\bLambda}_1\right]_i[\bxi_{\bP_i}]=\\
		&\frac{1}{P_i}\Re\left\{\mathrm{tr}\left(\bxi_{\bP_i}^H\mathrm{grad}f(\bP_i)+\bP_i^H\mathrm{Dgrad}f(\bP_i)[\bxi_{\bP_i}]\right)\right\}\bI_{d_i},
	\end{aligned}
\end{equation}}
where $\mathrm{Dgrad}f\left(\bP_i\right)\left[\bxi_{\bP_i}\right]$  has been calculated in \appref{App_Proof_Theo_TPC}.
\vspace{0em}

\section{Proof of Riemannian Hessian in \thref{Theo_Riemannian_Gradient_sub_PAPC}}\label{App_Proof_Theo_Riemannian_Hessian_sub_PAPC}
Like \eqref{hessian_first}, we get
{\setlength\abovedisplayskip{0.1cm}
	\setlength\belowdisplayskip{0.1cm}\begin{equation}\label{hessian_PAPC_derivation}
	\begin{aligned}
		\mathrm{Hess}f(\bar{\bP})[\bxi_{\bar{\bP}}]=\Pi_{T_{\bar{\bP}}\overline{\cM}}^{T_{\bP}\cN}\Big(\mathrm{Dgrad}f(\bar{\bP})[\bxi_{\bar{\bP}}]  \Big),
	\end{aligned}
\end{equation}}
which belongs to $T_{\bar{\bP}}\overline{\cM}$, so the following relationship holds:
{\setlength\abovedisplayskip{0.1cm}
	\setlength\belowdisplayskip{0.1cm}\begin{equation}\label{barLambda}
	\begin{aligned}
		\bI_{M_t}\odot\sum_{l=1}^{K}\Re\left\{\bar{\bP}_{\ell}\left(\mathrm{Dgrad}f(\bar{\bP})[\bxi_{\bar{\bP}}]-\bar{\bLambda}_2\bP_{\ell} \right)^H  \right\}=0 \\
		\Rightarrow\bar{\bLambda}_2=\frac{M_t}{P}\bI_{M_t}\odot\sum_{l=1}^{K}\Re\left\{\bP_{\ell}\Big(\mathrm{Dgrad}f(\bar{\bP})[\bxi_{\bar{\bP}}]\Big)^H\right\}.
	\end{aligned}
\end{equation}}
From \eqref{product_tangent_space}, we can get 
\begin{equation}
	\begin{aligned}
		\mathrm{Hess}f(\bar{\bP})[\bxi_{\bar{\bP}}]=\left(\mathrm{Hess}f(\bar{\bP}_1)[\bxi_{\bar{\bP}_1}],\cdots,\mathrm{Hess}f(\bar{\bP}_U)[\bxi_{\bar{\bP}_U}]\right),
	\end{aligned}
\end{equation}
where $\mathrm{Hess}f(\bar{\bP}_i)[\bxi_{\bar{\bP}_i}]$ can be denoted as:
\begin{equation}\label{Hessian_PAPC}
	\begin{aligned}
		&\mathrm{Hess}f(\bar{\bP}_i)[\bxi_{\bar{\bP}_i}]=\\
		&\mathrm{Dgrad}f(\bP_i)[\bxi_{\bP_i}] - \mathrm{D}\bar{\bLambda}_1[\bxi_{\bP_i}]\bP_i-\bar{\bLambda}_1 \bxi_{\bP_i}-\bar{\bLambda}_2\bP_i.
	\end{aligned}
\end{equation}
$\mathrm{Dgrad}f\left(\bP_i\right)\left[\bxi_{\bP_i}\right]$  has been calculated in \appref{App_Proof_Theo_TPC}. $\mathrm{D}\bar{\bLambda}_1[\bxi_{\bP_i}]$ remains unknown, which can be calculated as follows:
{\setlength\abovedisplayskip{0.1cm}
	\setlength\belowdisplayskip{0.1cm}\begin{equation}\label{DbarLambda1}
	\begin{aligned}
		&\mathrm{D}\bar{\bLambda}_1[\bxi_{\bP_i}]=\frac{M_t}{P}\bI_{M_t}\odot\Re\sum_{l\neq i}\left\{\bP_{\ell}\Big(\mathrm{Dgrad}f\left(\bP_{\ell}\right)\left[\bxi_{\bP_i}\right]\Big)^H \right\}+\\
		&\frac{M_t}{P}\bI_{M_t}\odot\Re\left\{\bxi_{\bP_i}\big(\mathrm{grad}f(\bP_i)\big)^H+\bP_i\mathrm{Dgrad}f(\bP_i)[\bxi_{\bP_i}]^H\right\}.
	\end{aligned}
\end{equation}}
$\mathrm{Dgrad}f\left(\bP_{\ell}\right)\left[\bxi_{\bP_i}\right]$ for  $\forall \ell\neq i$ has been calculated in \appref{App_Proof_Theo_TPC}.
\vspace{0em}


\section{Convergence analyses of the proposed Riemannian methods}\label{App_Proof_Convergence}
For the RSD method, $-\mathrm{grad}f\left(\bP^k\right)$ is naturally a descent direction as $g_{\bP^k}\big(-\mathrm{grad}f(\bP^k),$ $\mathrm{grad}f(\bP^k)  \big)< 0$. When the strong Wolfe curvature condition (3.31) in \cite{sato2021riemannian} holds,   the search direction of the RCG method defined in \eqref{UpdateSearchDirection} is a descent direction according to \cite[Lemma 4.1]{sato2021riemannian}. We can reduce the initial step length $\alpha^0$  to ensure the descent property of the search direction. 	Then, let $\left\{\bP^k_{\mathrm{GD}}\right\}$ be an infinite sequence of iterations generated by our proposed RSD or RCG method. $\left\{\bP^k_{\mathrm{GD}}\right\}$ has at least one accumulation point and all the accumulation points of $\left\{\bP^k_{\mathrm{GD}}\right\}$ are critical points as long as $\cL=\left\{ \bP \in\cM \mid f\left(\bP\right)\leq f\left(\bP^0\right) \right\}$ is a compact set from \cite[Theorem 4.3.1]{Absil2009} and \cite[Corollary 4.3.2]{Absil2009}, which holds naturally if the sets formed by TPC, PUPC and PAPC are all compact. From \cite[Theorem 1.4.8]{conway}, a subset of a finite dimensional Euclidean space is compact if and only if it is closed and bounded. $\widehat{\cM}$, $\widetilde{\cM}$ and $\overline{\cM}$ defined in \eqref{sets}  are all subsets of $\cN=\bbC^{M_t\times\sum_{i\in\cU}d_i}$. They are all closed and bounded according to \cite[Definition 1.1.6.]{conway} and \cite[Definition 1.2.15.]{conway}. Therefore, $\widehat{\cM}$, $\widetilde{\cM}$ and $\overline{\cM}$ are all compact manifolds, and the proposed RSD and RCG methods can converge to a critical point.

Let $\left\{\bP^k_{\mathrm{TR}}\right\}$ be an infinite sequence of iterations generated by the proposed RTR method for precoder design. From \cite[Corollary 6.33]{Boumal2020}, $\left\{\bP^k_{\mathrm{TR}}\right\}$ has at least one accumulation point as $\cL$ is compact and all the accumulation points of $\left\{\bP^k_{\mathrm{TR}}\right\}$ are critical points, indicating that the RTR method can converge to a critical point with a superlinear local convergence rate.

More analyses of the convergence of the Riemannian methods with any initialization can be found in \cite{sato2021riemannian} and \cite{boumal2019global}.

\bibliography{Reference}

\begin{thebibliography}{10}
\providecommand{\url}[1]{#1}
\csname url@samestyle\endcsname
\providecommand{\newblock}{\relax}
\providecommand{\bibinfo}[2]{#2}
\providecommand{\BIBentrySTDinterwordspacing}{\spaceskip=0pt\relax}
\providecommand{\BIBentryALTinterwordstretchfactor}{4}
\providecommand{\BIBentryALTinterwordspacing}{\spaceskip=\fontdimen2\font plus
\BIBentryALTinterwordstretchfactor\fontdimen3\font minus
  \fontdimen4\font\relax}
\providecommand{\BIBforeignlanguage}[2]{{%
\expandafter\ifx\csname l@#1\endcsname\relax
\typeout{** WARNING: IEEEtran.bst: No hyphenation pattern has been}%
\typeout{** loaded for the language `#1'. Using the pattern for}%
\typeout{** the default language instead.}%
\else
\language=\csname l@#1\endcsname
\fi
#2}}
\providecommand{\BIBdecl}{\relax}
\BIBdecl

\bibitem{bjornson_massive_2019}
E.~Björnson, L.~Sanguinetti, H.~Wymeersch, J.~Hoydis, and T.~L. Marzetta,
  ``Massive {MIMO} is a reality—{W}hat is next?: {Five} promising research
  directions for antenna arrays,'' \emph{Digit. Signal Process.}, vol.~94, pp.
  3--20, Nov. 2019.

\bibitem{carvalho_non-stationarities_2020}
E.~D. Carvalho, A.~Ali, A.~Amiri, M.~Angjelichinoski, and R.~W. Heath,
  ``Non-stationarities in extra-large-scale massive {MIMO},'' \emph{IEEE
  Wireless Commun.}, vol.~27, pp. 74--80, Aug. 2020.

\bibitem{Fundamentals}
T.~L. Marzetta and H.~Yang, \emph{Fundamentals of {M}assive {MIMO}}.\hskip 1em
  plus 0.5em minus 0.4em\relax Cambridge University Press, 2016.

\bibitem{MMIMO}
E.~G. Larsson, O.~Edfors, F.~Tufvesson, and T.~L. Marzetta, ``Massive {MIMO}
  for next generation wireless systems.'' \emph{IEEE Commun. Mag.}, vol.~52,
  no.~2, pp. 186--195, Feb. 2014.

\bibitem{LinearPrecoding1}
T.~Ketseoglou, M.~C. Valenti, and E.~Ayanoglu, ``Millimeter wave massive {MIMO}
  downlink per-group communications with hybrid linear precoding,'' \emph{IEEE
  Trans. Veh. Technol.}, vol.~70, no.~7, pp. 6841--6854, Jul. 2021.

\bibitem{LinearPrecoding2}
S.~Jing and C.~Xiao, ``Linear {MIMO} precoders with finite alphabet inputs via
  stochastic optimization and deep neural networks ({DNNs}),'' \emph{IEEE
  Trans. Signal Process.}, vol.~69, pp. 4269--4281, 2021.

\bibitem{LinearPrecoding3}
Y.~Zhang, P.~Mitran, and C.~Rosenberg, ``Joint resource allocation for linear
  precoding in downlink massive {MIMO} systems,'' \emph{IEEE Trans. Commun.},
  vol.~69, no.~5, pp. 3039--3053, May 2021.

\bibitem{MMSE}
V.~M.~T. Palhares, A.~R. Flores, and R.~C. de~Lamare, ``Robust {MMSE} precoding
  and power allocation for cell-free massive {MIMO} systems,'' \emph{IEEE
  Trans. Veh. Technol.}, vol.~70, no.~5, pp. 5115--5120, May 2021.

\bibitem{WMMSE}
S.~S. Christensen, R.~Agarwal, E.~De~Carvalho, and J.~M. Cioffi, ``Weighted
  sum-rate maximization using weighted {MMSE} for {MIMO-BC} beamforming
  design,'' \emph{IEEE Trans. Wireless Commun.}, vol.~7, no.~12, pp.
  4792--4799, Dec. 2008.

\bibitem{QoS}
T.~X. Vu, S.~Chatzinotas, and B.~Ottersten, ``Dynamic bandwidth allocation and
  precoding design for highly-loaded multiuser {MISO} in beyond {5G}
  networks,'' \emph{IEEE Trans. Wireless Commun.}, vol.~21, no.~3, pp.
  1794--1805, Mar. 2022.

\bibitem{PUPC2}
J.~Zhang, C.-K. Wen, C.~Yuen, S.~Jin, and X.~Q. Gao, ``Large system analysis of
  cognitive radio network via partially-projected regularized zero-forcing
  precoding,'' \emph{IEEE Trans. Wireless Commun.}, vol.~14, no.~9, pp.
  4934--4947, Sep. 2015.

\bibitem{SLNR}
T.~X. Tran and K.~C. Teh, ``Spectral and energy efficiency analysis for {SLNR}
  precoding in massive {MIMO} systems with imperfect {CSI},'' \emph{IEEE Trans.
  Wireless Commun.}, vol.~17, no.~6, pp. 4017--4027, Jun. 2018.

\bibitem{EE}
L.~You, X.~Qiang, K.-X. Li, C.~G. Tsinos, W.~Wang, X.~Q. Gao, and B.~Ottersten,
  ``Massive {MIMO} hybrid precoding for {LEO} satellite communications with
  twin-resolution phase shifters and nonlinear power amplifiers,'' \emph{IEEE
  Trans. Commun.}, vol.~70, no.~8, pp. 5543--5557, Aug. 2022.

\bibitem{MM}
A.-A. Lu, X.~Q. Gao, W.~Zhong, C.~Xiao, and X.~Meng, ``Robust transmission for
  massive {MIMO} downlink with imperfect {CSI},'' \emph{IEEE Trans. Commun.},
  vol.~67, no.~8, pp. 5362--5376, Aug. 2019.

\bibitem{PUPC1}
R.~Muharar, R.~Zakhour, and J.~Evans, ``Optimal power allocation and user
  loading for multiuser {MISO} channels with regularized channel inversion,''
  \emph{IEEE Trans. Commun.}, vol.~61, no.~12, pp. 5030--5041, Dec. 2013.

\bibitem{ZFBased}
J.~Choi, S.~Han, and J.~Joung, ``Low-complexity multiuser {MIMO} precoder
  design under per-antenna power constraints,'' \emph{IEEE Trans. Veh.
  Technol.}, vol.~67, no.~9, pp. 9011--9015, Sep. 2018.

\bibitem{DCAM}
X.~Hu and X.~Dai, ``Low-complexity {WSRM}ax precoder design using the dual
  coordinate ascent method,'' \emph{IEEE Wireless Commun. Lett.}, vol.~12,
  no.~2, pp. 361--365, Feb. 2023.

\bibitem{Manifold1}
J.~Chen, Y.~Yin, T.~Birdal, B.~Chen, L.~J. Guibas, and H.~Wang, ``Projective
  manifold gradient layer for deep rotation regression,'' \emph{Proc. IEEE
  Conf. Comput. Vis. Pattern Recog.}, pp. 6646--6655, 2022.

\bibitem{Manifold2}
K.~Li and R.~Chen, ``Batched data-driven evolutionary multiobjective
  optimization based on manifold interpolation,'' \emph{IEEE Trans. Evol.
  Comput.}, vol.~27, no.~1, pp. 126--140, Feb. 2023.

\bibitem{Manifold4}
C.~Feres and Z.~Ding, ``A {Riemannian} geometric approach to blind signal
  recovery for grant-free radio network access,'' \emph{IEEE Trans. Signal
  Process.}, vol.~70, pp. 1734--1748, 2022.

\bibitem{Manifold5}
J.~Dong, K.~Yang, and Y.~Shi, ``Blind demixing for low-latency communication,''
  \emph{IEEE Trans. Wireless Commun.}, vol.~18, no.~2, pp. 897--911, Feb. 2019.

\bibitem{Absil2009}
P.-A. Absil, R.~Mahony, and R.~Sepulchre, \emph{Optimization {A}lgorithms on
  {M}atrix {M}anifolds}.\hskip 1em plus 0.5em minus 0.4em\relax Princeton, NJ,
  USA: Princeton Univ. Press, 2009.

\bibitem{Boumal2020}
N.~Boumal, \emph{An {I}ntroduction to {O}ptimization on {S}mooth
  {M}anifolds}.\hskip 1em plus 0.5em minus 0.4em\relax Cambridge University
  Press, 2023.

\bibitem{Lee}
J.~M. Lee and J.~M. Lee, \emph{Smooth {M}anifolds}.\hskip 1em plus 0.5em minus
  0.4em\relax Springer, 2012.

\bibitem{ManifoldDifferential}
R.~Abraham, J.~E. Marsden, and T.~Ratiu, \emph{Manifolds, {T}ensor {A}nalysis,
  and {A}pplications}.\hskip 1em plus 0.5em minus 0.4em\relax Springer Science
  \& Business Media, 2012, vol.~75.

\bibitem{Equal}
W.~Guo, A.-A. Lu, X.~Meng, X.~Q. Gao, and N.~Ma, ``Broad coverage precoding
  design for massive {MIMO} with manifold optimization,'' \emph{IEEE Trans.
  Commun.}, vol.~67, no.~4, pp. 2792--2806, Apr. 2019.

\bibitem{Alter}
J.~Von~Neumann, \emph{Functional {O}perators: {T}he {G}eometry of {O}rthogonal
  {S}paces}.\hskip 1em plus 0.5em minus 0.4em\relax Princeton University Press,
  1951, vol.~2.

\bibitem{RCG1}
C.~Wang, A.-A. Lu, X.~Q. Gao, and Z.~Ding, ``Robust precoding for 3{D} massive
  {MIMO} configuration with matrix manifold optimization,'' \emph{IEEE Trans.
  Wireless Commun.}, vol.~21, no.~5, pp. 3423--3437, May 2022.

\bibitem{ObliqueRetraction}
P.-A. Absil and K.~Gallivan, ``Joint diagonalization on the oblique manifold
  for independent component analysis,'' in \emph{Proc. IEEE Int. Conf. Acoust.
  Speech Signal Process.}, 2006, pp. 945--948.

\bibitem{RCG2}
J.~Li, G.~Liao, Y.~Huang, Z.~Zhang, and A.~Nehorai, ``Riemannian geometric
  optimization methods for joint design of transmit sequence and receive filter
  on {MIMO} radar,'' \emph{IEEE Trans. Signal Process.}, vol.~68, pp.
  5602--5616, 2020.

\bibitem{Numerical}
J.~Nocedal and S.~J. Wright, \emph{Numerical {O}ptimization}.\hskip 1em plus
  0.5em minus 0.4em\relax Springer, 1999.

\bibitem{projected}
P.~H. Calamai and J.~J. Mor{\'e}, ``Projected gradient methods for linearly
  constrained problems,'' \emph{Mathematical programming}, vol.~39, no.~1, pp.
  93--116, 1987.

\bibitem{RTR1}
J.~Sun, Q.~Qu, and J.~Wright, ``Complete dictionary recovery over the sphere
  ii: Recovery by {Riemannian} trust-region method,'' \emph{IEEE Trans. on Inf.
  Theory}, vol.~63, no.~2, pp. 885--914, Feb. 2017.

\bibitem{Conn}
A.~R. Conn, N.~I.~M. Gould, and P.~L. Toint, \emph{Trust {R}egion
  {M}ethods}.\hskip 1em plus 0.5em minus 0.4em\relax Philadelphia: SIAM, 2000,
  vol.~1.

\bibitem{AbsilRTR}
P.-A. Absil, C.~G. Baker, and K.~A. Gallivan, ``Trust-region methods on
  {Riemannian} manifolds.'' \emph{Found. Comput. Math.}, vol.~7, no.~3, pp.
  303--330, Jul. 2007.

\bibitem{SLNREE}
P.~Patcharamaneepakorn, A.~Doufexi, and S.~Armour, ``Equivalent expressions and
  performance analysis of {SLNR} precoding schemes: a generalisation to
  multi-antenna receivers,'' \emph{IEEE Commun. Lett.}, vol.~17, no.~6, pp.
  1196--1199, 2013.

\bibitem{WMMSE_shi}
Q.~Shi, M.~Razaviyayn, Z.-Q. Luo, and C.~He, ``An iteratively weighted {MMSE}
  approach to distributed sum-utility maximization for a {MIMO} interfering
  broadcast channel,'' \emph{IEEE Trans. Signal Process.}, vol.~59, no.~9, pp.
  4331--4340, Sep. 2011.

\bibitem{QuaDRiGa}
S.~Jaeckel, L.~Raschkowski, K.~Borner, and L.~Thiele, ``Quadriga: {{A}} 3-{D}
  multi-cell channel model with time evolution for enabling virtual field
  trials,'' \emph{IEEE Trans. Antennas Propag.}, vol.~62, no.~6, pp.
  3242--3256, Jun. 2014.

\bibitem{RZF}
C.~Peel, B.~Hochwald, and A.~Swindlehurst, ``A vector-perturbation technique
  for near-capacity multiantenna multiuser communication-part i: channel
  inversion and regularization,'' \emph{IEEE Trans. on Commun.}, vol.~53,
  no.~1, pp. 195--202, Jan. 2005.

\bibitem{sato2021riemannian}
H.~Sato, \emph{Riemannian {O}ptimization and {I}ts {A}pplications}.\hskip 1em
  plus 0.5em minus 0.4em\relax Springer, 2021, vol. 670.

\bibitem{conway}
J.~B. Conway, \emph{A {C}ourse in {P}oint {S}et {T}opology}.\hskip 1em plus
  0.5em minus 0.4em\relax Springer, 2014.

\bibitem{boumal2019global}
N.~Boumal, P.-A. Absil, and C.~Cartis, ``Global rates of convergence for
  nonconvex optimization on manifolds,'' \emph{IMA J. Numer. Anal.}, vol.~39,
  no.~1, pp. 1--33, 2019.

\end{thebibliography}
\bibliographystyle{IEEEtran}

\end{document}